\documentclass[11pt,a4paper]{article}
\usepackage[round]{natbib}
\usepackage{fancyhdr}
\usepackage{mathrsfs}
\usepackage{amsmath,amsfonts,amssymb,amsthm}
\usepackage{dsfont}
\usepackage{physics}
\usepackage{xcolor}
\usepackage{algorithm}
\usepackage{indentfirst}
\usepackage{booktabs}
\usepackage{graphicx}
\usepackage{subcaption}
\usepackage{url}
\usepackage{hyperref}

\newtheorem{theorem}{Theorem}[section]

\newtheorem{lemma}[theorem]{Lemma}
\newtheorem{prop}{Proposition}[section]
\theoremstyle{definition}
\newtheorem{definition}{Definition}[section]

\usepackage{authblk}
\usepackage{parskip}
\usepackage{xurl}

\usepackage[bottom=2.5cm,top=2.5cm,left=2.5cm,right=2.5cm]{geometry}

\DeclareMathOperator{\argmax}{argmax}
\usepackage[utf8]{inputenc}  
\usepackage[T1]{fontenc}

\begin{document}

\providecommand{\keywords}[1]{\textbf{\textit{Keywords --}} #1}

\title{High-Frequency Market Manipulation Detection with a Markov-modulated Hawkes process}
\author[1,2]{Timothée Fabre}
\author[1]{Ioane Muni Toke}
\affil[1]{Université Paris-Saclay, CentraleSupélec, Laboratoire MICS, France.}
\affil[2]{SUN ZU Lab, Paris, France.}
\date{\today}

\maketitle

\begin{abstract}
\noindent This work focuses on a self-exciting point process defined by a Hawkes-like intensity and a switching mechanism based on a hidden Markov chain. Previous works in such a setting assume constant intensities between consecutive events. We extend the model to general Hawkes excitation kernels that are piecewise constant between events. We develop an expectation-maximization algorithm for the statistical inference of the Hawkes intensities parameters as well as the state transition probabilities. The numerical convergence of the estimators is extensively tested on simulated data. Using high-frequency cryptocurrency data on a top centralized exchange, we apply the model to the detection of anomalous bursts of trades. We benchmark the goodness-of-fit of the model with the Markov-modulated Poisson process and demonstrate the relevance of the model in detecting suspicious activities.

\vspace{\baselineskip}

\noindent \keywords{Hawkes process, Regime switching, Cryptocurrency, Wash trading, Ramping, Price manipulation}
\end{abstract}
\vspace{\baselineskip}

\hrule
\setcounter{tocdepth}{2}
\tableofcontents
\vspace{0.5\baselineskip}
\hrule
\vspace{0.5\baselineskip}

\section{Introduction}

Due to their ability to model events that cluster over time, Hawkes processes have been extensively studied in recent years. Their interpretability has been highlighted in many areas of application, for example in finance. Mathematically, a univariate linear Hawkes process is a counting process $(N_t)_{t\geq 0}$ with an intensity process $(\lambda_t)_{t\geq 0}$ that depends on the history of the process $N$. The intensity is usually defined as
\begin{equation}
    \lambda_t=\mu+\int_{(-\infty,t]}\phi(t-s)\mathrm{d}N_s,
\end{equation}
where $\mu$ is the baseline intensity and $\phi$ is the kernel, a non-negative function that quantifies the excitation effect of past events.

In many real-world applications, the parameters of the Hawkes process are not static and can change over time in response to exogenous factors or regimes. For instance, in electronic financial markets, the intensity of trades can vary with respect to microstructure variables such as liquidity and order flow imbalances. To capture such dynamics, we propose a new point process with a Hawkes-like intensity whose parameters vary according to a regime-switching mechanism based on a continuous time Markov chain (CTMC).

\subsection{Literature review}

The use of Hawkes processes has been explored extensively in the microstructure literature as they are natural candidates for limit order book models. Non-linear extensions can capture inhibitory behaviors and lead to a better reproduction of order flow dynamics \citep{lu2018high}. Although significant progress has been made towards the discovery of the true kernel shape \citep{bacry2016first, fosset2022non, fabre2024neural}, careful empirical studies suggest that the intensity itself depends on exogenous factors such as bid-ask spread and liquidity imbalance, \textit{e.g.} see \cite{muni2017modelling} and \cite{muni2020analyzing}. For example, extreme bid-ask spreads are often correlated with higher market uncertainty, which potentially increases market activity.

State-dependent Hawkes processes are designed to infer the relationship between intensities and market features. The first \citep{morariu2022state} postulates that the parameters of the kernel depend on the state of the market that is to be defined. Each pre-defined state is associated with a corresponding set of unknown parameters. Although the framework provides great flexibility, it is highly sensitive to the market characterization provided as input. The other approach specifies the intensity process as a multiplicative model, composed of a linear Hawkes process intensity and a state-dependent factor \citep{sfendourakis2023lob,wu2022queue}. In this case, the parameters of the Hawkes intensity do not depend on the market state, but are rather multiplied by a factor that is a function of a features vector.

We decide to take a different point of view and do not explicitly characterize the market state with observable quantities. Instead, we consider that the parameters of the Hawkes process depend on a continuous time Markov chain (CTMC) that can explore a finite number of states that are not observed. In fact, each state may encode a specific configuration of market features that can be characterized once the parameters of the model are estimated. The idea is to use an unsupervised technique to identify states of the market that are associated with different levels of endogeneity and exogeneity. The Markov-modulated Poisson process has been applied many times to real-world problems, see \cite{meier1984statistical}, \cite{kawashima1990teletraffic}, \cite{davison1993stochastic} and more recently in \cite{bergault2024price}. Nevertheless, the Markov-modulated Hawkes process (MMHP) has received much less attention due to the multiple challenges that arise in their estimation. A workaround was proposed in \citet{wang2010statistical, wang2012markov} by considering a stepwise decay instead of a continuous one, which allowed the authors to design an expectation-maximization (EM) algorithm for parameter estimation. In a more recent work, \cite{wu2022markov} propose to use Bayesian inference for the estimation of a MMHP with one Poisson regime and one Hawkes regime on small data sets. 

As an application, we are interested in detecting suspicious trading activity on illiquid cryptocurrencies through the unsupervised identification of extreme bursts of trades. In particular, we focus on a particular wash trading tactic, which involves the execution of a large number of fake orders within a short period of time. The goal is to generate a significant increase in both traded volume and trade intensity in order to mislead other agents. Wash trading activity has been studied on both traditional and cryptocurrency markets, see e.g., \cite{imisiker2018wash, nekrasova2023does, aloosh2024direct}. When the agents identification numbers are available, graph theory can be a powerful tool for the detection of suspicious transactions \citep{cao2015detecting, victor2021detecting, chen2023dark, tovsic2023beyond, jayant2023economics}. Nevertheless, traders' IDs -- even anonymous -- are not provided by centralized venues, which makes wash trading detection extremely challenging. Many indirect detection methodologies have been proposed -- see, e.g. \cite{cong2023crypto} -- and rely on anomaly identification in either first digit distribution of trade sizes or price impact of time-bucketed signed volumes. The latter technique is intuitive and convenient: set a time window $W$, \textit{e.g.} $W=10$ minutes, discretize the observation time frame in buckets of size $W$ and for each bucket, compute the mid-price move and the signed traded volume. If wash trading activity occurred in a bucket, then it is likely that a large signed traded volume together with a small price move has been observed. Many detection methods can follow from this simple framework, like fitting a linear regression model and identifying outliers corresponding to small price moves. The problem of such indirect methodologies is that it is extremely sensitive to arbitrary choices such as the chosen time window $W$, the price move model, the characterization of a suspicious bucket, and it is not suitable for live detection. Furthermore, in a recent work, \cite{falk2023crypto} demonstrate how current indirect techniques can fail substantially for fake volume identification. To the best of our knowledge, our work is the first to propose an application of Markov-modulated point processes to the detection of suspicious events in a trading environment.

\subsection{Contribution and outline of the paper}

Our main contributions are summarized below.
\begin{itemize}
    \item We introduce a novel self-exciting point process whose intensity is piecewise constant between event times, with a discretization step parameter $\delta$.
    \item Inspired by \cite{wang2010statistical}, we develop an estimation procedure for such processes with a general non-negative, decreasing kernel function. We provide a Viterbi algorithm for state estimation in both historical and online settings, and proceed to numerical experiments on simulations to validate the estimation method. We provide elements about the numerical convergence of the model to its continuous version.
    \item Finally, we characterize two types of suspicious trading behaviors and apply the model to cryptocurrency trade data. We show that it is able to identify extreme bursts of events that do not generate price changes. We delve deeper into the analysis of the characterized states and demonstrate the robustness of our model as it is able to detect suspicious trading activity even weeks after the in-sample data.
\end{itemize}

The paper is organised as follows. Section \ref{section:mmhp_framework} introduces the Markov-modulated Hawkes process (MMHP) with $\delta$-piecewise constant intensity and its continuous version. We discuss how the piecewise constant intensity enables us to solve the ordinary differential equations (ODEs) that are required for the computation of the likelihood. An EM algorithm is developed in Section \ref{section:estimation_procedure}, and a careful validation experiment is conducted on simulations. In Section \ref{section:application} we focus on the estimation of the model on cryptocurrency market data, and we provide an extensive analysis of the states that are characterized.

\section{The Markov-Modulated Hawkes Process}
\label{section:mmhp_framework}

\subsection{Mathematical framework}

Let $(\Omega, \mathscr{F}, \mathbb{P}, (\mathcal{F}_t)_{t\geq0})$ be a filtered probability space. Consider a point process $(N_t)_{t\geq0}$ with occurrence times $(t_n)_{n\in\{1,\dots,K\}}$, with convention $t_0=0$ and parameters varying according to an unobservable M-state irreducible and aperiodic continuous-time Markov chain $(S_t)_{t\geq0}$. Let $\mathcal{F}_t:=\sigma(N_s, 0\leq s\leq t)$, for $t\geq0$.

We denote the infinitesimal generator of the underlying Markov process by $Q:=(q_{ij})_{1\leq i,j \leq M}$ such that $q_i:=-q_{ii}=\sum_{j\neq i}q_{ij},\hspace{0.1cm}1\leq i \leq M$. Let $\xi_0$ be the initial distribution vector, with elements
\begin{equation}
    \xi_0^i=\mathbb{P}(S_0=i),\hspace{0.3cm}1\leq i\leq M,
\end{equation}
and we denote by $\pi$ the stationary distribution of $S$.

The observed Hawkes process is assumed to be characterized by the following conditional intensity
\begin{align}
    \lambda_t=\lambda(t|\mathcal{F}_t)&:=\sum_{i=1}^M\mathds{1}_{\{S_t=i\}}\lambda_t^i\\
    &:=\sum_{i=1}^M\mathds{1}_{\{S_t=i\}}\left(\mu^i\,+\,\int_{-\infty}^t\phi^i(t-s)\,\mathrm{d}N_s\right)\\
    &=\sum_{i=1}^M\mathds{1}_{\{S_t=i\}}\left(\mu^i\,+\,\sum_{t_k<t}\phi^i(t-t_k)\right),
\end{align}
where $\mu^i$ and $\phi^i$ are respectively the baseline and the kernel associated to state $i$.

Let $\Lambda_t$ be the diagonal matrix with coefficients $(\lambda_t^i)_{1\leq i\leq M}$ and denote by $\mathbb{I}_M$ the identity matrix of size $M$.
Let $x_n:=t_n-t_{n-1}$ be the duration between the $(n-1)$-th and $n$-th event of the process $N$.

\begin{definition}[Forward transition matrix]
    \label{def:forward_transition_matrix} Let $t_{n-1}$ and $t_n$ be two successive jump times of $N$, and let
    \begin{equation}
        H_{ij}^{(n)}(u):=\mathbb{P}\left(S_{t_{n-1}+u}=j,\, N_{t_{n-1}+u}-N_{t_{n-1}}=0\,|\,S_{t_{n-1}}=i, \mathcal{F}_{t_{n-1}}\right),\hspace{0.3cm} 1\leq i,j\leq M,\hspace{0.1cm} 0\leq u\leq x_n.
    \end{equation}
    The forward transition matrix from $t_{n-1}$ to $t_{n-1}+u$ is denoted by $H^{(n)}(u)::=\left(H_{ij}^{(n)}(u)\right)_{1\leq i, j\leq M}$.
\end{definition}

\begin{prop}[Forward ODEs]
    \label{prop:forward_ode}
    The forward transition matrix is the solution of the system of ordinary differential equations
    \begin{equation}
        \frac{\mathrm{d}H^{(n)}}{\mathrm{d}u}(u)=H^{(n)}(u)\left(Q-\Lambda_{t_{n-1}+u}\right),\hspace{0.3cm} u\in]0, x_n],\label{eq:ode_forward}
    \end{equation}
    with initial condition $H^{(n)}(0)=\mathbb{I}_M$.
\end{prop}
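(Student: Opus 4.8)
The plan is to obtain \eqref{eq:ode_forward} as a forward Kolmogorov (master) equation, derived by an infinitesimal analysis of the transition over a small increment $h>0$. The crucial preliminary observation is that once we condition on $\mathcal{F}_{t_{n-1}}$ \emph{and} on the event that no jump of $N$ occurs on $(t_{n-1},t_{n-1}+u]$, the self-exciting sum in the intensity involves only the already-realised jump times $t_1,\dots,t_{n-1}$; hence each $\lambda^i_{t_{n-1}+u}=\mu^i+\sum_{t_k<t_{n-1}+u}\phi^i(t_{n-1}+u-t_k)$ becomes a \emph{deterministic} function of $u$. Consequently, on this inter-event window the state $S$ together with the ``no jump so far'' indicator evolves as a time-inhomogeneous Markov process, governed by the CTMC generator $Q$ and a state-dependent ``killing'' rate $\lambda^i$ that records the first jump of $N$.

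First I would fix $n$, write $v:=t_{n-1}+u$, and decompose the survival-and-transition probability on $[t_{n-1},v+h]$ by conditioning on the intermediate state $k$ at time $v$. Using the Markov property of $(S_t)$ and the conditional structure above, this gives the Chapman--Kolmogorov identity
\begin{equation}
    H^{(n)}(u+h)=H^{(n)}(u)\,P(v,v+h),
\end{equation}
where $P_{kj}(v,v+h)$ is the probability of being in state $j$ at $v+h$ with no jump of $N$ on $(v,v+h]$, starting from state $k$ at $v$. Note that the new increment's transition matrix $P$ is appended on the \emph{right}, which is precisely why the generator will appear to the right of $H^{(n)}$ in the final equation.

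Next I would expand $P(v,v+h)$ to first order in $h$. Over $(v,v+h]$ the probability of a Markov transition $k\to j$ with $k\neq j$ is $q_{kj}h+o(h)$, while the probability that a jump of $N$ occurs in state $k$ is $\lambda^k_v h+o(h)$; combining ``no transition'' and ``no jump'' on the diagonal yields $P_{kk}(v,v+h)=1-(q_k+\lambda^k_v)h+o(h)$. Collecting the entries and recalling $q_{kk}=-q_k$ and that $\Lambda_v$ is diagonal, this reads in matrix form
\begin{equation}
    P(v,v+h)=\mathbb{I}_M+\left(Q-\Lambda_v\right)h+o(h).
\end{equation}
Substituting into the Chapman--Kolmogorov identity, subtracting $H^{(n)}(u)$, dividing by $h$ and letting $h\downarrow 0$ produces the forward ODE. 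The initial condition is immediate: at $u=0$ no time has elapsed, so $H^{(n)}_{ij}(0)=\mathbb{P}(S_{t_{n-1}}=j\mid S_{t_{n-1}}=i)=\mathds{1}_{\{i=j\}}$, i.e. $H^{(n)}(0)=\mathbb{I}_M$.

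The main obstacle is the rigorous justification of the first-order expansion of $P(v,v+h)$: one must check that the joint occurrence of a Markov jump and an $N$-jump in $(v,v+h]$ is $o(h)$, that replacing the intensity $\lambda^k_{v+s}$ by its value $\lambda^k_v$ for $s\in[0,h]$ costs only $o(h)$, and that all remainders are uniform enough in the state index to survive the matrix product and the limit. This rests on the boundedness and piecewise continuity of $u\mapsto\lambda^i_{t_{n-1}+u}$ on the compact interval $[0,x_n]$, which hold because each $\phi^i$ is non-negative and decreasing and only finitely many past events contribute. Care is warranted because neither $N$ nor the pair $(S,N)$ is globally Markov for a Hawkes process; the computation is valid only inter-event, where conditioning on no jump freezes the excitation into a deterministic clock. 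Once this localisation is in place, the argument is the standard forward-equation derivation for a Markov-modulated process with killing, in the spirit of \cite{wang2010statistical}.
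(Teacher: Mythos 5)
Your proposal is correct and follows essentially the same route as the paper's proof: the paper likewise decomposes $H^{(n)}_{ij}(u+\Delta u)$ over the intermediate state at $t_{n-1}+u$ using the Markov property of $S$ (your Chapman--Kolmogorov identity, written elementwise), notes that conditioning on $\mathcal{F}_{t_{n-1}}$ and no intervening jump makes $\lambda^k_{t_{n-1}+u}=\mu^k+\sum_{t_l\leq t_{n-1}}\phi^k(t_{n-1}-t_l+u)$ deterministic, and then performs exactly your first-order expansion $(\delta_{kj}+q_{kj}\Delta u+o(\Delta u))(1-\lambda^k_{t_{n-1}+u}\Delta u+o(\Delta u))$ before passing to the limit. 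Your explicit discussion of why the generator lands on the right and of the uniformity of the $o(h)$ remainders is a welcome clarification, but it does not change the substance of the argument.
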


For the sake of readability, the proof is shown in appendix, in Section \ref{section:proofs}.

Equation \eqref{eq:ode_forward} is a system of ordinary differential equations that does not have a trivial analytic solution. It would be the case if the coefficient matrix $A_u:=Q-\Lambda_{t_{n-1}+u}$ was not time-dependent (we would get the exponential of a matrix as a solution of the ODE). But here, since the coefficients are time-dependent, the commutativity property does not necessarily hold ($\forall\,(s,t)$, $A_s A_t\neq A_t A_s$).

\begin{definition}[Backward transition matrix]
    \label{def:backward_transition_matrix} Let $t_{n-1}$ and $t_n$ be two successive jump times of $N$ and let
    \begin{equation}
        G_{ij}^{(n)}(u):=\mathbb{P}\left(S_{t_n}=j,\, N_{t_n}-N_{t_n-u}=0\,|\,S_{t_n-u}=i, \mathcal{F}_{t_n-u}\right),\hspace{0.3cm} 1\leq i,j\leq M,\hspace{0.1cm} 0\leq u\leq x_n.
    \end{equation}
    The backward transition matrix from $t_n-u$ to $t_n$ is denoted by $G^{(n)}(u):=\left(G_{ij}^{(n)}(u)\right)_{1\leq i, j\leq M}$.
\end{definition}

\begin{lemma}\label{lemma:forward_backward_equation}
    The forward and backward transition matrices follow the equality
    \begin{equation}
        H^{(n)}(t_n-t_{n-1})=H^{(n)}(t-t_{n-1})G^{(n)}(t_n-t), \hspace{0.3cm}t_{n-1}\leq t\leq t_n.
    \end{equation}
\end{lemma}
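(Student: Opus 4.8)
The plan is to read both sides as transition probabilities of the joint process $(S_t, N_t)$ across a single inter-event interval and to obtain the identity from a Chapman--Kolmogorov decomposition at the intermediate time $t$. First I would reparametrize by setting $u := t - t_{n-1}$, so that $t_n - t = x_n - u$ and $t_n - (x_n - u) = t_{n-1} + u = t$. With this notation the claim is equivalent, entrywise, to
\[
    H^{(n)}_{ij}(x_n) = \sum_{k=1}^M H^{(n)}_{ik}(u)\, G^{(n)}_{kj}(x_n - u), \qquad 0 \le u \le x_n,
\]
i.e. to the matrix product $H^{(n)}(x_n) = H^{(n)}(u)\, G^{(n)}(x_n - u)$.

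Next I would use that, since no jump occurs strictly between $t_{n-1}$ and $t_n$, the survival event splits at $t$ as $\{N_{t_n} - N_{t_{n-1}} = 0\} = \{N_t - N_{t_{n-1}} = 0\} \cap \{N_{t_n} - N_t = 0\}$. Starting from the definition of $H^{(n)}_{ij}(x_n)$, I would insert a sum over the intermediate state $S_t = k$ and factorize the conditional probability as
\[
    H^{(n)}_{ij}(x_n) = \sum_{k=1}^M \mathbb{P}\big(S_{t_n} = j, N_{t_n} - N_t = 0 \mid S_t = k, N_t - N_{t_{n-1}} = 0, S_{t_{n-1}} = i, \mathcal{F}_{t_{n-1}}\big)\, \mathbb{P}\big(S_t = k, N_t - N_{t_{n-1}} = 0 \mid S_{t_{n-1}} = i, \mathcal{F}_{t_{n-1}}\big).
\]
The second factor is exactly $H^{(n)}_{ik}(u)$ by Definition \ref{def:forward_transition_matrix}, so it remains to identify the first factor with $G^{(n)}_{kj}(x_n - u)$.

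I expect this last identification to be the main obstacle, and it is where the Markov structure must be invoked carefully. The delicate point is that the Hawkes intensity $\lambda^i_s = \mu^i + \sum_{t_\ell < s}\phi^i(s - t_\ell)$ depends on the entire past jump history, not merely on the current value of $S$. The key observation is that on $(t_{n-1}, t_n)$ no new jump arrives, so the jump times governing the intensity throughout the interval are $\{t_1, \dots, t_{n-1}\}$, a collection that is $\mathcal{F}_{t_{n-1}}$-measurable and hence $\mathcal{F}_t$-measurable. On the event $\{N_t - N_{t_{n-1}} = 0\}$ one therefore has $\mathcal{F}_t = \mathcal{F}_{t_{n-1}} \vee \sigma(\{N_t - N_{t_{n-1}} = 0\})$, so that conditioning on $(\mathcal{F}_{t_{n-1}}, S_{t_{n-1}} = i, N_t - N_{t_{n-1}} = 0, S_t = k)$ carries the same information about the evolution of $(S, N)$ after $t$ as conditioning on $(\mathcal{F}_t, S_t = k)$; in particular, by the Markov property of $S$ the extra conditioning on $S_{t_{n-1}} = i$ becomes redundant once $S_t = k$ is known. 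Consequently the first factor equals
\[
    \mathbb{P}\big(S_{t_n} = j, N_{t_n} - N_t = 0 \mid S_t = k, \mathcal{F}_t\big) = G^{(n)}_{kj}(t_n - t) = G^{(n)}_{kj}(x_n - u),
\]
using $t_n - (x_n - u) = t$ and Definition \ref{def:backward_transition_matrix}. Substituting both factors back yields the claimed matrix identity, completing the proof once the Markov reduction above is made rigorous.
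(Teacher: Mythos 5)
Your proof is correct and follows essentially the same route as the paper: a Chapman--Kolmogorov decomposition at the intermediate time $t$, splitting the no-jump event as $\{N_{t_n}-N_{t_{n-1}}=0\}=\{N_t-N_{t_{n-1}}=0\}\cap\{N_{t_n}-N_t=0\}$, summing over the intermediate state $S_t=k$, and invoking the Markov property of $S$ together with the observation that on the no-jump event the conditioning $(\mathcal{F}_{t_{n-1}},\,N_t-N_{t_{n-1}}=0)$ is equivalent to $\mathcal{F}_t$. Your explicit justification of this last identification via the $\mathcal{F}_{t_{n-1}}$-measurability of the jump times driving the intensity is in fact slightly more detailed than the paper's own argument, which performs the same reduction without comment.
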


\begin{prop}[Backward ODEs]
    \label{prop:backward_ode}
    The backward transition matrix is the solution of the system of ordinary differential equations
    \begin{equation}
        \frac{\mathrm{d}G^{(n)}}{\mathrm{d}u}(u)=\left(Q-\Lambda_{t_n-u}\right)G^{(n)}(u),\hspace{0.3cm}u\in]0, x_n],\label{eq:ode_backward}
    \end{equation}
    with initial condition $G^{(n)}(0)=\mathbb{I}_M$.
\end{prop}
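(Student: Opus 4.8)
The plan is to derive the backward ODE from the forward ODE \eqref{eq:ode_forward} by differentiating the algebraic identity in Lemma \ref{lemma:forward_backward_equation}, rather than redoing a probabilistic Kolmogorov-type argument from scratch. Writing $x_n:=t_n-t_{n-1}$ and setting $u:=t_n-t$ (so that $t-t_{n-1}=x_n-u$, and $u$ ranges over $[0,x_n]$ as $t$ ranges over $[t_{n-1},t_n]$), Lemma \ref{lemma:forward_backward_equation} reads
\[
    H^{(n)}(x_n)=H^{(n)}(x_n-u)\,G^{(n)}(u),\qquad 0\leq u\leq x_n.
\]
The left-hand side does not depend on $u$, so once the right-hand side can be differentiated in $u$, the product rule relates $\mathrm{d}G^{(n)}/\mathrm{d}u$ to the already-known derivative of $H^{(n)}$.

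First I would establish that $H^{(n)}(v)$ is invertible for every $v\in[0,x_n]$. This is the one genuinely non-routine point, and it follows from the linear structure of the forward ODE: applying Liouville's (Jacobi's) formula to $\mathrm{d}H^{(n)}/\mathrm{d}v=H^{(n)}(v)\,(Q-\Lambda_{t_{n-1}+v})$ with $H^{(n)}(0)=\mathbb{I}_M$ gives
\[
    \det H^{(n)}(v)=\exp\!\left(\int_0^v \operatorname{tr}\!\left(Q-\Lambda_{t_{n-1}+s}\right)\mathrm{d}s\right)>0,
\]
so $H^{(n)}(v)$ is never singular. Invertibility lets me solve the displayed identity as $G^{(n)}(u)=\left[H^{(n)}(x_n-u)\right]^{-1}H^{(n)}(x_n)$, which in particular shows $G^{(n)}$ is differentiable (a product involving the inverse of a smooth, everywhere-invertible matrix function) and immediately yields the initial condition $G^{(n)}(0)=\left[H^{(n)}(x_n)\right]^{-1}H^{(n)}(x_n)=\mathbb{I}_M$; the same value also follows directly from the definition of $G^{(n)}$ at $u=0$.

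It then remains to differentiate. Setting $F(u):=H^{(n)}(x_n-u)$ and using the chain rule together with the forward ODE and the key time-shift identity $t_{n-1}+(x_n-u)=t_n-u$, I get $F'(u)=-F(u)\,(Q-\Lambda_{t_n-u})$. Differentiating $G^{(n)}=F^{-1}H^{(n)}(x_n)$ via $\mathrm{d}(F^{-1})/\mathrm{d}u=-F^{-1}F'F^{-1}$ then produces
\[
    \frac{\mathrm{d}G^{(n)}}{\mathrm{d}u}(u)=-F^{-1}F'\,G^{(n)}(u)=\left(Q-\Lambda_{t_n-u}\right)G^{(n)}(u),
\]
which is exactly the claimed backward ODE. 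Equivalently, one may differentiate the product form $0=\tfrac{\mathrm{d}}{\mathrm{d}u}\big[H^{(n)}(x_n-u)\,G^{(n)}(u)\big]$ and cancel the common invertible factor $H^{(n)}(x_n-u)$ on the left. In both routes the main obstacle is the invertibility step above, after which everything reduces to bookkeeping of the time argument.
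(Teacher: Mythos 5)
Your proof is correct and takes essentially the same route as the paper's: differentiate the identity $H^{(n)}(x_n)=H^{(n)}(x_n-u)\,G^{(n)}(u)$ of Lemma \ref{lemma:forward_backward_equation} in $u$, then use the forward ODE of Proposition \ref{prop:forward_ode} to identify $\bigl[H^{(n)}(x_n-u)\bigr]^{-1}\frac{\mathrm{d}H^{(n)}}{\mathrm{d}u}(x_n-u)=Q-\Lambda_{t_n-u}$. Your only genuine addition is the Liouville/Jacobi-formula argument guaranteeing $\det H^{(n)}(v)>0$, an invertibility step the paper's proof uses implicitly without justification, so your write-up is a slight tightening of the same argument rather than a different approach.
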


Finally, let
\begin{equation}
    F_{ij}^{(n)}(x):=\mathbb{P}\left(S_{t_{n-1}+X_n}=j,\, X_n\leq x\,|\,S_{t_{n-1}}=i, \mathcal{F}_{t_{n-1}}\right),\hspace{0.3cm} 1\leq i,j\leq M,\hspace{0.1cm} x\geq 0,
\end{equation}
where $X_n$ denotes the random duration after $t_{n-1}$. Let$f_{ij}^{(n)}(x)$, $x\geq 0$, be the associated density and $f^{(n)}(x):=\left(f_{ij}^{(n)}(x)\right)_{1\leq i, j\leq M}$ its matrix form.
Following \citet{meier1987fitting} and \citet{ryden1996algorithm}, we have

\begin{equation}\label{eq:transition_density_matrix}
    f^{(n)}(x)=H^{(n)}(x)\Lambda_{t_{n-1}+x},\quad x\geq 0.
\end{equation}

\subsection{A piecewise constant intensity function}

The ODEs of Proposition \ref{prop:forward_ode} and Proposition \ref{prop:backward_ode} do not have a simple analytic solution. Thus, we propose to use a piecewise constant kernel function in the spirit of the Markov-modulated Hawkes process with stepwise decay (MMHPSD) of \citet{wang2010statistical}. Our approach naturally extends the MMHPSD by allowing the intensity to decrease between event times. 

Denote by $\delta>0$ the time step that characterizes the frequency at which the intensity varies between events, and define
\begin{equation}
    \omega(t):=\sup\{u\leq t,\, N_u-N_{u-}=1\},
\end{equation}

the last arrival time of an event before $t$ and

\begin{equation}
    \ell(t):=\max\{l\in\mathbb{N},\, \omega(t)+l\,\delta\leq t\},
\end{equation}

the number of disjoint sub-intervals of length $\delta$ that fit into the interval $[\omega(t),t]$.

For convenience, we will use the notation $l_n:=\ell(t_n^-)$, and we define $\Delta_n$ such that: $t_n-t_{n-1}=l_n\delta+\Delta_n$.

\begin{definition}
    Let $N$ be a Markov-modulated Hawkes process as defined in Section \ref{section:mmhp_framework}. The conditional intensity with $\delta$-piecewise constant kernel is defined as
    \begin{align}\label{eq:piecewise_constant_intensity}
        \lambda_t&=\sum_{i=1}^M\mathds{1}_{\{S_t=i\}}\lambda_t^i\\
        &=\sum_{i=1}^M\mathds{1}_{\{S_t=i\}}\left\{\mu^i+\int_{-\infty}^t\phi^i\left(\omega(t)+\ell(t)\delta-s\right)\mathrm{d}N_s\right\}\\
        &=\sum_{i=1}^M\mathds{1}_{\{S_t=i\}}\left\{\mu^i+\sum_{t_k<t}\phi^i\left(\omega(t)+\ell(t)\delta-t_k\right)\right\}.
    \end{align}
    Recall that $\Lambda_t$ is the diagonal matrix with coefficients $(\lambda_t^i)_{1\leq i\leq M}$. The MMHP model with $\delta$-piecewise constant kernel is hereafter called the MMHP-$\delta$ model.
\end{definition}

A sample trajectory of the MMHP-$\delta$ intensity is provided in Figure \ref{fig:example_mmhp_intensity_and_kernel}. We show an example of how the intensity evolves with respect to $\delta$ and how it compares to its continuous counterpart. The behavior of the $\delta$-stepwise decay of an exponential kernel is displayed, showing how the decay adapts to a new event time in order to preserve the piecewise constant property with step $\delta$ of the overall intensity: the remaining time to next decay $\tau\leq \delta$ is reinitialized with value $\delta$ at each event time. 

\begin{figure}
    \centering
    \subfloat[A trajectory sample of MMHPs intensities with varying $\delta$ together with their continuous counterpart.]{%
        \includegraphics[width=0.5\linewidth]{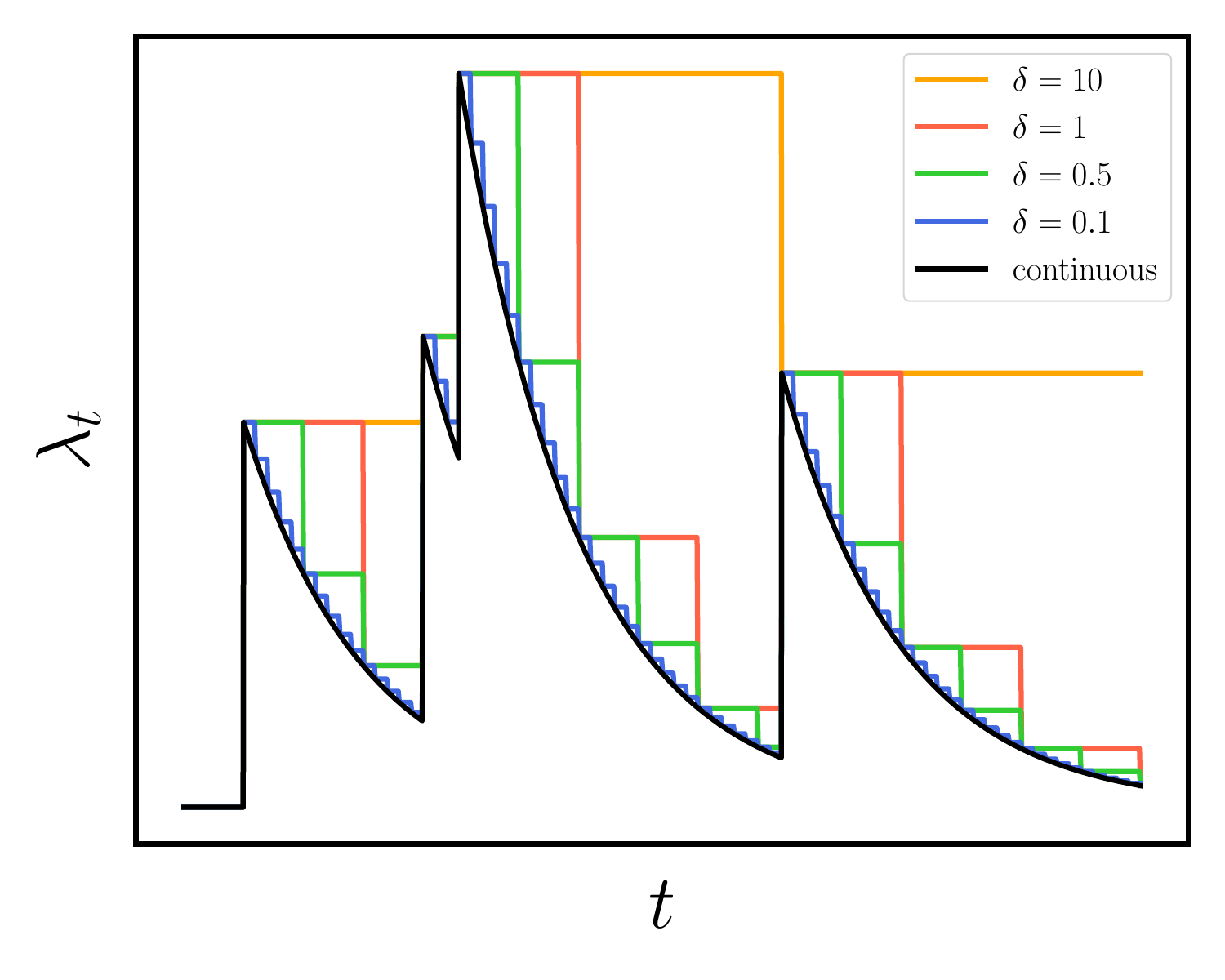}%
    }
    \subfloat[The evolution of the MMHP-$\delta$ kernel when a new event arrives.]{%
        \includegraphics[width=0.5\linewidth]{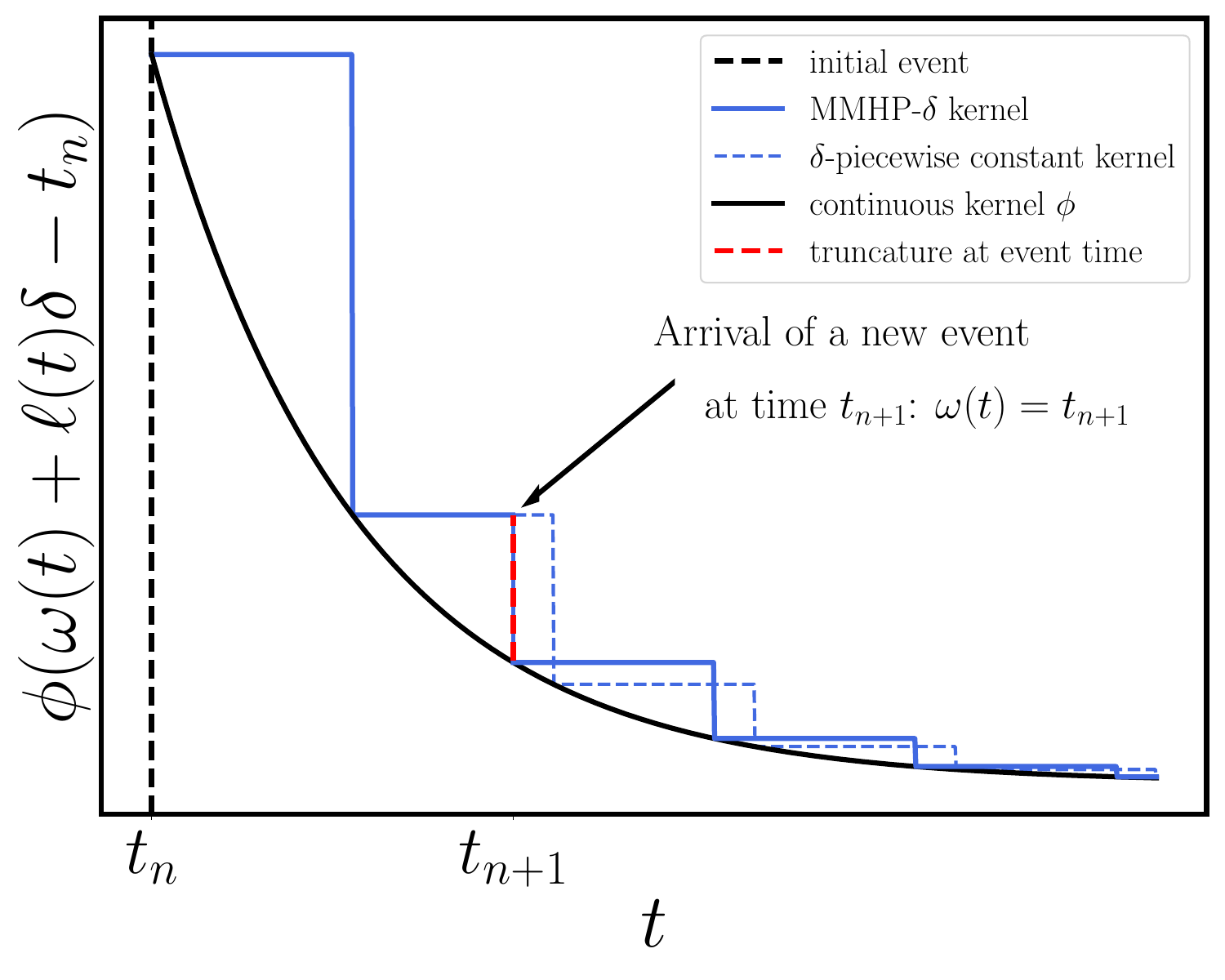}%
    }
    \caption{An illustration of the dynamics of MMHP-$\delta$ intensity.}
    \label{fig:example_mmhp_intensity_and_kernel}
\end{figure}

Under the $\delta$-piecewise constant kernel framework, the forward and backward ODEs of Equations \eqref{eq:ode_forward} and \eqref{eq:ode_backward} can be explicitly solved.

\begin{prop}\label{prop:forward_backward_formulas}
    If the intensity is given by Equation \eqref{eq:piecewise_constant_intensity}, the forward transition matrix is
    \begin{equation}\label{eq:formula_H}
        H^{(n)}(u)=\Xi^{(n)}_{\ell(t_{n-1}+u)}e^{(Q-\Lambda_{t_{n-1}+\ell(t_{n-1}+u)\delta})(u-\ell(t_{n-1}+u)\delta)},\hspace{0.3cm} 0\leq u\leq x_n,
    \end{equation}
    for which we define the following quantity
    \begin{equation}
        \Xi^{(n)}_k:=\prod_{r=1}^{k}e^{((Q-\Lambda_{t_{n-1}+(r-1)\delta})\delta)}, \hspace{0.3cm} 1\leq n\leq K,\hspace{0.1cm} 1\leq k \leq \ell_n.
    \end{equation}
    
    Similarly, the backward transition matrix is written as, for $u\in[0, x_n]$
    \begin{equation}\label{eq:formula_G}
        \begin{cases} G^{(n)}(u)=e^{(Q-\Lambda_{t_{n-1}+\ell_n\delta})u}\hspace{0.2cm}\text{if}\;u\leq\Delta_n,\\\\  G^{(n)}(u)=e^{(Q-\Lambda_{t_{n-1}+\ell(t_n-u)\delta})(u-((\ell_n-\ell(t_n-u)-1)\delta+\Delta_n))}\Psi^{(n)}_{\ell_n-\ell(t_n-u)-1}e^{(Q-\Lambda_{t_{n-1}+\ell_n\delta})\Delta_n}\hspace{0.2cm}\text{else},
    \end{cases}
    \end{equation}
    for which we define the following quantity
    \begin{equation}
        \Psi^{(n)}_k:=\prod_{r=1}^{k}e^{(Q-\Lambda_{t_{n-1}+(\ell_n-(k-r)-1)\delta})\delta}, \hspace{0.3cm} 1\leq n\leq K,\hspace{0.1cm} 1\leq k \leq \ell_n.
    \end{equation}
    Here, the notation $e^A$ stands for the matrix exponential of $A$ and we use the convention $\prod_{r=1}^{0}A_r=\mathbb{I}_M$.
\end{prop}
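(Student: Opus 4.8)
The plan is to exploit the one structural fact that makes the piecewise-constant model tractable: on the open interval $(t_{n-1},t_n)$ the last event time is $\omega(t_{n-1}+u)=t_{n-1}$, so $\ell(t_{n-1}+u)=\lfloor u/\delta\rfloor$ and the diagonal matrix $\Lambda_{t_{n-1}+u}$ is constant, equal to $\Lambda_{t_{n-1}+l\delta}$, on each grid cell $u\in[l\delta,(l+1)\delta)$. On such a cell the coefficient matrix $Q-\Lambda_{t_{n-1}+u}$ of \eqref{eq:ode_forward} (respectively $Q-\Lambda_{t_n-u}$ of \eqref{eq:ode_backward}) is constant, so each ODE degenerates to a linear constant-coefficient system whose solution is a matrix-exponential propagator. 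The proof then consists of solving on one cell, propagating the value to the next breakpoint, and chaining, all the while tracking the order of the factors since the matrix exponentials need not commute.

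For the forward equation I would first record that for constant $A$ the system $\tfrac{\mathrm d H}{\mathrm d u}=HA$ has solution $H(u)=H(u_0)\,e^{A(u-u_0)}$, so each new cell contributes a factor on the \emph{right}. Starting from $H^{(n)}(0)=\mathbb{I}_M$ and propagating across the grid points $\delta,2\delta,\dots$ gives $H^{(n)}(k\delta)=\prod_{r=1}^{k}e^{(Q-\Lambda_{t_{n-1}+(r-1)\delta})\delta}$, which is exactly $\Xi^{(n)}_k$, the $r=1$ factor (carrying $\Lambda_{t_{n-1}}$) sitting on the left. For a general $u$, writing $k=\ell(t_{n-1}+u)$ and propagating over the final fractional cell of length $u-k\delta$ yields $H^{(n)}(u)=\Xi^{(n)}_{k}\,e^{(Q-\Lambda_{t_{n-1}+k\delta})(u-k\delta)}$, which is \eqref{eq:formula_H}.

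The backward equation is treated identically, with two twists. First, the coefficient now multiplies $G$ from the left, so for constant $A$ the solution of $\tfrac{\mathrm d G}{\mathrm d u}=AG$ is $G(u)=e^{A(u-u_0)}G(u_0)$ and each new cell contributes on the \emph{left}, reversing the order of the product relative to the forward case. Second, the breakpoints are no longer multiples of $\delta$: as $u$ increases, $t_n-u$ crosses the grid point $t_{n-1}+l\delta$ exactly when $u=(l_n-l)\delta+\Delta_n$, so the cells in $u$ are $[0,\Delta_n]$ followed by the shifted cells $((m-1)\delta+\Delta_n,\,m\delta+\Delta_n]$ for $m\ge1$, on which $\ell(t_n-u)=l_n-m$. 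On the first cell $\Lambda_{t_n-u}=\Lambda_{t_{n-1}+l_n\delta}$, giving $G^{(n)}(u)=e^{(Q-\Lambda_{t_{n-1}+l_n\delta})u}$ for $u\le\Delta_n$, the first line of \eqref{eq:formula_G}. Propagating through the subsequent cells produces, at $u=m\delta+\Delta_n$, the left-ordered product $\Psi^{(n)}_m\,e^{(Q-\Lambda_{t_{n-1}+l_n\delta})\Delta_n}$; one further partial-cell exponential then yields the second line, after substituting the identities $m=l_n-\ell(t_n-u)-1$ and $u-(m\delta+\Delta_n)=u-\big((l_n-\ell(t_n-u)-1)\delta+\Delta_n\big)$.

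The routine content is the matrix-exponential propagation; the part demanding care --- and the only place an error is likely --- is the index bookkeeping in the backward case. One must line up the running cell counter $m$ with the definition of $\Psi^{(n)}_k$, whose $r$-th factor carries $\Lambda_{t_{n-1}+(l_n-(k-r)-1)\delta}$ and therefore places the smallest index on the left, confirm that the order genuinely reverses because $Q-\Lambda$ now acts by left multiplication, and handle the boundary conventions: the empty-product convention $\Xi^{(n)}_0=\Psi^{(n)}_0=\mathbb{I}_M$ recovers the first forward cell $u\in[0,\delta)$ and the first sub-cell $u\in(\Delta_n,\delta+\Delta_n]$ of the second backward branch, while the branch $u\le\Delta_n$ remains a genuinely separate base case on which $\Lambda_{t_n-u}$ still equals $\Lambda_{t_{n-1}+l_n\delta}$. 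Non-commutativity of the factors $e^{(Q-\Lambda_{t_{n-1}+l\delta})\delta}$ is exactly what forbids collapsing these products into a single exponential, so every factor's position must be justified rather than assumed.
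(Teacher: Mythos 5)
Your proposal is correct and follows essentially the same route as the paper's proof: discretize $]0,x_n]$ into cells on which the coefficient matrix $Q-\Lambda$ is constant (multiples of $\delta$ for the forward equation, $]0,\Delta_n]$ followed by $\delta$-shifted cells for the backward one), solve each constant-coefficient linear ODE by a matrix exponential, and chain the propagators to recover $\Xi^{(n)}_k$ and $\Psi^{(n)}_k$. Your explicit tracking of left versus right multiplication and the resulting reversal of factor order is the same bookkeeping the paper carries out implicitly through its iterative initial conditions, and your index identifications (in particular $m=\ell_n-\ell(t_n-u)-1$ and the breakpoints $u=(\ell_n-l)\delta+\Delta_n$) match the paper's.
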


\section{Estimation procedure}
\label{section:estimation_procedure}

\subsection{The likelihood function}

Let $K$ be the number of observations, and let $\mathcal{T}:=\{t_n\}_{1\leq n\leq K}$ be the ordered set of observed event times and $\mathcal{X}:=\{x_n=t_n-t_{n-1}\}_{1\leq n\leq K}$ be the set of observed durations, with the convention $t_0=0$. Note that we do not consider that the process $N$ jumps at time $0$. For notational simplicity, we assume that the observation time frame $[0,T]$ is such that $t_K=T$. Denote by $\mathcal{L}(\Theta,\mathcal{T})$ the likelihood of the model with the set of parameters $\Theta:=\Theta^M\cup\Theta^H$, where $\Theta^M:=\{(\xi_0^i)_{1\leq i\leq M},(q_{ij})_{1\leq i,j\leq M}\}$ is the set of Markov chain parameters and $\Theta^H$ is the set of point process parameters.

In \cite{wang2010statistical}, the authors give a simple form of the complete likelihood $\mathcal{L}$. After simplification, the resulting log-likelihood is separated into two parts: the Markov chain part which characterizes past state transitions and the Hawkes part which is expressed in terms of past event occurrences characterized by the underlying Hawkes process. We use the same idea and write the complete likelihood for the MMHP in the proposition below.

\begin{prop}\label{prop:complete_likelihood}
    The log-likelihood of an MMHP with parameters $\Theta:=\Theta^M\cup\Theta^H$ over the observations $\mathcal{T}$ can be written as
    \begin{equation}
        \mathcal{L}(\Theta,\mathcal{T})=\log \mathcal{L}^M(\Theta^M,\mathcal{T})+\log \mathcal{L}^H(\Theta^H,\mathcal{T}),
    \end{equation}
    where
    \begin{equation}
        \log \mathcal{L}^M(\Theta^M,\mathcal{T})=\sum_{i=1}^M\left\{\mathds{1}_{\{S_0=i\}}\log \xi_0^i-D_iq_i+\sum_{j=1,j\neq i}^Mw_{ij}\log q_{ij}\right\},
    \end{equation}
    and
    \begin{equation}
        \log \mathcal{L}^H(\Theta^H,\mathcal{T})=\sum_{i=1}^M\left\{\sum_{n=1}^K\mathds{1}_{\{S_{t_n}=i\}}\log \lambda_{t_n}^i(\Theta^H)\,-\,\int_0^T\mathds{1}_{\{S_t=i\}}\lambda_t^i(\Theta^H)\,\mathrm{d}t\right\},
    \end{equation}
    for which we define the time $S$ spends in state $i$ over $[0,T]$
    \begin{equation}\label{eq:D_i}
        D_i:=\int_0^T\mathds{1}_{\{S_t=i\}}\mathrm{d}t=\sum_{p=1}^{m+1} \mathds{1}_{\{s_p=i\}}(u_p-u_{p-1}),
    \end{equation}
    and the number of times $S$ jumps from state $i$ to state $j$ over $[0,T]$
    \begin{equation}\label{eq:w_ij}
        w_{ij}:=\int_0^T\underset{\Delta\to0^+}{\lim}\frac{1}{\Delta}\mathds{1}_{\{S_{t-\Delta}=i,\, S_t=j\}}\mathrm{d}t=\sum_{p=1}^m \mathds{1}_{\{s_p=i, s_{p+1}=j\}}.
    \end{equation}
\end{prop}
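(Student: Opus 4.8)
The plan is to treat the \emph{complete} data as the pair formed by the unobserved CTMC trajectory $(S_t)_{0\le t\le T}$ and the observed point process $(t_n)_{1\le n\le K}$, and to exploit the structural fact that $S$ evolves autonomously — its law depends only on $\Theta^M$ — whereas, conditionally on the \emph{entire} path of $S$, the process $N$ is an ordinary point process with predictable intensity $\lambda_t=\lambda_t^{S_t}$ depending only on $\Theta^H$. This independence is exactly what produces the additive split $\mathcal{L}=\log\mathcal{L}^M+\log\mathcal{L}^H$: the joint density factorizes into the density of the Markov path times the conditional density of $N$ given that path, and taking logarithms separates the two parameter blocks.

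First I would write down the likelihood of a CTMC trajectory. Encoding the path by its jump times $0=u_0<u_1<\cdots<u_m<u_{m+1}=T$ and the visited states $s_1,\dots,s_{m+1}$ (with $s_p$ the state held on $(u_{p-1},u_p)$), the generator-based expression for the path density is
\[
\mathcal{L}^M=\xi_0^{s_1}\Bigl(\prod_{p=1}^{m}q_{s_p s_{p+1}}\Bigr)\Bigl(\prod_{p=1}^{m+1}e^{-q_{s_p}(u_p-u_{p-1})}\Bigr),
\]
where each completed holding interval of length $d_p=u_p-u_{p-1}$ in state $s_p$ followed by a jump to $s_{p+1}$ contributes $q_{s_p}e^{-q_{s_p}d_p}\cdot(q_{s_p s_{p+1}}/q_{s_p})=q_{s_p s_{p+1}}e^{-q_{s_p}d_p}$, while the final incomplete interval contributes only the survival factor $e^{-q_{s_{m+1}}d_{m+1}}$. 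Taking logarithms and regrouping by state and by ordered pair through the sufficient statistics of Equations \eqref{eq:D_i} and \eqref{eq:w_ij} — namely $\sum_p q_{s_p}(u_p-u_{p-1})=\sum_i q_i D_i$, $\sum_p\log q_{s_p s_{p+1}}=\sum_{i}\sum_{j\ne i}w_{ij}\log q_{ij}$, and $\log\xi_0^{s_1}=\sum_i\mathds{1}_{\{S_0=i\}}\log\xi_0^i$ — reproduces $\log\mathcal{L}^M$ exactly.

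Next I would invoke the classical exponential formula for the log-likelihood of a point process observed on $[0,T]$ with predictable intensity $\lambda_t$: conditionally on the realized path of $S$,
\[
\log\mathcal{L}^H=\sum_{n=1}^{K}\log\lambda_{t_n}-\int_0^T\lambda_t\,\mathrm{d}t.
\]
Substituting the modulated intensity $\lambda_t=\sum_{i=1}^M\mathds{1}_{\{S_t=i\}}\lambda_t^i$ and using that at every instant exactly one indicator is nonzero — so that $\log\lambda_{t_n}=\sum_i\mathds{1}_{\{S_{t_n}=i\}}\log\lambda_{t_n}^i$ — distributes both the sum over jumps and the compensator integral across the states, yielding the stated form of $\log\mathcal{L}^H$. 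Adding the two pieces gives the claim.

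I expect the only genuinely delicate point to be the rigorous justification of the factorization together with the point-process likelihood formula itself: one must argue that the joint law of $(S,N)$ is absolutely continuous with respect to the product of the CTMC law and a unit-rate Poisson reference measure, and that the Markov-modulated intensity is $\mathcal{F}_t$-predictable, so that the Watanabe/martingale characterization applies and the compensator is $\int_0^T\lambda_t\,\mathrm{d}t$. Two small but necessary observations support this: the predictable version $\lambda_{t_n^-}$ appearing in the formula coincides with $\lambda_{t_n}$ because $S$ and $N$ almost surely share no common jump times, and the piecewise-deterministic construction of $\lambda_t$ from the history of $N$ guarantees predictability between events. Given the autonomy of $S$, these facts are standard; the remaining work is the bookkeeping regrouping carried out above.
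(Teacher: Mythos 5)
Your proposal is correct and follows essentially the same route as the paper's proof: both factorize the complete-data likelihood into the CTMC path density $\xi_0^{s_1}\prod_p q_{s_p s_{p+1}}e^{-q_{s_p}(u_p-u_{p-1})}$ times the conditional point-process likelihood $\prod_n\lambda_{t_n}e^{-\int\lambda_t\,\mathrm{d}t}$, take logarithms, and regroup by state through the sufficient statistics $D_i$ and $w_{ij}$. The only cosmetic difference is that you invoke the generic exponential formula for the conditional likelihood of $N$ given the path of $S$ (with the martingale-theoretic caveats about predictability and absence of common jumps made explicit), whereas the paper writes the same factor out interval-by-interval between the transition times $u_p$ before regrouping.
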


The proof is based on \cite{wang2010statistical} and is given in Appendix, in Section \ref{section:proofs}.

In the rest of the paper, we sometimes write $\lambda_t(\Theta)$ instead of $\lambda_t(\Theta^H)$ or simply $\lambda_t$ for the sake of readability when the context allows us to do so.

\subsection{An Expectation-Maximization algorithm}

In this section, we design an Expectation-Maximization (EM) algorithm that can be used to estimate the parameters. To this extent, we first develop the computations that are needed for the E-step. Finally, we describe the M-step and wrap up the estimation procedure by providing a detailed algorithm.

\subsubsection{Expectation step}

\begin{definition}
    Let $t\geq0$, and $n$ such that $N_t=n$. The forward likelihood of being in state $i$ at time $t$ is
    \begin{equation}
        \alpha_t^i:=p\left(T_1=t_1,\,\dots,\,T_n=t_n,\, t_n\leq t < t_{n+1},\, S_t=i\right),\hspace{0.3cm}1\leq i\leq M,
    \end{equation}
    where $p$ is a convenient generic notation for the density.
    
    The forward likelihood vector at time $t$ is $\boldsymbol{\alpha}_t:=(\alpha_t^i)_{1\leq i\leq M}$.

    The backward likelihood conditionally on being in state $i$ at time $t$ is
    \begin{equation}
        \beta_t^i:=p\left(T_{n+1}=t_{n+1},\,\dots,\,T_K=t_K,\, t_n< t\leq t_{n+1}\,\big|\, S_t=i\right),\hspace{0.3cm}1\leq i\leq M.
    \end{equation}
    The backward likelihood vector at time $t$ is $\boldsymbol{\beta}_t:=(\beta^i)_{1\leq i\leq M}$.
\end{definition}

\begin{prop}
    Let $t\geq0$, and $n$ such that $N_t=n$. The forward likelihood of being in state $i$ at time $t$ and the backward likelihood conditionally on being in state $i$ at time $t$ satisfy
    \begin{equation}
        \alpha_t^i=\xi_0\,\prod_{k=1}^nf^{(k)}(x_k)\,H^{(n+1)}\left(t-t_n\right)\,\mathds{1}_i, \hspace{0.3cm} 1\leq i\leq M,
    \end{equation}
    \begin{equation}
        \beta_t^i=\mathds{1}_i'\,G^{(n+1)}\left(t_{n+1}-t\right)\,\Lambda_{t_{n+1}^-}\,\prod_{k=n+2}^Kf^{(k)}(x_k)\,\mathds{1}, \hspace{0.3cm} 1\leq i\leq M,
    \end{equation}
    where $\mathds{1}_i$ is the $\mathbb{R}^M$ vector the elements of which are zeros except the $i$th entry which is 1, and $\mathds{1}$ is the $\mathbb{R}^M$ vector of ones.
\end{prop}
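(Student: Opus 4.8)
The plan is to derive both identities by factorizing the relevant joint density over the successive inter-event intervals and then recognizing the resulting sum over hidden states as a matrix product. The single probabilistic input is a conditional-independence property of the pair $(N_t,S_t)_{t\geq0}$: given the observed history $\mathcal{F}_{t_k}$ together with the regime $S_{t_k}$ at a jump time $t_k$, the evolution after $t_k$ is independent of the hidden states $\{S_s:s<t_k\}$. This holds because $S$ is itself a Markov chain and because the conditional intensity $\lambda^i_t=\mu^i+\sum_{t_j<t}\phi^i(\cdots)$ depends on the past only through the current regime $i$ and the observed event times (which are $\mathcal{F}_t$-measurable), never through the past regimes. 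It is precisely this property that lets the joint density split into one factor per interval, each factor being exactly one of the already-defined transition matrices.

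For the forward likelihood I would start from the definition of $\alpha_t^i$ and insert the hidden states $s_0,s_1,\dots,s_n$ at the times $0,t_1,\dots,t_n$ by the law of total probability. Iterating the conditional independence above, the joint density factorizes as $\xi_0^{s_0}\prod_{k=1}^n p(T_k=t_k,S_{t_k}=s_k\mid S_{t_{k-1}}=s_{k-1},\mathcal{F}_{t_{k-1}})\cdot p(N_t-N_{t_n}=0,S_t=i\mid S_{t_n}=s_n,\mathcal{F}_{t_n})$. By the definition of the transition density matrix in Equation \eqref{eq:transition_density_matrix} the $k$-th factor in the product is $f^{(k)}_{s_{k-1},s_k}(x_k)$, and by Definition \ref{def:forward_transition_matrix} the final factor is $H^{(n+1)}_{s_n,i}(t-t_n)$. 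Summing over $s_0,\dots,s_n$ is exactly the $i$-th entry of $\xi_0\prod_{k=1}^n f^{(k)}(x_k)H^{(n+1)}(t-t_n)$, which yields the claim once the $i$-th coordinate is selected with $\mathds{1}_i$.

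For the backward likelihood the argument is symmetric but now conditions on $S_t=i$ and still runs forward in time. I would decompose $\beta_t^i$ over the future hidden states $s_{n+1},\dots,s_K$. The first factor, from $t$ to $t_{n+1}$, describes survival on $(t,t_{n+1})$ followed by an event at $t_{n+1}$: setting $u=t_{n+1}-t$ in Definition \ref{def:backward_transition_matrix} (so that the left endpoint $t_{n+1}-u$ equals $t$) identifies the survival part with $G^{(n+1)}_{i,s_{n+1}}(t_{n+1}-t)$, while the firing of the event contributes the intensity factor $(\Lambda_{t_{n+1}^-})_{s_{n+1},s_{n+1}}$ — here I use that $S$ does not jump when $N$ does, so $S_{t_{n+1}^-}=S_{t_{n+1}}$. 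The remaining factors for $k=n+2,\dots,K$ are ordinary forward transition densities $f^{(k)}_{s_{k-1},s_k}(x_k)$. Reading the sum over states as a matrix product and summing the terminal state against $\mathds{1}$ gives $\mathds{1}_i'\,G^{(n+1)}(t_{n+1}-t)\,\Lambda_{t_{n+1}^-}\prod_{k=n+2}^K f^{(k)}(x_k)\,\mathds{1}$.

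The step deserving the most care, and the main obstacle, is the clean separation at each jump time of the \emph{survive-then-fire} contribution into a survival probability times an intensity. This rests on the relation $f^{(n)}(x)=H^{(n)}(x)\Lambda_{t_{n-1}+x}$ of Equation \eqref{eq:transition_density_matrix}, together with the convention that the counting increments in Definitions \ref{def:forward_transition_matrix} and \ref{def:backward_transition_matrix} exclude the terminal jump, so that the survival is genuinely over the half-open interval up to $t_{n+1}^-$; Lemma \ref{lemma:forward_backward_equation} serves as a consistency check that $G^{(n+1)}(t_{n+1}-t)$ indeed records survival from $t$ to $t_{n+1}^-$. I would also dispatch the degenerate cases — for $\alpha$ the empty product when $n=0$ reduces to $\mathbb{I}_M$ and gives $\alpha_t^i=\xi_0 H^{(1)}(t)\mathds{1}_i$, and for $\beta$ the empty product when $n=K-1$ likewise reduces to $\mathbb{I}_M$ — and confirm that conditioning on $\mathcal{F}_{t_k}$ in each transition matrix is consistent with conditioning only on the observed past jump times, which is exactly the content of the conditional-independence property invoked at the outset.
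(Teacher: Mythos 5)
Your proof is correct, and since the paper states this proposition without an explicit proof (it is the classical Rydén/Meier--Hellstern forward--backward factorization carried over to the MMHP), your interval-by-interval decomposition using the Markov property of the pair $(S_t,\mathcal{F}_t)$ --- with each factor identified as $f^{(k)}(x_k)$, $H^{(n+1)}(t-t_n)$, or $G^{(n+1)}(t_{n+1}-t)\,\Lambda_{t_{n+1}^-}$ via Equation \eqref{eq:transition_density_matrix} --- is exactly the argument the statement implicitly relies on. You also correctly resolve the two points the paper leaves to convention, namely that survival in Definitions \ref{def:forward_transition_matrix} and \ref{def:backward_transition_matrix} runs only to $t_{n+1}^-$ (consistent with the use of $N_{t_n^-}$ in the proof of Lemma \ref{lemma:forward_backward_equation}) so the terminal jump contributes the separate intensity factor, and that $S_{t_{n+1}^-}=S_{t_{n+1}}$ almost surely.
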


The first step of the EM algorithm consists in taking the expectation of the log-likelihood conditionally on the entire history $\mathcal{F}_T$, over the state variable $S$ with the current parameter estimate $\widehat{\Theta}$. By doing so, we can infer the expected path of the unobserved process $S$ using the information we have about the observed process $N$.

In the rest of the paper, we will use the following definitions of filtered and smoothed probabilities.
\begin{definition}
    The filtered probability of being in state $i$ at time $t$ is
    \begin{equation}
        \xi_{t|t}^i:=\mathbb{P}(S_t=i|\mathcal{F}_t),\hspace{0.3cm}1\leq i\leq M.
    \end{equation}
    The smoothed probability of being in state $i$ at time $t$ is
    \begin{equation}
        \xi_{t|T}^i:=\mathbb{P}(S_t=i|\mathcal{F}_T),\hspace{0.3cm}1\leq i\leq M.
    \end{equation}
\end{definition}

We will also use the same notations as in \cite{roberts2006ryde},
\begin{align}
    c_n&:=L(n-1)f^{(n)}(x_n)\mathds{1},\\
    L(n)&:=L(0)\prod_{k=1}^n\frac{f^{(k)}(x_k)}{c_k},\\
    R(n)&:=\prod_{k=n}^K\frac{f^{(k)}(x_k)}{c_k}R(K+1),\hspace{0.3cm}1\leq n\leq K,
\end{align}
with conditions $L(0)=\xi_0$, $R(K+1)=\mathds{1}$.

Note that the likelihood can be expressed as
\begin{equation}
    \mathcal{L}(\Theta,\mathcal{T})=\prod_{n=1}^Kc_n=\boldsymbol{\alpha}_t'\boldsymbol{\beta}_t,\hspace{0.3cm}0\leq t\leq T,
\end{equation}
and we have the following expressions of the filtered and smoothed probabilities
\begin{align}
    \label{eq:filtered_prob_vs_forward_backward}
    \xi_{t|t}^i&=\frac{\alpha_t^i}{\boldsymbol{\alpha}_t'\mathds{1}},\\
    \label{eq:smoothed_prob_vs_forward_backward}
    \xi_{t|T}^i&=\frac{\alpha_t^i\beta_t^i}{\mathcal{L}(\Theta,\mathcal{T})},\hspace{0.3cm}1\leq i\leq M,\hspace{0.1cm}0\leq t\leq T.
\end{align}

We first compute the expected log-likelihood conditionally on the history of events and an estimation $\widehat{\Theta}$ of the parameters. We will show that it is natural to define the following quantities

\begin{equation}
    P^{(n)}:=\Lambda_{t_n^-}R(n+1)L(n-1),
\end{equation}

\begin{equation}
    \Omega^{(n)}_k:=\Psi^{(n)}_{\ell_n-k-1}e^{(Q-\Lambda_{t_{n-1}+\ell_n\delta})\Delta_n}P^{(n)}\Xi^{(n)}_k,
\end{equation}

\begin{equation}
    C^{(n)}_k\,:=\,\begin{bmatrix}
        Q-\Lambda_{t_{n-1}+k\delta} & \Omega^{(n)}_k \\\\
        \mathbb{O}_M & Q-\Lambda_{t_{n-1}+k\delta}
    \end{bmatrix}_{2M\times 2M},
\end{equation}

\begin{equation}
    D^{(n)}\,:=\,\begin{bmatrix}
        Q-\Lambda_{t_{n-1}+\ell_n\delta} & P^{(n)}\,\Xi^{(n)}_{\ell_n} \\\\
        \mathbb{O}_M & Q-\Lambda_{t_{n-1}+\ell_n\delta}
    \end{bmatrix}_{2M\times 2M},
\end{equation}

where we denoted by $\mathbb{O}_M$ the $M\times M$ matrix filled with zeros. Moreover, we will denote by $[A]^{{M\boxtimes M}}$ the upper-right block matrix of size $M$ of $A$, $A_{ij}$ the element of A at row $i$ and column $j$, and $\odot$ the Hadamard product, \textit{i.e.} the element-wise multiplication.

An analytic expression of the expected log-likelihood needed during the E-step is provided by the following proposition.

\begin{prop}\label{prop:e_step_likelihood}
    Given an estimated set of parameters $\widehat{\Theta}$, the expected log-likelihood of the E-step for the MMHP can be written as
    \begin{equation}
        \mathbb{E}\left(\log \mathcal{L}(\Theta,\mathcal{T})|\mathcal{F}_T,\widehat{\Theta}\right)=\mathbb{E}\left(\log \mathcal{L}^M(\Theta,\mathcal{T})\Big|\mathcal{F}_T,\widehat{\Theta}\right)+\mathbb{E}\left(\log \mathcal{L}^H(\Theta,\mathcal{T})\Big|\mathcal{F}_T,\widehat{\Theta}\right),
    \end{equation}
    with
    \begin{equation}\label{eq:e_step_likelihood_markov}
        \mathbb{E}\left(\log \mathcal{L}^M(\Theta,\mathcal{T})\Big|\mathcal{F}_T,\widehat{\Theta}\right)=\sum_{i=1}^M\left\{\xi_{0|T}^i(\widehat{\Theta})\log \xi_0^i-\mathbb{E}(D_i|\mathcal{F}_T,\widehat{\Theta})q_i+\sum_{j=1,j\neq i}^M\mathbb{E}(w_{ij}|\mathcal{F}_T,\widehat{\Theta})\log q_{ij}\right\},
    \end{equation}
    and
    \begin{equation}\label{eq:e_step_likelihood_hawkes}
        \mathbb{E}\left(\log \mathcal{L}^H(\Theta,\mathcal{T})\Big|\mathcal{F}_T,\widehat{\Theta}\right)=\sum_{i=1}^M\left\{\sum_{n=1}^K\xi_{t_n|T}^i(\widehat{\Theta})\log \lambda_{t_n}^i(\Theta)-\int_0^T\xi_{t|T}^i(\widehat{\Theta})\lambda_t^i(\Theta)\mathrm{d}t\right\}.
    \end{equation}

    Furthermore, in the case of the MMHP-$\delta$, we have the analytic expressions

    \begin{equation}
        \mathbb{E}\left(D_i|\mathcal{F}_T,\widehat{\Theta}\right)=\sum_{n=1}^K\frac{1}{c_n}\left[\sum_{k=0}^{\ell_n-1}\left[e^{C^{(n)}_k\delta}\right]^{M\boxtimes M}+\left[e^{D^{(n)}\Delta_n}\right]^{M\boxtimes M}\right]_{ii},
    \end{equation}

    \begin{equation}
        \mathbb{E}\left(w_{ij}|\mathcal{F}_T,\widehat{\Theta}\right)=\widehat{q}_{ij}\sum_{n=1}^K\frac{1}{c_n}\left[\sum_{k=0}^{\ell_n-1}\left[e^{C^{(n)}_k\delta}\right]^{M\boxtimes M}+\left[e^{D^{(n)}\Delta_n}\right]^{M\boxtimes M}\right]_{ji},
    \end{equation}

    \begin{equation}\label{eq:e_step_integral_smoothed_prob_intensity}
        \int_0^T\xi_{t|T}^i(\widehat{\Theta})\lambda_t^i\mathrm{d}t=\sum_{n=1}^K\frac{1}{c_n}\left[\sum_{k=0}^{\ell_n-1}\Lambda_{t_{n-1}+k\delta}\,\odot\,\left[e^{C^{(n)}_k\delta}\right]^{M\boxtimes M}+\Lambda_{t_{n-1}+\ell_n\delta}\,\odot\,\left[e^{D^{(n)}\Delta_n}\right]^{M\boxtimes M}\right]_{ii},
    \end{equation}

    and the smoothed probabilities $\xi_{t_n|T}^i$ are given by Equation \eqref{eq:smoothed_prob_vs_forward_backward}.
\end{prop}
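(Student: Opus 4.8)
The plan is to treat the statement in two layers: first the generic E-step decomposition --- the identities \eqref{eq:e_step_likelihood_markov} and \eqref{eq:e_step_likelihood_hawkes}, valid for any MMHP --- and then the closed-form matrix-exponential expressions specific to the MMHP-$\delta$.

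For the first layer I would start from the complete-data log-likelihood of Proposition \ref{prop:complete_likelihood} and apply $\mathbb{E}(\cdot\,|\,\mathcal{F}_T,\widehat{\Theta})$. Since $\log\xi_0^i$, $q_i$, $\log q_{ij}$ and the conditional intensities $\lambda_t^i(\Theta)$ depend only on the candidate parameters $\Theta$ and on the observed event times, not on the hidden path $S$, linearity of expectation leaves only the indicators and the functionals $D_i,w_{ij}$ to be averaged. I would then use $\mathbb{E}(\mathds{1}_{\{S_0=i\}}\,|\,\mathcal{F}_T,\widehat\Theta)=\xi_{0|T}^i$, $\mathbb{E}(\mathds{1}_{\{S_{t_n}=i\}}\,|\,\mathcal{F}_T,\widehat\Theta)=\xi_{t_n|T}^i$, and Fubini to push the conditional expectation through the compensator integral, turning $\int_0^T\mathds{1}_{\{S_t=i\}}\lambda_t^i\,\mathrm{d}t$ into $\int_0^T\xi_{t|T}^i\lambda_t^i\,\mathrm{d}t$. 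This yields \eqref{eq:e_step_likelihood_markov} and \eqref{eq:e_step_likelihood_hawkes} directly.

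The technical core is the second layer. Writing $\xi_{t|T}^i=\alpha_t^i\beta_t^i/\mathcal{L}$ and splitting $[0,T]$ over the inter-event intervals $[t_{n-1},t_n]$ and, inside each, over the $\delta$-slices $[t_{n-1}+k\delta,t_{n-1}+(k+1)\delta]$ for $0\leq k\leq\ell_n-1$ plus a final slice of length $\Delta_n$, I would substitute the explicit transition matrices of Proposition \ref{prop:forward_backward_formulas}. On the $k$-th slice the coefficient matrix $A:=Q-\Lambda_{t_{n-1}+k\delta}$ is constant, and for $t=t_{n-1}+k\delta+s$ formulas \eqref{eq:formula_H}--\eqref{eq:formula_G} give $H^{(n)}(k\delta+s)=\Xi^{(n)}_k e^{As}$ and $G^{(n)}(t_n-t)=e^{A(\delta-s)}\Psi^{(n)}_{\ell_n-k-1}e^{(Q-\Lambda_{t_{n-1}+\ell_n\delta})\Delta_n}$. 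Collecting the fixed left and right factors into the row vector $a:=L(n-1)\Xi^{(n)}_k$ and the column vector $b:=\Psi^{(n)}_{\ell_n-k-1}e^{(Q-\Lambda_{t_{n-1}+\ell_n\delta})\Delta_n}\Lambda_{t_n^-}R(n+1)$, the product $\alpha_t^i\beta_t^i$ equals, up to the scalar normalisations, $[e^{A(\delta-s)}(ba)\,e^{As}]_{ii}$, and the rank-one outer product collapses exactly to $ba=\Omega^{(n)}_k$ once one recognises $P^{(n)}=\Lambda_{t_n^-}R(n+1)L(n-1)$. The $s$-integral over a slice is then the Van Loan integral $\int_0^\delta e^{A(\delta-s)}\Omega^{(n)}_k e^{As}\,\mathrm{d}s$, which is the upper-right block $[e^{C^{(n)}_k\delta}]^{M\boxtimes M}$; the product of the omitted $c_j$ factors telescopes against $\mathcal{L}=\prod_j c_j$ to leave the prefactor $1/c_n$. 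Summing the slices and adding the $\Delta_n$-slice (handled identically via the first branch of \eqref{eq:formula_G} and the block $D^{(n)}$) gives $\mathbb{E}(D_i\,|\,\mathcal{F}_T,\widehat\Theta)$. The weighted integral \eqref{eq:e_step_integral_smoothed_prob_intensity} is the same computation with the constant factor $\lambda_t^i=[\Lambda_{t_{n-1}+k\delta}]_{ii}$ absorbed into the diagonal as the Hadamard product $\odot$. For $w_{ij}$ I would first establish $\mathbb{E}(w_{ij}\,|\,\mathcal{F}_T,\widehat\Theta)=\widehat q_{ij}\int_0^T\alpha_t^i\beta_t^j/\mathcal{L}\,\mathrm{d}t$, from $\mathbb{P}(S_{t-\Delta}=i,S_t=j\,|\,\mathcal{F}_T)\sim\widehat q_{ij}\Delta\,\alpha_t^i\beta_t^j/\mathcal{L}$; the same slice computation, now reading off the $(j,i)$ rather than the $(i,i)$ entry of the block, produces the off-diagonal index $[\cdot]_{ji}$.

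The main obstacle is the bookkeeping inside the slice computation: checking that the $H$- and $G$-factors glue correctly via Lemma \ref{lemma:forward_backward_equation} on each $\delta$-slice, tracking the index shifts in $\Xi^{(n)}_k$ and $\Psi^{(n)}_{\ell_n-k-1}$, and confirming that the fixed outer factors assemble into precisely $\Omega^{(n)}_k$ rather than a permuted variant. The boundary slice of length $\Delta_n$, where the first branch of \eqref{eq:formula_G} applies and $\ell(t_n-u)$ changes, needs separate handling, as does the continuous-time justification of $\mathbb{E}(w_{ij}\,|\,\mathcal{F}_T,\widehat\Theta)=\widehat q_{ij}\int_0^T\alpha_t^i\beta_t^j/\mathcal{L}\,\mathrm{d}t$.
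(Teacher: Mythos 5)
Your proposal is correct and follows essentially the same route as the paper's proof: take the conditional expectation of the complete-data log-likelihood of Proposition \ref{prop:complete_likelihood} for the generic layer, then use $\xi_{t|T}^i=\alpha_t^i\beta_t^i/\mathcal{L}$, split over inter-event intervals and $\delta$-slices, substitute the explicit matrices of Proposition \ref{prop:forward_backward_formulas} so the fixed factors collapse into $\Omega^{(n)}_k$ via $P^{(n)}$, and evaluate each slice integral (including the terminal $\Delta_n$-slice with $D^{(n)}$) as the upper-right block of a Van Loan augmented matrix exponential, with the telescoping $c_k$ normalisation and the $(j,i)$ entry for $w_{ij}$. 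The only difference is cosmetic: where you sketch a direct justification of $\mathbb{E}(D_i|\mathcal{F}_T,\widehat\Theta)=\int_0^T\alpha_t^i\beta_t^i/\mathcal{L}\,\mathrm{d}t$ and $\mathbb{E}(w_{ij}|\mathcal{F}_T,\widehat\Theta)=\widehat q_{ij}\int_0^T\alpha_t^i\beta_t^j/\mathcal{L}\,\mathrm{d}t$, the paper simply cites \citet{ryden1994parameter}.
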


The proof is given in Appendix, in Section \ref{section:proofs}.

\subsubsection{Maximization step}

The second step of the EM algorithm consists in maximizing the expected likelihood obtained at the E-step. Thus, our goal is to solve the following optimization problem

\begin{equation}\label{eq:optimization_argmax_m_step}
    \Theta^*=\underset{\Theta}{\argmax}\hspace{0.1cm}\mathbb{E}\left(\log \mathcal{L}(\Theta,\mathcal{T})|\mathcal{F}_T,\widehat{\Theta}\right).
\end{equation}

We can maximize the two likelihood terms of Proposition \ref{prop:e_step_likelihood} separately to get the new optimal set of Markov chain parameters $\Theta^{M*}$ and the new optimal set of Hawkes parameters $\Theta^{H*}$. Thus the above problem may be divided into two independent optimization problems

\begin{equation}
    \Theta^{M*}=\underset{\Theta^M}{\argmax}\hspace{0.1cm}\mathbb{E}\left(\log \mathcal{L}^M(\Theta^M,\mathcal{T})|\mathcal{F}_T,\widehat{\Theta}\right),
\end{equation}

\begin{equation}
    \Theta^{H*}=\underset{\Theta^H}{\argmax}\hspace{0.1cm}\mathbb{E}\left(\log \mathcal{L}^H(\Theta^H,\mathcal{T})|\mathcal{F}_T,\widehat{\Theta}\right).
\end{equation}

The proposition below provides the parameter update equations for the M-step.

\begin{prop}\label{prop:m_step}
    The update equations of the Markov chain parameters $(Q, \xi_0)$ are
    \begin{equation}
        q_{ij}^*=\frac{\mathbb{E}(w_{ij}|\mathcal{F}_T,\widehat{\Theta})}{\mathbb{E}(D_i|\mathcal{F}_T,\widehat{\Theta})},\hspace{0.3cm}1\leq i,j\leq M,\hspace{0.1cm} i\neq j,
    \end{equation}
    \begin{equation}
        q_{ii}^*=-\sum_{j=1,j\neq i}^Mq_{ij}^*,\hspace{0.3cm}1\leq i\leq M.
    \end{equation}
    \begin{equation}
        {\xi_0^i}^*=\widehat{\xi_0^i}\,\mathds{1}_i\,R(1),\hspace{0.3cm}1\leq i\leq M.
    \end{equation}
    In the MMHP-$\delta$ model, the updates of the Hawkes parameters $(\mu, \alpha, \beta)$ are obtained by solving the following optimization problem
    \begin{align}
        \Theta^{H*}&=\underset{\Theta^H}{\argmax}\hspace{0.1cm}\sum_{i=1}^M\sum_{n=1}^KL(n)\mathds{1}_i\mathds{1}_i^\intercal R(n+1)\log \lambda_{t_n}^i\left(\Theta^H\right)\notag\\
        &\hspace{0.5cm}-\sum_{n=1}^K\frac{1}{c_n}\Tr\left(\sum_{k=0}^{\ell_n-1}\Lambda_{t_{n-1}+k\delta}\left(\Theta^H\right)\odot\left[e^{C^{(n)}_k\delta}\right]^{M\boxtimes M}+\Lambda_{t_{n-1}+\ell_n\delta}\left(\Theta^H\right)\odot \left[e^{D^{(n)}\Delta_n}\right]^{M\boxtimes M}\right).
    \end{align}
\end{prop}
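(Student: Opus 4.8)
The plan is to exploit the additive decomposition of the expected complete log-likelihood established in Proposition \ref{prop:e_step_likelihood} and to maximize its Markov and Hawkes parts separately, since they depend on the disjoint parameter subsets $\Theta^M$ and $\Theta^H$. The optimization itself is elementary for the Markov block; the substance of the proof lies in the matrix bookkeeping that puts the updates into the stated compact form.

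For the Markov chain parameters I would work from Equation \eqref{eq:e_step_likelihood_markov}. Writing $q_i=\sum_{j\neq i}q_{ij}$, the objective decouples across $i$ and reduces, for each fixed $i$, to maximizing $\sum_{j\neq i}\left(\mathbb{E}(w_{ij}|\mathcal{F}_T,\widehat{\Theta})\log q_{ij}-\mathbb{E}(D_i|\mathcal{F}_T,\widehat{\Theta})\,q_{ij}\right)$ over $q_{ij}>0$. Each summand has the form $a\log x-bx$ with $a,b>0$, which is strictly concave on $(0,\infty)$ and uniquely maximized at $x=a/b$; the first-order condition then gives $q_{ij}^*=\mathbb{E}(w_{ij})/\mathbb{E}(D_i)$, and the diagonal entries follow from the generator constraint $q_{ii}=-\sum_{j\neq i}q_{ij}$. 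For the initial law, maximizing $\sum_i\xi_{0|T}^i(\widehat{\Theta})\log\xi_0^i$ under the simplex constraint $\sum_i\xi_0^i=1$ is a standard Lagrange-multiplier (Gibbs-inequality) computation yielding $\xi_0^{i*}=\xi_{0|T}^i(\widehat{\Theta})$.

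It remains to rewrite $\xi_{0|T}^i$ as $\widehat{\xi_0^i}\,\mathds{1}_i R(1)$. I would start from Equation \eqref{eq:smoothed_prob_vs_forward_backward} at $t=0$ (so $n=0$): using $H^{(1)}(0)=\mathbb{I}_M$ gives $\alpha_0^i=\widehat{\xi_0^i}$, while combining Lemma \ref{lemma:forward_backward_equation} at $t=t_0$ with Equation \eqref{eq:transition_density_matrix} collapses $G^{(1)}(t_1)\Lambda_{t_1^-}$ into $f^{(1)}(x_1)$, so that $\beta_0^i=\mathds{1}_i^\intercal\prod_{k=1}^K f^{(k)}(x_k)\mathds{1}$. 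Dividing by $\mathcal{L}=\prod_k c_k$ and recognizing $R(1)=\tfrac{1}{\mathcal{L}}\prod_{k=1}^K f^{(k)}(x_k)\mathds{1}$ yields the claimed identity. The same telescoping argument proves $\xi_{t_n|T}^i=L(n)\mathds{1}_i\mathds{1}_i^\intercal R(n+1)$, the key to the Hawkes update: splitting $\prod_{k=1}^K f^{(k)}(x_k)$ at index $n$, the left factor with normalizers $\prod_{k\le n}c_k$ reproduces $\alpha_{t_n}^i/\prod_{k\le n}c_k=L(n)_i$ (using $H^{(n+1)}(0)=\mathbb{I}_M$), and the right factor, after the same $G/f$ collapse via Lemma \ref{lemma:forward_backward_equation}, reproduces $\beta_{t_n}^i/\prod_{k>n}c_k=R(n+1)_i$.

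Finally, for the Hawkes block I would substitute these identities into Equation \eqref{eq:e_step_likelihood_hawkes}: the point-mass term becomes $\sum_i\sum_n L(n)\mathds{1}_i\mathds{1}_i^\intercal R(n+1)\log\lambda_{t_n}^i(\Theta^H)$, and the compensator term is taken from Equation \eqref{eq:e_step_integral_smoothed_prob_intensity}, where summing the diagonal entries $\sum_i[\,\cdot\,]_{ii}$ over $i$ turns each bracket into a trace, since for diagonal $\Lambda$ one has $\sum_i[\Lambda\odot M]_{ii}=\Tr(\Lambda\odot M)$. This reproduces exactly the stated objective. Because the baseline and kernel enter both $\lambda_t^i$ and the matrix exponentials $e^{C^{(n)}_k\delta}$, $e^{D^{(n)}\Delta_n}$ in a generally non-concave way, no closed form exists and the update is left as a numerical $\argmax$. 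The main obstacle is thus not an optimization argument but the careful verification that the $L/R/c_n$ normalizers telescope correctly against the forward-backward factorizations so that the $\xi_0$ and $\xi_{t_n|T}$ re-expressions hold.
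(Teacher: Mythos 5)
Your proposal is correct and follows essentially the same route as the paper, which likewise maximizes the two blocks of Proposition \ref{prop:e_step_likelihood} separately via first-order conditions and then substitutes Equations \eqref{eq:e_step_likelihood_hawkes} and \eqref{eq:e_step_integral_smoothed_prob_intensity} to form the Hawkes objective; your telescoping verification that $\xi^{i}_{0|T}=\widehat{\xi_0^i}\,\mathds{1}_i^\intercal R(1)$ and $\xi^{i}_{t_n|T}=L(n)\mathds{1}_i\mathds{1}_i^\intercal R(n+1)$ simply spells out what the paper compresses into ``rearranging terms''. One small correction to your closing remark: in the stated M-step objective the matrix exponentials $e^{C^{(n)}_k\delta}$ and $e^{D^{(n)}\Delta_n}$ are E-step quantities evaluated at the fixed $\widehat{\Theta}$ and do not vary with $\Theta^H$ --- only $\Lambda(\Theta^H)$ and $\log\lambda^i_{t_n}(\Theta^H)$ do, the intensity factor decoupling from the exponentials because the $\delta$-piecewise constant intensity lets one pull $\lambda^i(\Theta^H)$ out of each sub-interval integral of $\xi^i_{t|T}(\widehat{\Theta})$ --- so the need for a numerical $\argmax$ stems from the nonlinear dependence of $\lambda^i$ on $(\alpha^i,\beta^i)$, not from $\Theta^H$ entering the exponentials.
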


\begin{proof}
    Use Equation \eqref{eq:e_step_likelihood_markov} and write the first order condition of the optimization problem. After rearranging terms, the update equations of the Markov chain parameters are obtained. Concerning the update of the Hawkes parameters, use Equations \eqref{eq:e_step_likelihood_hawkes} and \eqref{eq:e_step_integral_smoothed_prob_intensity} to get the objective function.
\end{proof}

The complete EM estimation procedure is described in Algorithm \ref{algo:em_estimation}.

\begin{algorithm}[htb]
    \caption{EM algorithm for the MMHP-$\delta$}
    \label{algo:em_estimation}
    \begin{enumerate}
        \item Set a maximum number of steps $E$ to take and a log-likelihood improvement threshold $\varepsilon$ as a stopping criterion.
        \item Initialize the estimated parameters $\widehat{\Theta}$ and the current log-likelihood as $\log\mathcal{L}=-\infty$. Compute the sequence $(\ell_n)_{1\leq n\leq K}$.
        \item For $r = 1$ to $E$:
        \begin{enumerate}
            \item \textbf{E-step}:
            \begin{enumerate}
                \item Compute the diagonal intensity matrices $\Lambda_{t_{n-1}+k\delta}$, $1\leq n \leq K, 0\leq k \leq \ell_n$, the forward and backward transition matrices $H^{(n)}(x_n)$ and $G^{(n)}(x_n)$, $1\leq n \leq K$, and deduce recursively $L(n)$, $c_n$, $R(n)$, for $1\leq n \leq K$.
                \item Compute the matrix exponentials $e^{C^{(n)}_k\delta}$, $1\leq n \leq K, 0\leq k \leq \ell_n - 1$, and $e^{D^{(n)}\Delta_n}$,  $1\leq n \leq K$. Compute their upper-right block of size $M\times M$.
                \item Compute $\mathbb{E}(w_{ij}|\mathcal{F}_T,\widehat{\Theta})$, $1\leq i,j\leq M$, and $\mathbb{E}(D_i|\mathcal{F}_T,\widehat{\Theta})$, $1\leq i \leq M$.
            \end{enumerate}
            \item Check the log-likelihood stopping criterion: if $\sum_{n=1}^K\log c_n - \log\mathcal{L} < \varepsilon$, leave the loop. Set $\log\mathcal{L} = \sum_{n=1}^K\log c_n$.
            \item \textbf{M-step}:
            \begin{enumerate}
                \item Use the update equations of Proposition \ref{prop:m_step} for the Markov chain parameters and use a non-linear optimization algorithm to maximize the expected log-likelihood with respect to the Hawkes parameters.
                \item Update the estimated parameters $\widehat{\Theta}$ with the new estimates.
            \end{enumerate}
        \end{enumerate}
        \item \textbf{Output:} An estimation $\widehat{\Theta}$ of the optimal parameters.
    \end{enumerate}
\end{algorithm}

\subsection{Estimation of past states}

Suppose that we estimated the model parameters $\Theta^*$. One important feature of hidden Markov chains is the estimation of the sequence of past states, knowing the history of events and the estimated parameters. We propose here a Viterbi algorithm to handle this task.

\subsubsection{At event times}

Denote by $S^*:=(s_n^*)_{1\leq n \leq K}$ the optimal sequence of states at event times such that:
\begin{equation}
    S^*=\underset{S=(s_n)_{1\leq n \leq K}}{\argmax}\hspace{0.1cm}p\left(S,\mathcal{F}_T|\Theta^*\right).
\end{equation}

We get the following optimization problem
\begin{equation}
    S^*=\underset{S=(s_n)_{1\leq n \leq K}}{\argmax}\hspace{0.1cm}\prod_{n=1}^Kp\left(S_{t_n}=s_n,N_{t_n^-}-N_{t_{n-1}}=0,N_{t_n}-N_{t_n^-}=1|S_{t_{n-1}}=s_{n-1},\mathcal{F}_{t_{n-1}}\right).
\end{equation}
Using previous notations and Equation \eqref{eq:transition_density_matrix},
\begin{equation}\label{eq:optimal_viterbi_states}
    S^*=\underset{ S:=(s_n)_{n\in\{0,\dots,K\}}}{\argmax}\hspace{0.1cm}\prod_{n=1}^KH_{s_{n-1}s_n}(x_n)\lambda_{s_n}(t_n).
\end{equation}
Solving this problem with brute force is not computationally feasible for a history of thousands of events. The Viterbi algorithm enables us to handle this issue. To this extent, we define
\begin{equation}
    \eta_n^i:=\underset{(s_1,s_2,\dots,s_{n-1})}{\max}\hspace{0.1cm} p\left(S_{t_1}=s_1,\dots,S_{t_{n-1}}=s_{n-1},S_{t_n}=i,\mathcal{F}_{t_n}|\Theta^*\right),\hspace{0.3cm}1\leq i \leq M,\hspace{0.1cm} 2\leq n\leq K,
\end{equation}
with $\eta_1^i=p(S_0=i,S_{t_1}=i,N_{t_1^-}=0,N_{t_1}=1)$.

The sequence $(\eta_n)_n$ is computed using the dynamic programming equation
\begin{equation}
    \eta_n^j=\underset{1\leq i \leq M}{\max}\hspace{0.1cm}\eta_{n-1}^iH_{ij}^{(n)}(x_n)\lambda^j(t_n),\hspace{0.3cm}1\leq j \leq M,\hspace{0.1cm} 2\leq n\leq K.
\end{equation}
We introduce the notation
\begin{equation}
    \psi_n^j:=\underset{1\leq i \leq M}{\argmax}\hspace{0.1cm}\eta_{n-1}^iH_{ij}^{(n)}(x_n),\hspace{0.3cm}1\leq j \leq M,\hspace{0.1cm} 2\leq n\leq K,
\end{equation}
with $\psi_1^j=0$. Thus,

\begin{equation}
    \eta_n^j=\eta_{n-1}^{\psi_n^j}H_{\psi_n^j j}^{(n)}(x_n)\lambda^j(t_n),\hspace{0.3cm}1\leq j \leq M,\hspace{0.1cm} 2\leq n\leq K.
\end{equation}

The optimal sequence of states is obtained by operating the backward recursion
\begin{equation}
    s_n^*=\psi_{n+1}^{s_{n+1}^*},\hspace{0.3cm}1\leq n\leq K-1,
\end{equation}
from the terminal condition $s_K^*=\underset{1\leq i \leq M}{\argmax}\hspace{0.1cm}\eta_{K}^i$.

It is noteworthy that in practice, float computation errors can arise for very small values of transition matrices, leading to $\eta_n^j=0$, for $n\geq n_0$, for a particular $n_0$, and $1\leq j\leq M$. To avoid these numerical instabilities, we recommend working with the logarithm of densities. The algorithm is equivalent to the computation of the recursion $\eta_n^j=\underset{1\leq i \leq M}{\max}\hspace{0.1cm}\log{\eta_{n-1}^iH_{ij}^{(n)}(x_n)}+\log{\lambda^j(t_n)}$ for all $(j, n)$, $1\leq j \leq M,\hspace{0.1cm} 2\leq n\leq K$.

We describe the estimation procedure in Algorithm \ref{algo:viterbi}.

\begin{algorithm}[htb]
    \caption{Viterbi algorithm for the MMHP-$\delta$}
    \label{algo:viterbi}
    \begin{enumerate}
        \item Initialize $\eta_1^i=H_{ii}^{(1)}(x_1)\mu^i$, $\psi_1^j=0$.
        \item For $n = 1$ to $K$:
        \begin{itemize}
            \item Compute 
            \begin{equation}
                \eta_n^j=\underset{1\leq i \leq M}{\max}\hspace{0.1cm}\eta_{n-1}^iH_{ij}^{(n)}(x_n)\lambda^j(t_n),\hspace{0.3cm}1\leq j \leq M,
            \end{equation}
            and
            \begin{equation}
                \psi_n^j:=\underset{1\leq i \leq M}{\argmax}\hspace{0.1cm}\eta_{n-1}^iH_{ij}^{(n)}(x_n),\hspace{0.3cm}1\leq j \leq M.
            \end{equation}
        \end{itemize}
        \item Initialize $s_K^*=\underset{1\leq i \leq M}{\argmax}\hspace{0.1cm}\eta_{K}^i$.
        \item For $n = 1$ to $K$:
        \begin{itemize}
            \item Compute 
            \begin{equation}
                s_n^*=\psi_{n+1}^{s_{n+1}^*},\hspace{0.3cm}1\leq n\leq K-1,
            \end{equation}
        \end{itemize}
        \item \textbf{Output:} The optimal sequence of states $S^*$ as defined in Equation \eqref{eq:optimal_viterbi_states}.
    \end{enumerate}
\end{algorithm}

\subsubsection{Online estimation}

In order to compute the sequence of states in a live environment, we need to infer the optimal state at times between events. In this case, the recursion is similar but involves the computation of a specific transition matrix that we denote by $R$. It is defined such that, for $ 1\leq i,j\leq M$,

\begin{equation}
    R_{ij}^{(n)}(u,u'):=\mathbb{P}\left(S_{t_{n-1}+u'}=j,N_{t_{n-1}+u'}-N_{t_{n-1}+u}=0\,|\,S_{t_{n-1}+u}=i, \mathcal{F}_{t_{n-1}+u}\right),\hspace{0.3cm}0\leq u<u'\leq x_n.
\end{equation}

The transition matrix $R$ can be written in terms of the forward transition matrix and its inverse. The formula is given by the following lemma.

\begin{prop}\label{prop:forward_k_equation}
    The transition matrix $R$ is given by
    \begin{equation}
        R^{(n)}(u,u')=\left[H^{(n)}(u)\right]^{-1}H^{(n)}(u'), \hspace{0.3cm}0\leq u\leq u'\leq x_n.
    \end{equation}
\end{prop}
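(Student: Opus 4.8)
The plan is to recognize $R^{(n)}(u,u')$ as a forward transition matrix over the sub-interval $[t_{n-1}+u,\,t_{n-1}+u']$ and to identify it with $[H^{(n)}(u)]^{-1}H^{(n)}(u')$ by a uniqueness argument for linear matrix ODEs. The structural fact I would exploit is that, because no event of $N$ occurs strictly between $t_{n-1}$ and $t_n$, the intensity $\Lambda_{t_{n-1}+v}$ for $0\leq v\leq x_n$ is a deterministic function of $v$ given $\mathcal{F}_{t_{n-1}}$; in particular it is unchanged if we instead condition on $\mathcal{F}_{t_{n-1}+u}$ for some $u\leq u'$. Consequently the conditioning start time is immaterial to the generator driving $S$ on this inter-event interval, which is precisely what lets $R$ inherit the forward dynamics of $H$.

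First I would establish the forward ODE satisfied by $R^{(n)}(u,\cdot)$. Fixing $u$ and running the same infinitesimal argument used to prove Proposition \ref{prop:forward_ode}, now over $[t_{n-1}+u,\,t_{n-1}+u']$, one obtains
\[
\frac{\partial R^{(n)}}{\partial u'}(u,u') = R^{(n)}(u,u')\bigl(Q-\Lambda_{t_{n-1}+u'}\bigr),\qquad R^{(n)}(u,u)=\mathbb{I}_M .
\]
This is licit precisely because of the remark above: the transition and killing rates appearing in that derivation are the same deterministic $Q-\Lambda_{t_{n-1}+u'}$ regardless of whether the clock started at $t_{n-1}$ or at $t_{n-1}+u$. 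Next I would define $\widetilde R(u'):=[H^{(n)}(u)]^{-1}H^{(n)}(u')$ and verify that it solves the same initial value problem: at $u'=u$ it reduces to $\mathbb{I}_M$, while differentiating in $u'$ and invoking Proposition \ref{prop:forward_ode} gives
\[
\frac{\mathrm{d}\widetilde R}{\mathrm{d}u'}(u') = [H^{(n)}(u)]^{-1}\,H^{(n)}(u')\bigl(Q-\Lambda_{t_{n-1}+u'}\bigr) = \widetilde R(u')\bigl(Q-\Lambda_{t_{n-1}+u'}\bigr).
\]
Thus $\widetilde R$ and $R^{(n)}(u,\cdot)$ satisfy the same linear ODE with the same value at $u'=u$, and by uniqueness of solutions to linear matrix ODEs with (piecewise-continuous) coefficients the two coincide on $[u,x_n]$, which is the claimed identity.

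The one point requiring care — and the main obstacle — is the invertibility of $H^{(n)}(u)$, without which $\widetilde R$ is not even defined. I would settle this by Liouville's formula: since $H^{(n)}$ solves $\dot H=H(Q-\Lambda_{t_{n-1}+v})$ with $H^{(n)}(0)=\mathbb{I}_M$, its determinant equals $\exp\!\bigl(\int_0^u \operatorname{tr}(Q-\Lambda_{t_{n-1}+v})\,\mathrm{d}v\bigr)$, which never vanishes. Equivalently, and more concretely in the MMHP-$\delta$ setting, the explicit formula \eqref{eq:formula_H} writes $H^{(n)}(u)$ as a finite product of matrix exponentials, each invertible, so the product is invertible. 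A fully probabilistic alternative would bypass the ODE entirely: condition on the intermediate state $S_{t_{n-1}+u}$, factor the no-jump event $\{N_{t_{n-1}+u'}-N_{t_{n-1}}=0\}$ into the no-jump events on $[t_{n-1},t_{n-1}+u]$ and on $[t_{n-1}+u,t_{n-1}+u']$, and apply the Markov property to obtain the Chapman--Kolmogorov relation $H^{(n)}(u')=H^{(n)}(u)\,R^{(n)}(u,u')$ directly, after which one left-multiplies by $[H^{(n)}(u)]^{-1}$.
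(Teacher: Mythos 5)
Your proposal is correct, but your primary argument takes a genuinely different route from the paper. The paper's entire proof is the single line ``the result comes from Definition~\ref{def:forward_transition_matrix}'': implicitly, one conditions on the intermediate state $S_{t_{n-1}+u}$, splits the no-jump event over $[t_{n-1},t_{n-1}+u]$ and $[t_{n-1}+u,t_{n-1}+u']$, and uses the Markov property to get the Chapman--Kolmogorov factorization $H^{(n)}(u')=H^{(n)}(u)\,R^{(n)}(u,u')$ --- which is exactly what you relegate to your ``fully probabilistic alternative'' at the end, and which mirrors the conditioning argument the paper spells out for Lemma~\ref{lemma:forward_backward_equation}. Your main route instead rederives the forward ODE for $R^{(n)}(u,\cdot)$ and identifies it with $\bigl[H^{(n)}(u)\bigr]^{-1}H^{(n)}(u')$ by uniqueness of solutions to the linear initial value problem; this is heavier machinery for the same conclusion, but it buys two things the paper leaves implicit. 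First, you justify the invertibility of $H^{(n)}(u)$ (via Liouville's formula $\det H^{(n)}(u)=\exp\bigl(\int_0^u\operatorname{tr}(Q-\Lambda_{t_{n-1}+v})\,\mathrm{d}v\bigr)\neq 0$, or via the explicit product of matrix exponentials in Equation~\eqref{eq:formula_H}), without which the right-hand side of the proposition is not even well defined --- the paper writes $\bigl[H^{(n)}(u)\bigr]^{-1}$ with no comment. Second, you make explicit the structural fact both proofs need: on an event-free inter-event interval the intensity $\Lambda_{t_{n-1}+v}$ is a deterministic function of $v$ given $\mathcal{F}_{t_{n-1}}$, so shifting the conditioning time from $t_{n-1}$ to $t_{n-1}+u$ does not change the generator $Q-\Lambda_{t_{n-1}+u'}$. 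Either of your two routes would serve as a complete proof; the probabilistic one is the paper's, and the shorter of the two once invertibility is noted.
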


\begin{proof}
    The result comes from Definition \ref{def:forward_transition_matrix}
\end{proof}

The online Viterbi algorithm is described in Algorithm \ref{algo:online_viterbi}. Note that there is no backward recursion as for the historical version since we only need the maximum value of the transition density at the current time.

\begin{algorithm}[htb]
    \caption{Online Viterbi algorithm for the MMHP}
    \label{algo:online_viterbi}
    \begin{enumerate}
        \item Set a history of past observations such that $t_K=T$ is the starting point of the online algorithm, and set $\Delta$ as the discretization time step for the live algorithm. The sequence of states will be continuously updated.
        \item Apply Algorithm \ref{algo:viterbi} to compute $\eta_K^j$, $1\leq j\leq M$.
        \item Set $t=T$.
        \item While True: Denote by $t'$ the current time.
        \begin{enumerate}
            \item If an event has occurred:
            \begin{enumerate}
                \item $K\leftarrow K+1$
                \item $t_K\leftarrow t'$
                \item Compute
                \begin{equation}
                    \eta_K^j=\underset{1\leq i \leq M}{\max}\hspace{0.1cm}\eta_{K-1}^iH_{ij}^{(K)}(t_K-t)\lambda^j(t_K),\hspace{0.3cm}1\leq j \leq M.
                \end{equation}
                \item $t\leftarrow t_K$
            \end{enumerate}
            \item Else:
            \begin{enumerate}
                \item $\eta_K^j\leftarrow \underset{1\leq i \leq M}{\max}\hspace{0.1cm}\eta_{K}^iR_{ij}^{(K+1)}(t,t'),\hspace{0.3cm}1\leq j \leq M.$
                \item $t\leftarrow t'$
            \end{enumerate}
            \item Set $s_t^*=\underset{1\leq i \leq M}{\argmax}\hspace{0.1cm}\eta_{K}^i$.
        \end{enumerate}
        \item \textbf{Output:} A live feed of optimal states $s_t^*$.
    \end{enumerate}
\end{algorithm}

\subsection{Goodness of fit}

We proceed to a residual analysis to assess whether the estimated model correctly approximates the real data generation process. Since the state process $S$ is unobservable, we estimate the intensity $\lambda_t$ with $\widehat{\lambda}_t$ using

\begin{equation}
    \widehat{\lambda}_t:=\mathbb{E}\left(\lambda_t\,\big|\,\mathcal{F}_{T}\right)=\sum_{i=1}^M\xi_{t|T}^i\lambda_t^i.
\end{equation}

Define the residual transformed duration
\begin{equation}
    \tau_n:=\int_{t_{n-1}}^{t_n}\widehat{\lambda}_t\mathrm{d}t,\hspace{0.3cm}1\leq n \leq K.
\end{equation}
Then, a classical result in point process theory \citep{daley2003introduction} is that if the sequence of event times $(t_n)_{1\leq n\leq K}$ was generated by a point process with intensity process $(\widehat{\lambda}_t)_{t\geq0}$, then $\tau_n$ must follow an exponential distribution with unit parameter. Once the sequence of transformed inter-arrival times $(\tau_n)_{1\leq n\leq K}$ is computed, a Q-Q plot can be performed to visually assess the goodness of fit with respect to the exponential distribution. The following proposition provides an analytic formula for the transformed duration of an MMHP-$\delta$.

\begin{prop}
    Under the $\delta$-piecewise constant intensity assumption, we have
    \begin{equation}\label{eq:gof_compensator}
        \tau_n=\frac{1}{c_n}\Tr\left(\sum_{k=0}^{\ell_n-1}\Lambda_{t_{n-1}+k\delta}\odot\left[e^{C^{(n)}_k\delta}\right]^{M\boxtimes M}+\Lambda_{t_{n-1}+\ell_n\delta}\odot \left[e^{D^{(n)}\Delta_n}\right]^{M\boxtimes M}\right),\hspace{0.3cm}1\leq n \leq K.
    \end{equation}
\end{prop}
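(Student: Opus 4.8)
The plan is to recognize that this proposition is, in essence, a direct corollary of the interval-wise computation already performed in the proof of Proposition \ref{prop:e_step_likelihood}, applied with all quantities evaluated at the fitted parameters $\Theta^*$. First I would start from the definition $\tau_n=\int_{t_{n-1}}^{t_n}\widehat{\lambda}_t\,\mathrm{d}t$ and substitute the estimated intensity $\widehat{\lambda}_t=\sum_{i=1}^M\xi_{t|T}^i\lambda_t^i$. By linearity of the integral this gives $\tau_n=\sum_{i=1}^M\int_{t_{n-1}}^{t_n}\xi_{t|T}^i\lambda_t^i\,\mathrm{d}t$, reducing the problem to evaluating each single-state integral over the inter-event window $[t_{n-1},t_n]$.

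Next I would appeal to Equation \eqref{eq:e_step_integral_smoothed_prob_intensity}, which expresses $\int_0^T\xi_{t|T}^i\lambda_t^i\,\mathrm{d}t$ as a sum over $n$ of terms of the form $\frac{1}{c_n}[\,\cdot\,]_{ii}$. The crucial point is that, by the way this integral is assembled in the proof of Proposition \ref{prop:e_step_likelihood} --- splitting $[0,T]$ into the successive inter-event intervals and each interval further into $\delta$-slices --- the $n$-th summand is precisely the contribution of $[t_{n-1},t_n]$. I would therefore identify, for each fixed $i$,
\begin{equation}
    \int_{t_{n-1}}^{t_n}\xi_{t|T}^i\lambda_t^i\,\mathrm{d}t=\frac{1}{c_n}\left[\sum_{k=0}^{\ell_n-1}\Lambda_{t_{n-1}+k\delta}\odot\left[e^{C^{(n)}_k\delta}\right]^{M\boxtimes M}+\Lambda_{t_{n-1}+\ell_n\delta}\odot\left[e^{D^{(n)}\Delta_n}\right]^{M\boxtimes M}\right]_{ii}.
\end{equation}

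Finally I would sum this identity over the states $i=1,\dots,M$. Since $c_n$ and the two bracketed block-matrix quantities do not depend on $i$, summing the diagonal entries $[\,\cdot\,]_{ii}$ over $i$ is by definition the trace of the bracketed matrix, which yields exactly the announced formula \eqref{eq:gof_compensator}.

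The only genuinely delicate point is the per-interval identification in the second paragraph: the statement of Proposition \ref{prop:e_step_likelihood} guarantees only that the full sum over $n$ reproduces the integral over $[0,T]$, so I would need to inspect its proof to confirm that each summand matches the integral restricted to $[t_{n-1},t_n]$. This is immediate once one notes that on $[t_{n-1},t_n]$ the smoothed probability $\xi_{t|T}^i=\alpha_t^i\beta_t^i/\mathcal{L}$ is carried entirely by $H^{(n)}(t-t_{n-1})$ and $G^{(n)}(t_n-t)$, so that the integrated product is encoded through the block-matrix exponentials $e^{C^{(n)}_k\delta}$ and $e^{D^{(n)}\Delta_n}$, which are indexed by $n$. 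Consequently no new computation is required beyond what has already been established, and the result follows by linearity and the definition of the trace.
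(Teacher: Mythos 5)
Your proposal is correct and follows essentially the same route as the paper: the paper's proof likewise writes $\tau_n=\sum_{i=1}^M\int_{t_{n-1}}^{t_n}\xi_{t|T}^i\lambda_t^i\,\mathrm{d}t$ and then invokes Equation \eqref{eq:e_step_integral_smoothed_prob_intensity}, implicitly using exactly the per-interval identification you spell out, with the trace arising from the sum of diagonal entries over $i$. If anything, you make explicit the one point the paper glosses over --- that the $n$-th summand of \eqref{eq:e_step_integral_smoothed_prob_intensity} is the contribution of $[t_{n-1},t_n]$ alone --- which is a welcome clarification rather than a deviation.
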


\begin{proof}
    For all $n$, $1\leq n \leq K$, write
    \begin{equation}
        \tau_n=\int_{t_{n-1}}^{t_n}\widehat{\lambda}_t\mathrm{d}t=\sum_{i=1}^M\int_{t_{n-1}}^{t_n}\xi_{t|T}^i\lambda_t^i\mathrm{d}t.\notag
    \end{equation}
    Use Equation \eqref{eq:e_step_integral_smoothed_prob_intensity} to deduce the result.
\end{proof}

\subsection{Numerical validation}

In this section, we validate the estimation procedure by conducting numerical experiments on simulated data. We show that the EM algorithm proposed in Section \ref{section:estimation_procedure} provides satisfactory results in terms of convergence with respect to the number of events and the number of EM steps. We repeatedly run the experiment and show the stability of the results across simulations using box plots.

To preserve the Markov property and minimize computation time, we will use exponential kernels in the rest of the paper,

\begin{equation}
    \phi^i(t):=\alpha^i e^{-\beta^it},\quad 1\leq i \leq M.
\end{equation}

However, note that the estimation procedure holds for any general positive kernel.

The parameters used in the experiment are

$$ M=2, \quad \mu=(1.0, 1.0)', \quad \alpha=(1.0, 4.0)', \quad \beta=(2.0, 10.0)',$$
$$ Q = \left(\begin{matrix}-1.0 & 1.0 \\ 1.0 & -1.0\end{matrix}\right), \quad \xi_0=(0.5, 0.5).$$

\paragraph{Convergence with respect to the number of EM steps} We simulate 100 samples of 6,400 events of the MMHP with $\delta$-piecewise constant kernel with $\delta=0.1$. The box plots of the EM estimators as functions of the number of steps are shown in Figure \ref{fig:box_plot_parameters_nem}. The convergence is quickly obtained for all Hawkes parameters as less than 500 steps are required. Concerning the Markov chain parameters, up to 1000 steps are needed to reach convergence. Other parameter settings led to similar results, showing that in empirical applications, the estimated Hawkes parameters might be more reliable than the Markov chain ones if the number of EM steps is small.

\begin{figure}
    \centering
    \subfloat[$\mu_1$]{%
        \includegraphics[width=0.25\linewidth]{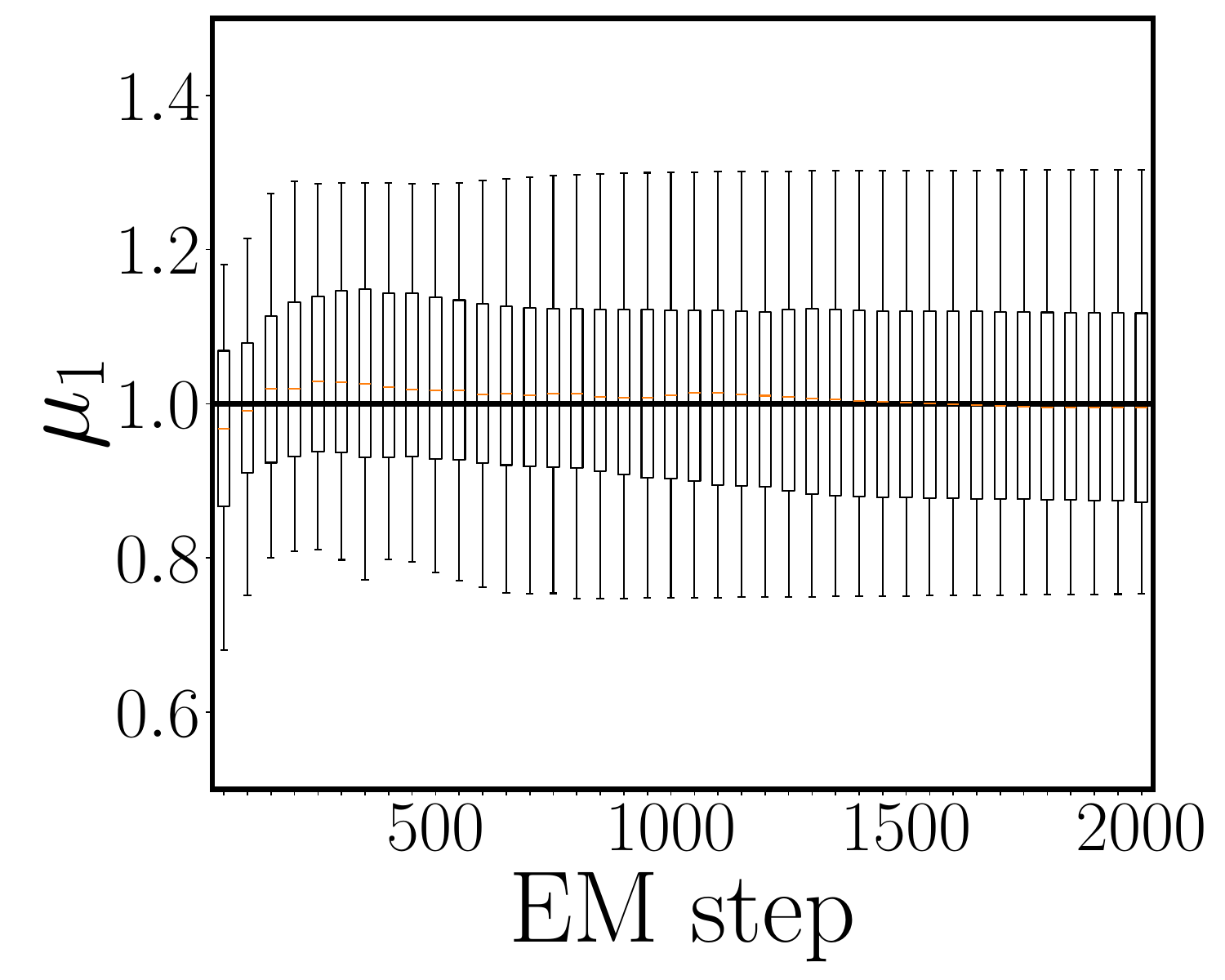}%
    }
    \subfloat[$\mu_2$]{%
        \includegraphics[width=0.25\linewidth]{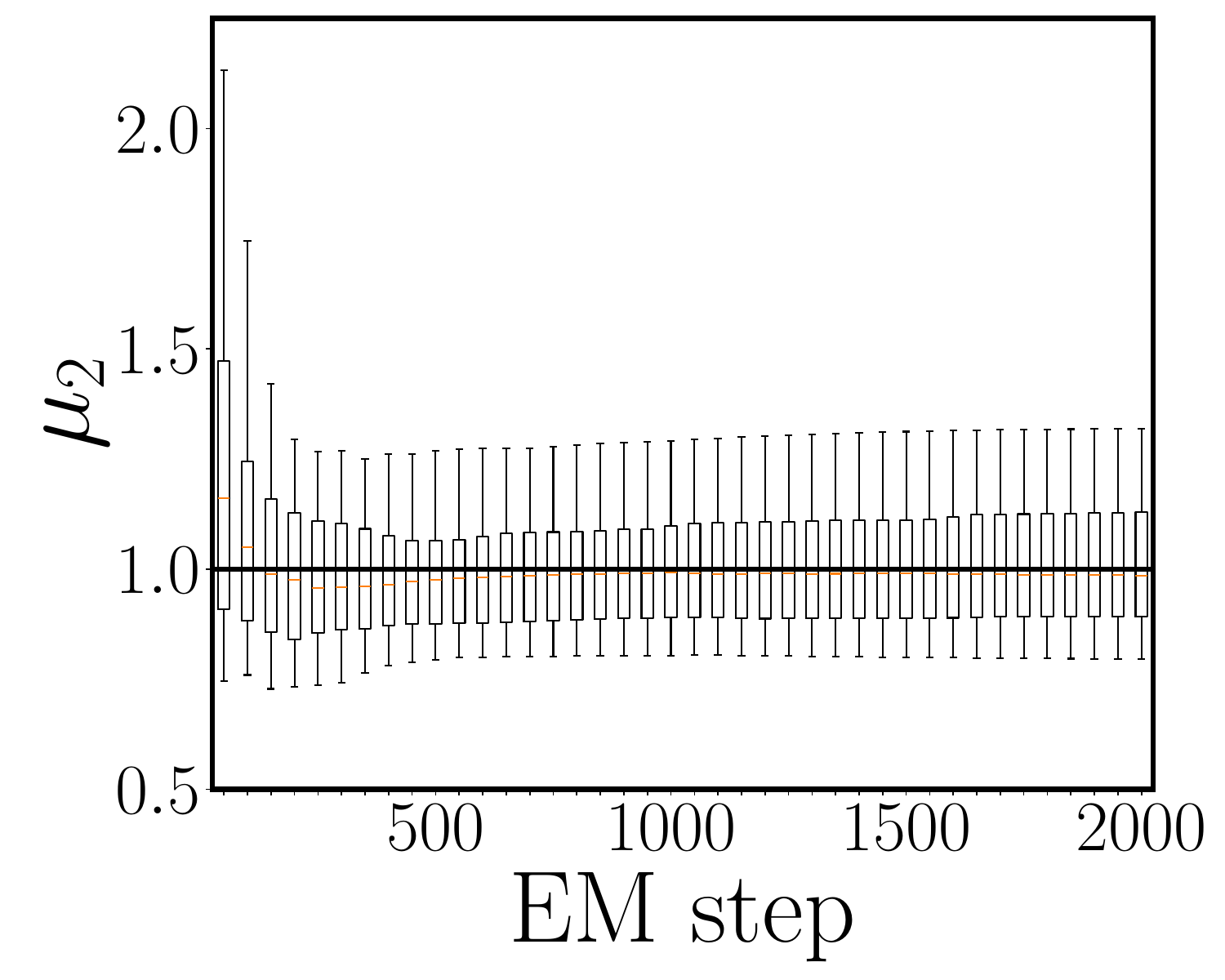}%
    }
    \subfloat[$\alpha_1$]{%
        \includegraphics[width=0.25\linewidth]{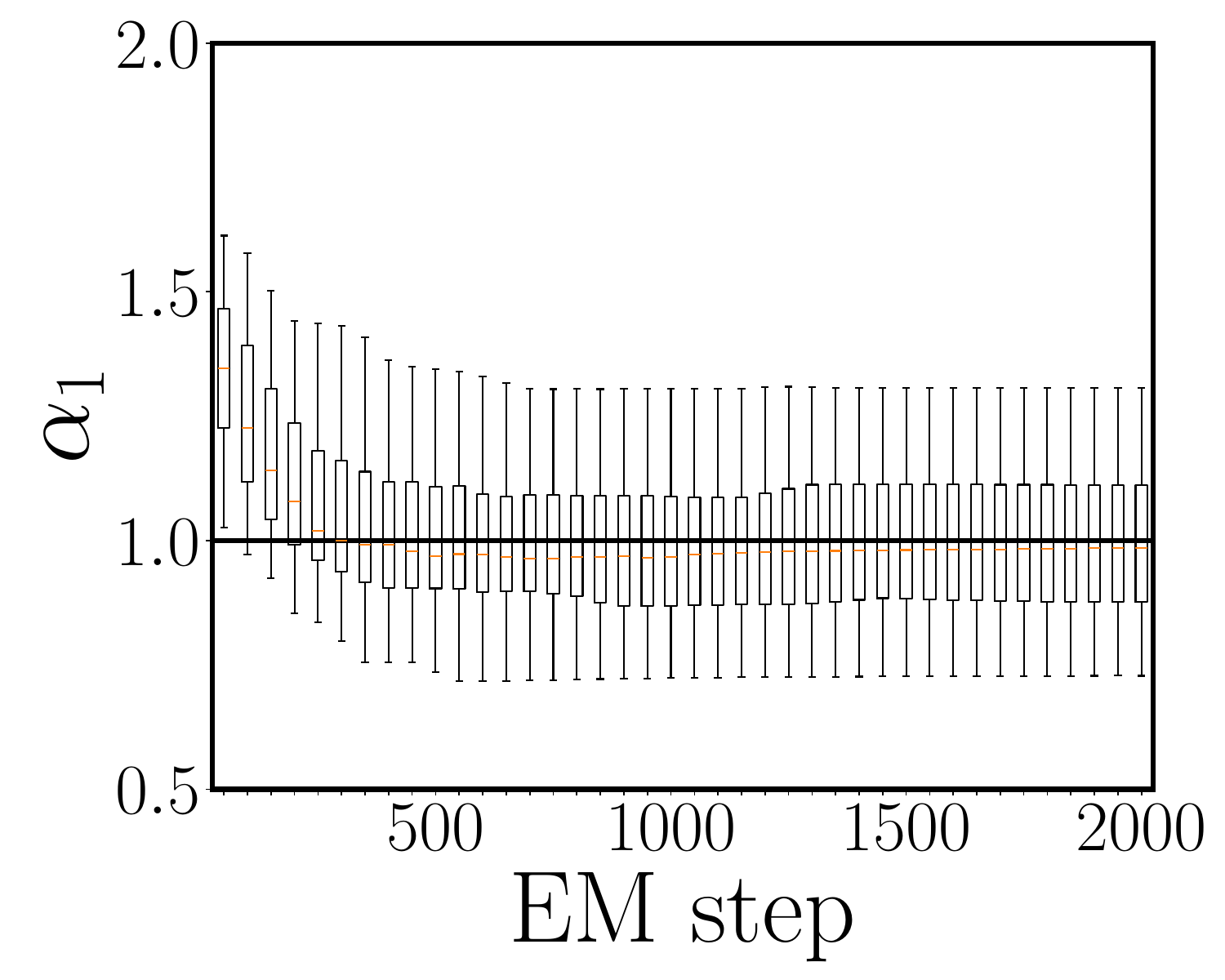}%
    }
    \subfloat[$\alpha_2$]{%
        \includegraphics[width=0.25\linewidth]{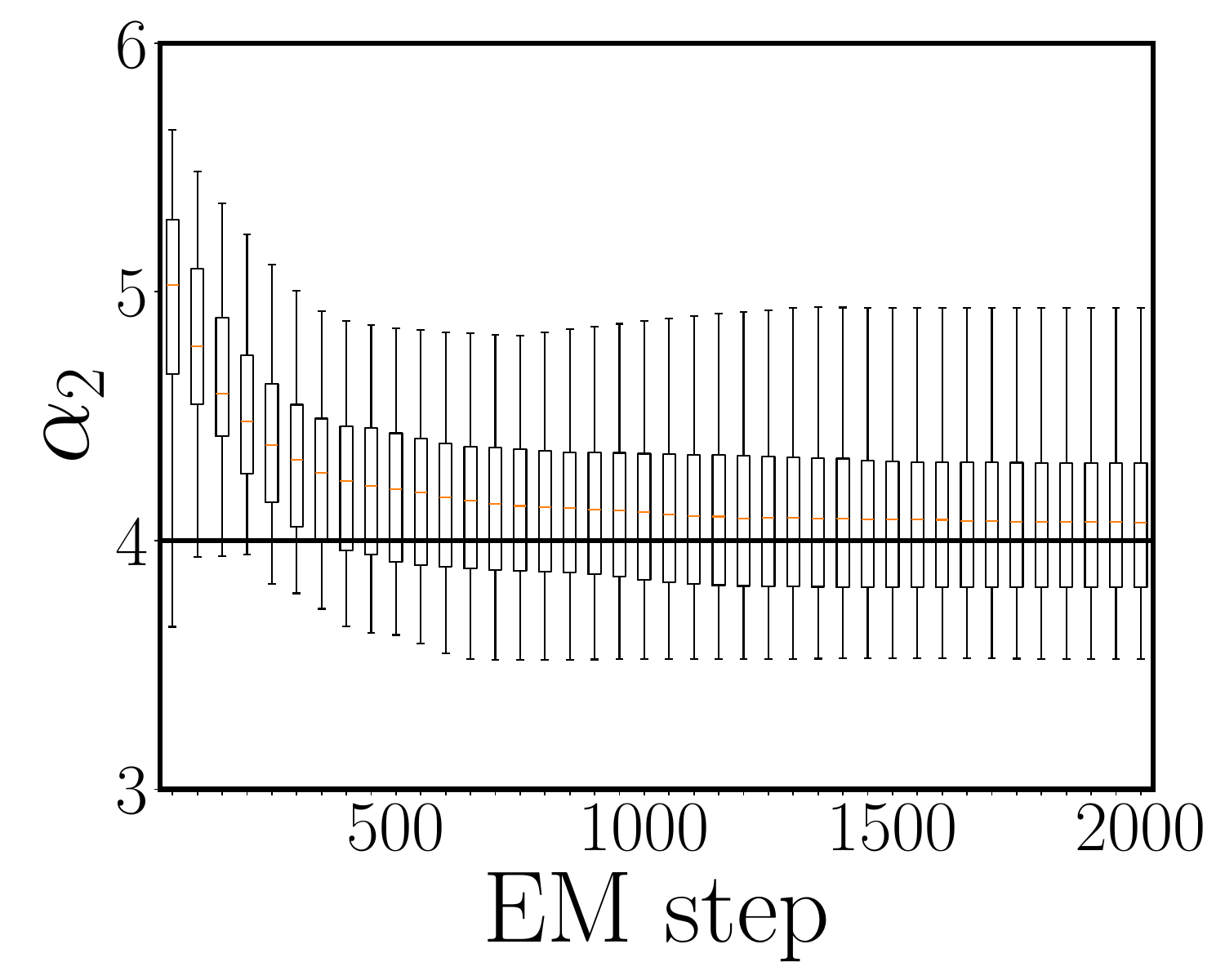}%
    }\\
    \subfloat[$\beta_1$]{%
        \includegraphics[width=0.25\linewidth]{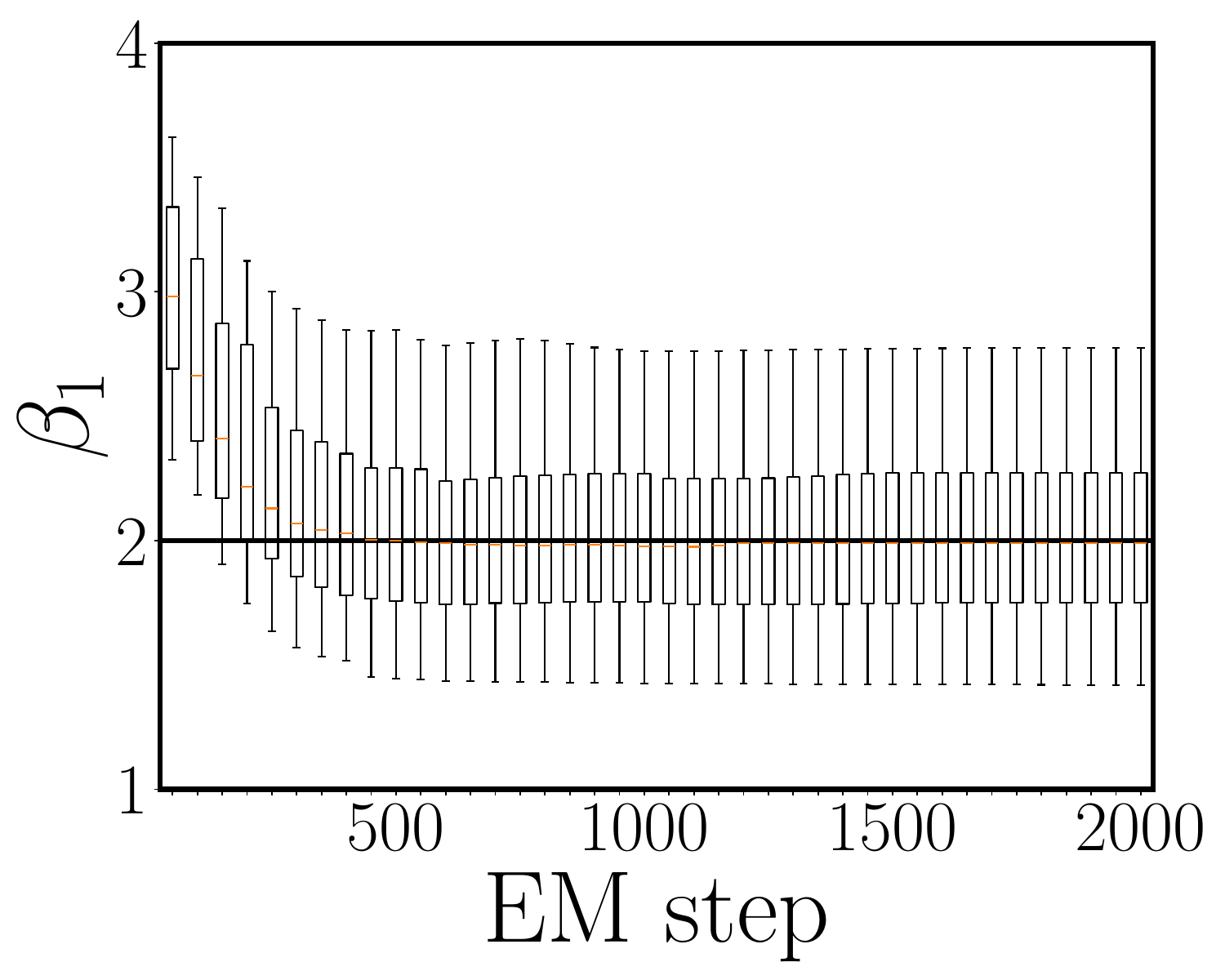}%
    }
    \subfloat[$\beta_2$]{%
        \includegraphics[width=0.25\linewidth]{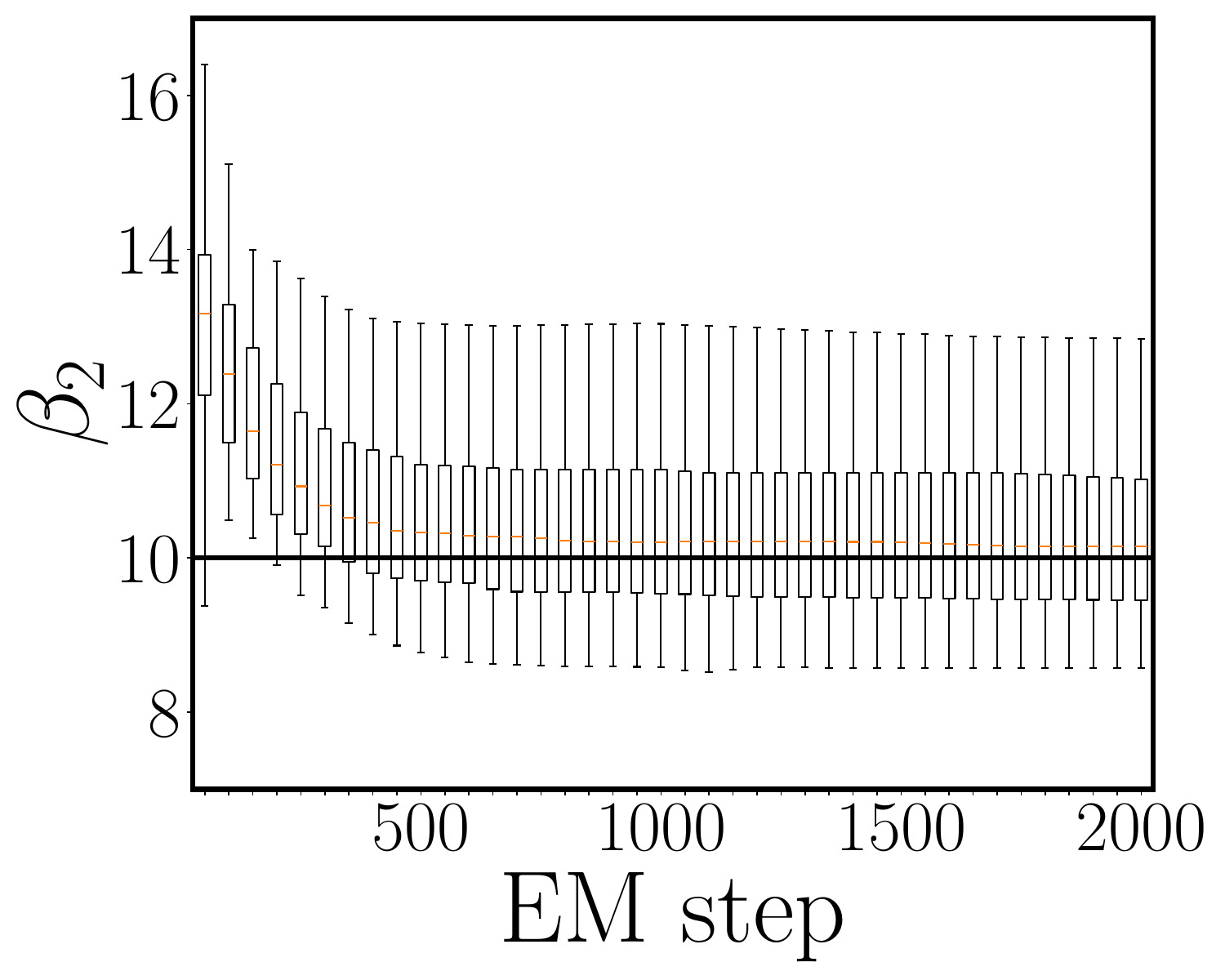}%
    }
    \centering
    \subfloat[$q_1$]{%
        \includegraphics[width=0.25\linewidth]{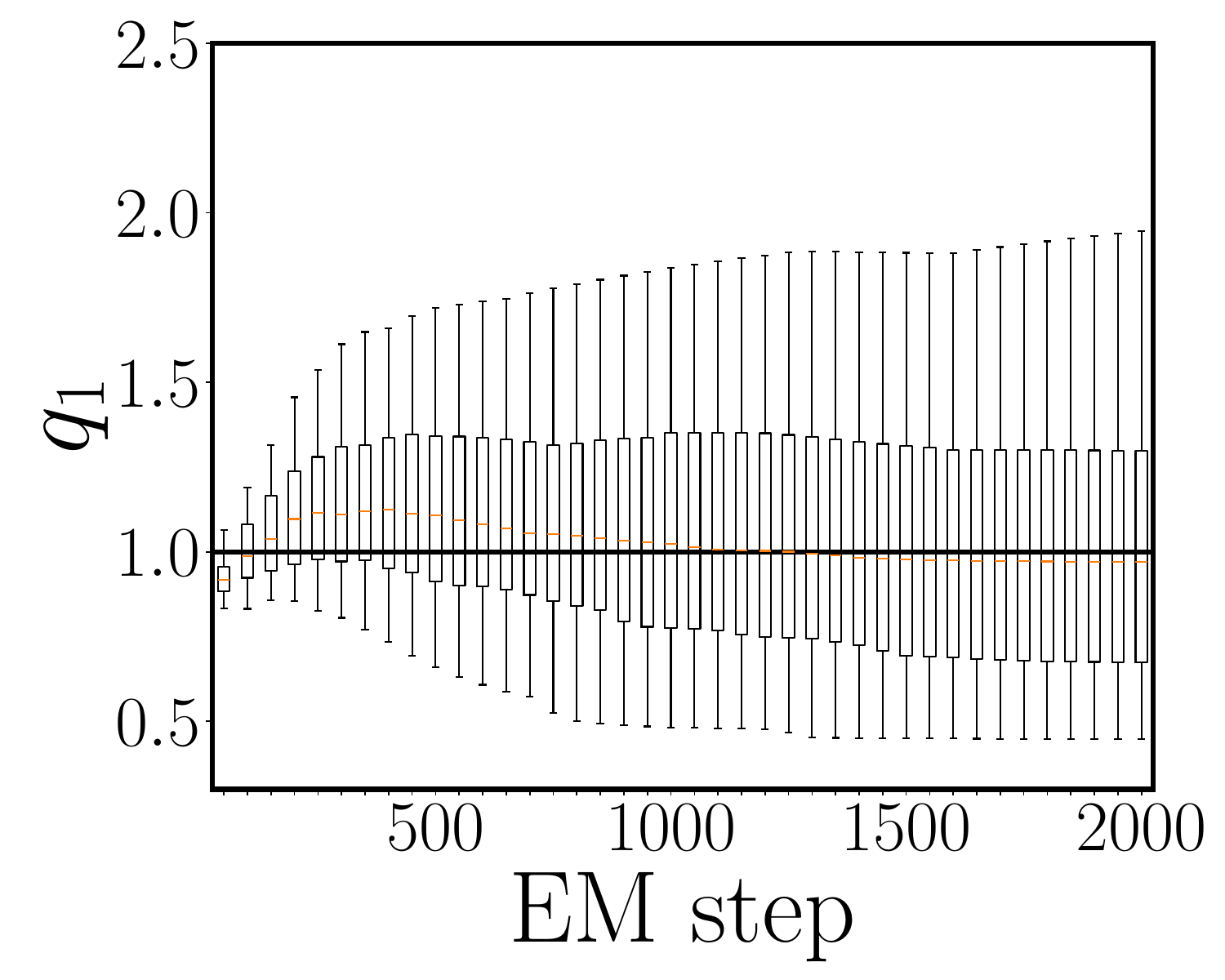}%
    }
    \subfloat[$q_2$]{%
        \includegraphics[width=0.25\linewidth]{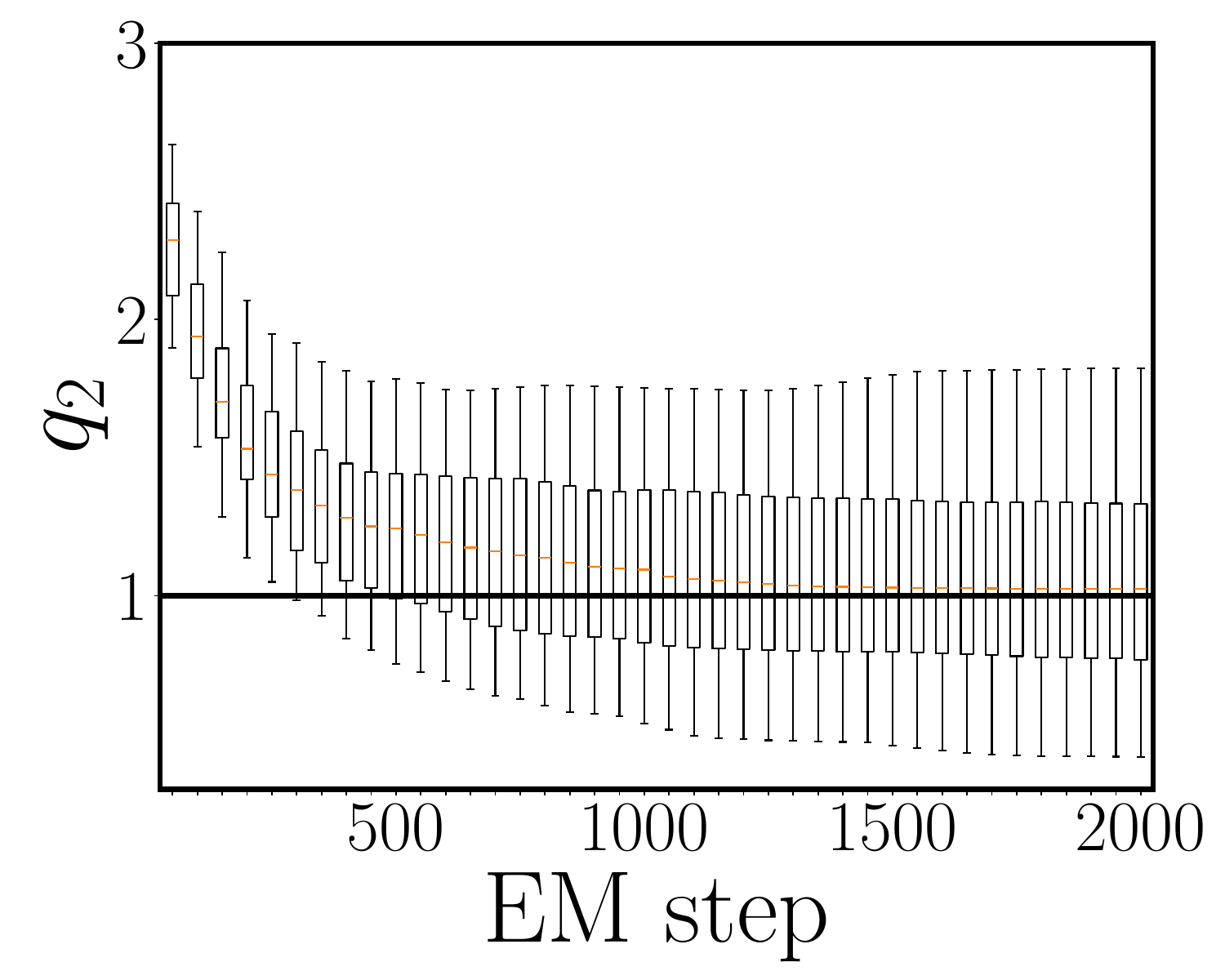}%
    }
    \caption{\textit{Convergence with respect to the number of EM steps} --- Box plots of the parameters with respect to the number of EM steps. Whiskers represent 5\% and 95\% quantiles. The plain line represents the true parameter.}
    \label{fig:box_plot_parameters_nem}
\end{figure}

\paragraph{Convergence with respect to the number of events} We simulate 100 samples of the MMHP with $\delta$-piecewise constant kernel with $\delta=0.1$ and study the convergence of the EM estimators towards the true parameters with respect to the number of events. The number of EM steps is set to 2,000. The boxplots are displayed in Figure \ref{fig:box_plot_parameters_nevents}. We get the desired convergence for all parameters, satisfactory results being obtained for samples of sizes greater than 1,600 events. We also display the convergence of the standard deviations of the estimated parameters amongst the 100 simulations on Figure \ref{fig:box_plot_parameters_rate_nevents}. Slopes of order $K^{-\gamma}$ with $\gamma\in\{0.5,1\}$ are plotted for visual reference.

\begin{figure}
    \centering
    \subfloat[$\mu_1$]{%
        \includegraphics[width=0.25\linewidth]{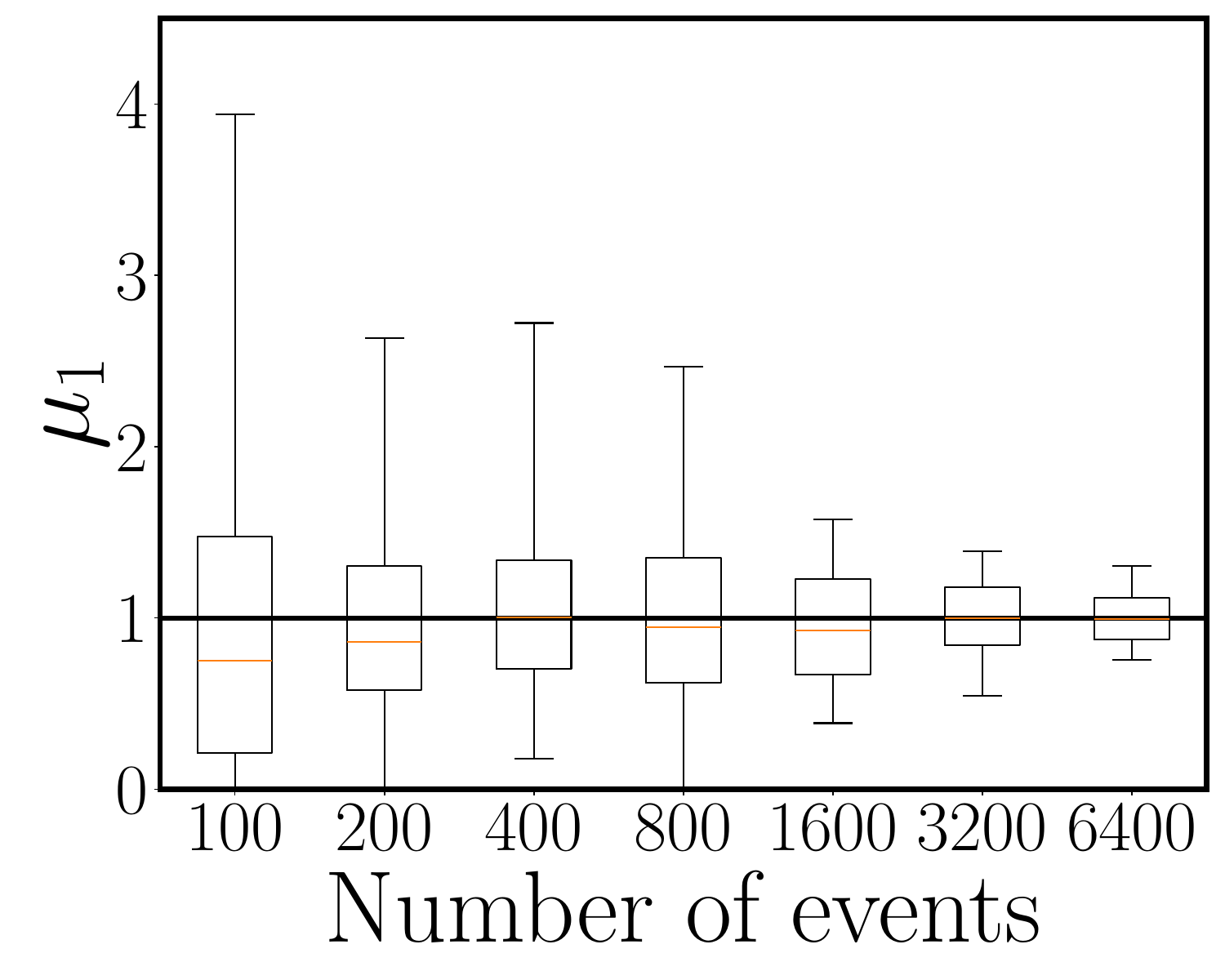}%
    }
    \subfloat[$\mu_2$]{%
        \includegraphics[width=0.25\linewidth]{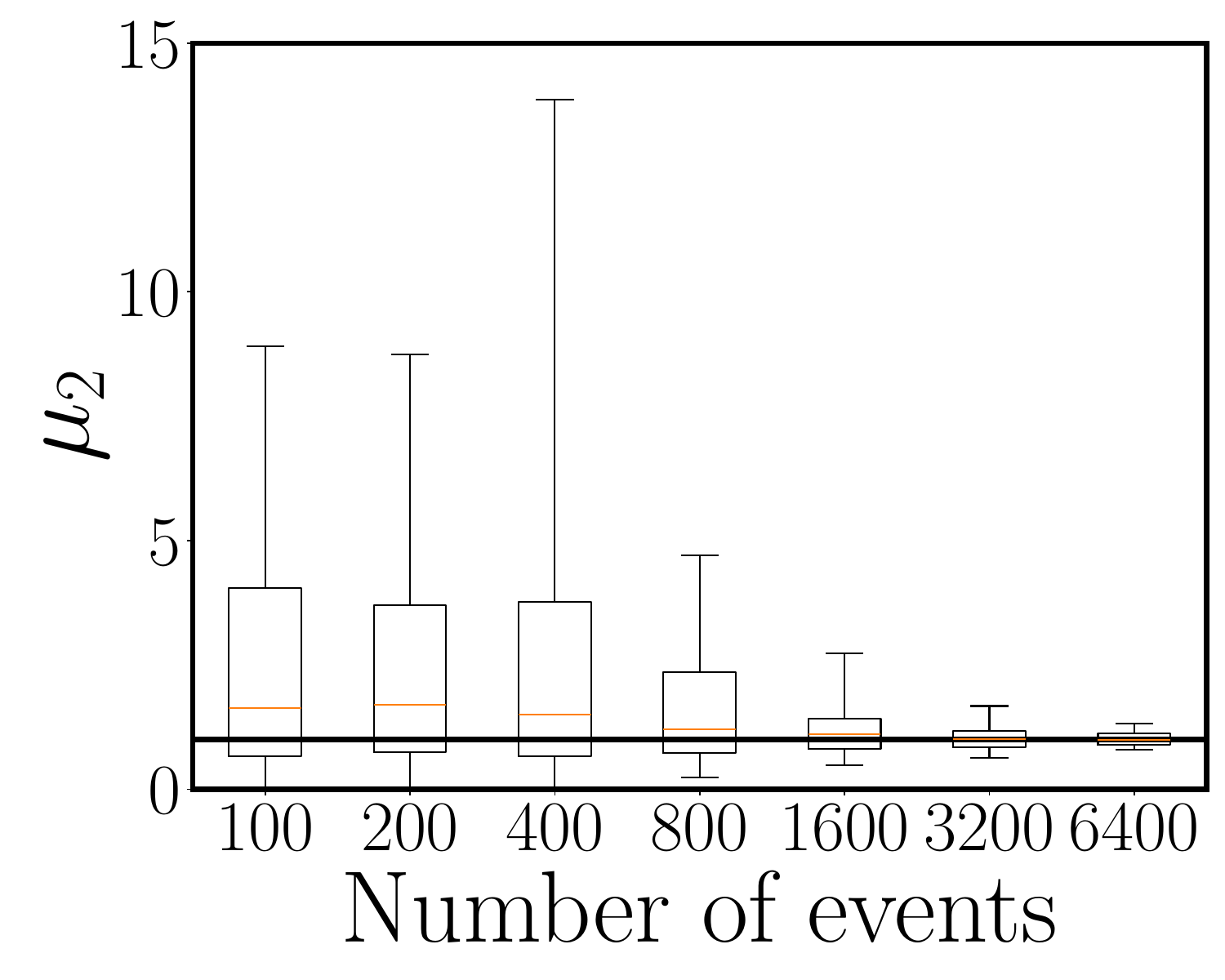}%
    }
    \subfloat[$\alpha_1$]{%
        \includegraphics[width=0.25\linewidth]{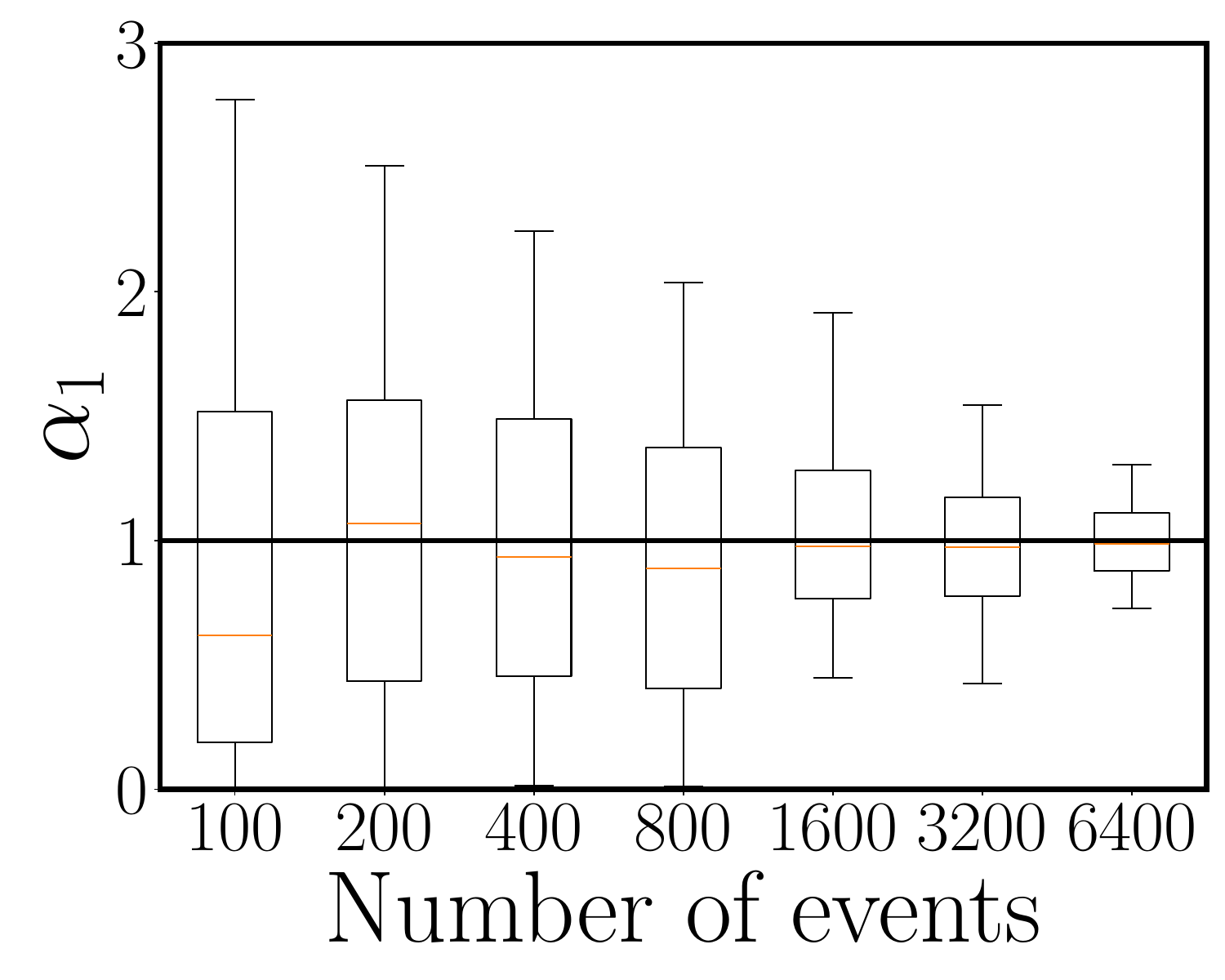}%
    }
    \subfloat[$\alpha_2$]{%
        \includegraphics[width=0.25\linewidth]{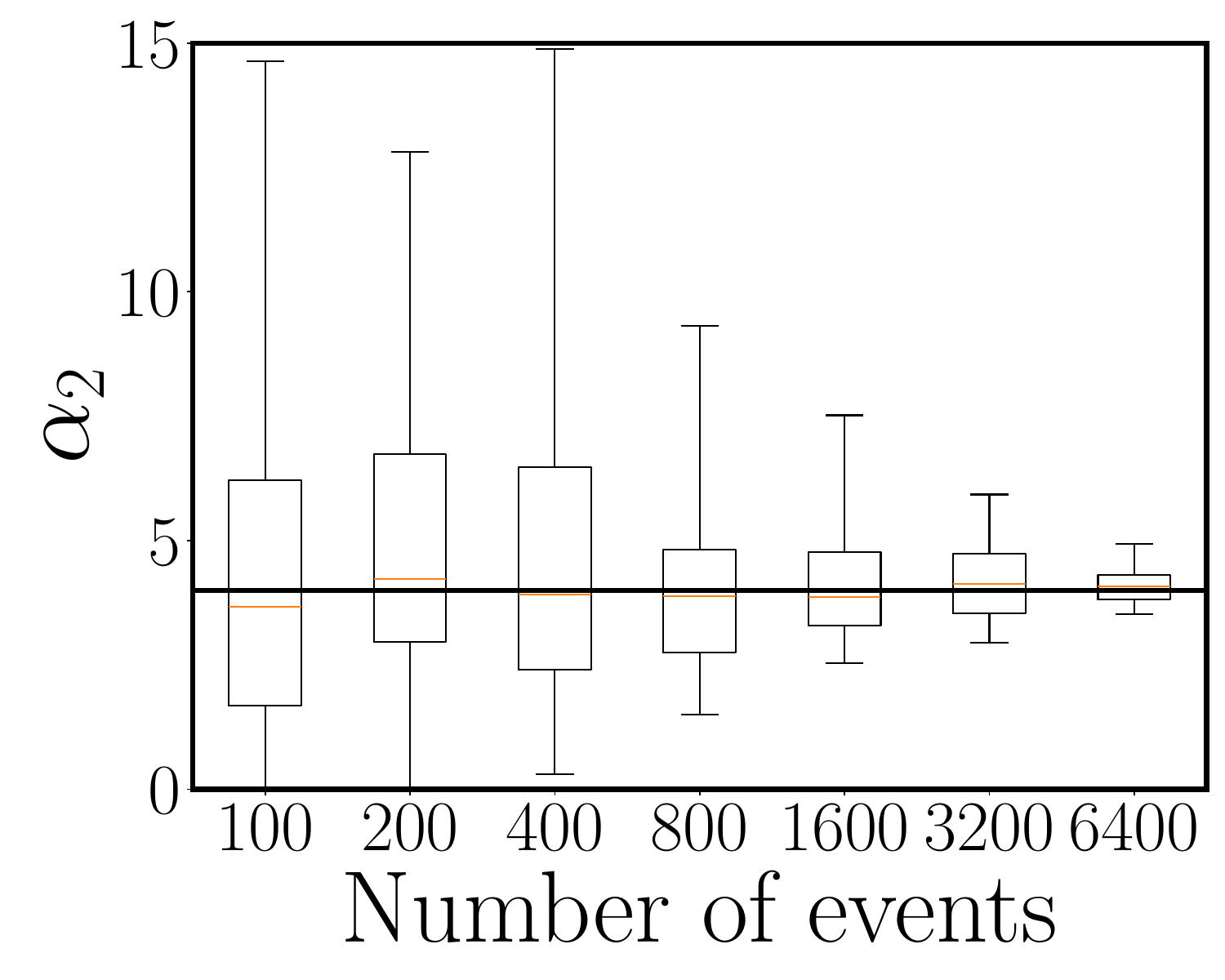}%
    }\\
    \subfloat[$\beta_1$]{%
        \includegraphics[width=0.25\linewidth]{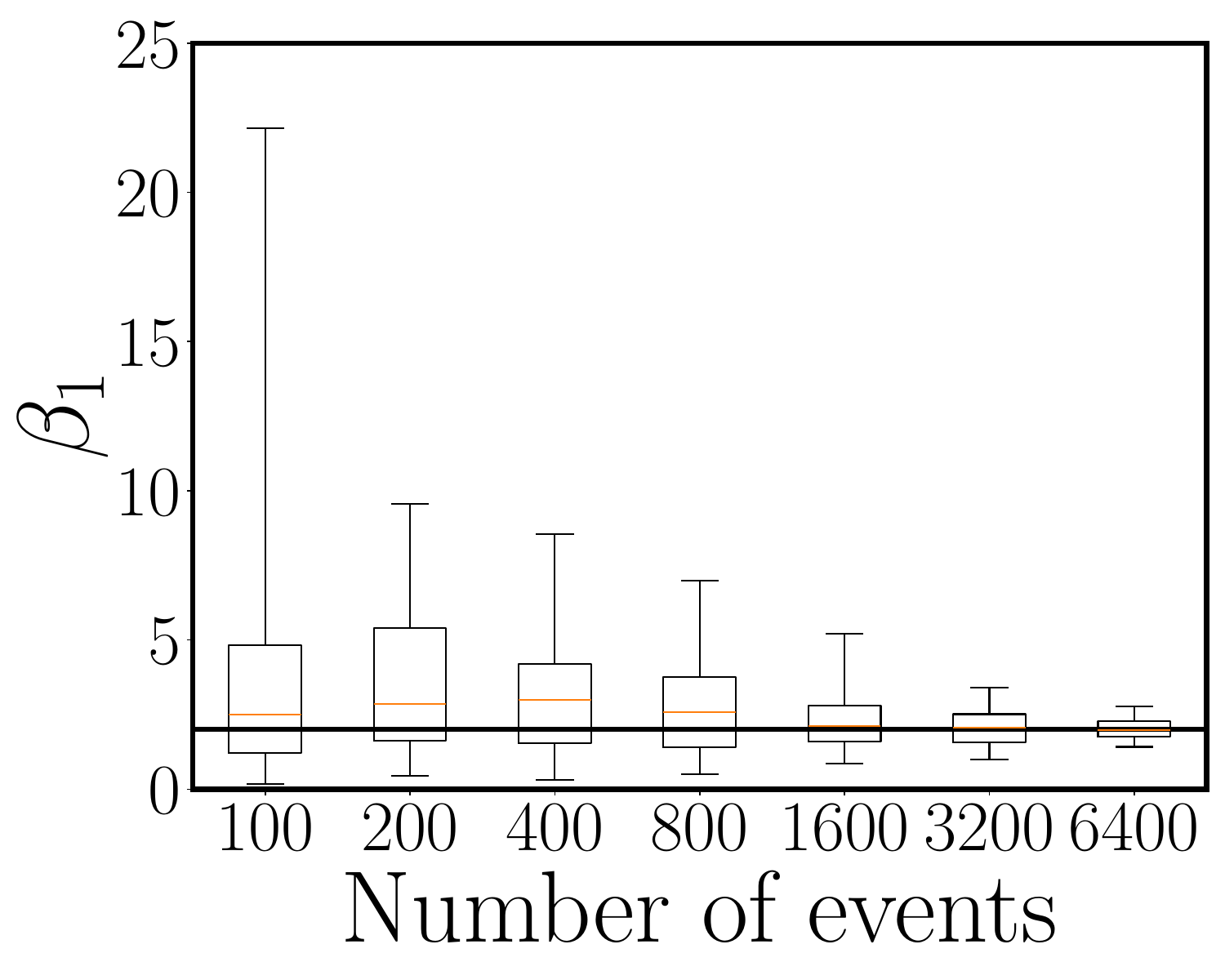}%
    }
    \subfloat[$\beta_2$]{%
        \includegraphics[width=0.25\linewidth]{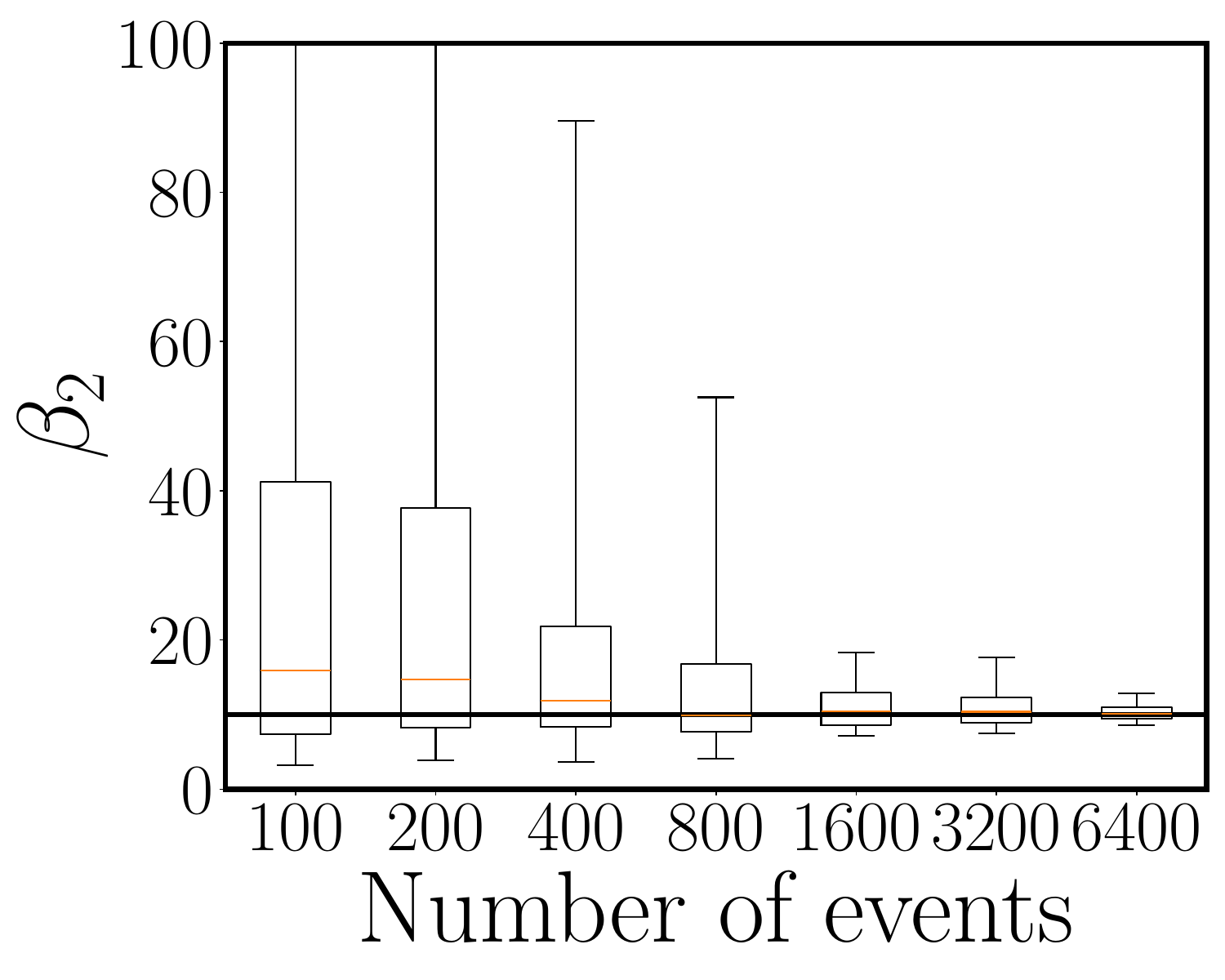}%
    }
    \centering
    \subfloat[$q_1$]{%
        \includegraphics[width=0.25\linewidth]{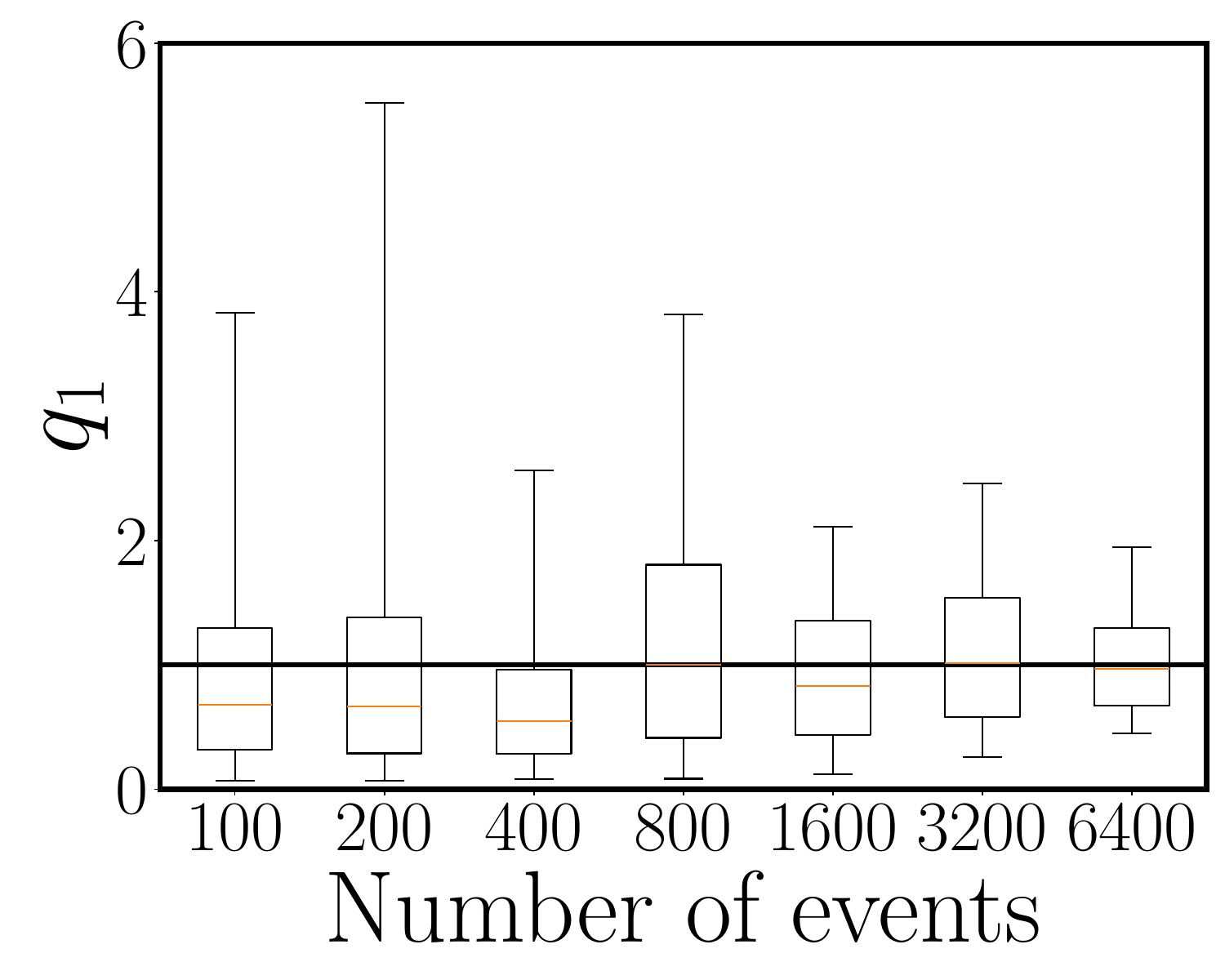}%
    }
    \subfloat[$q_2$]{%
        \includegraphics[width=0.25\linewidth]{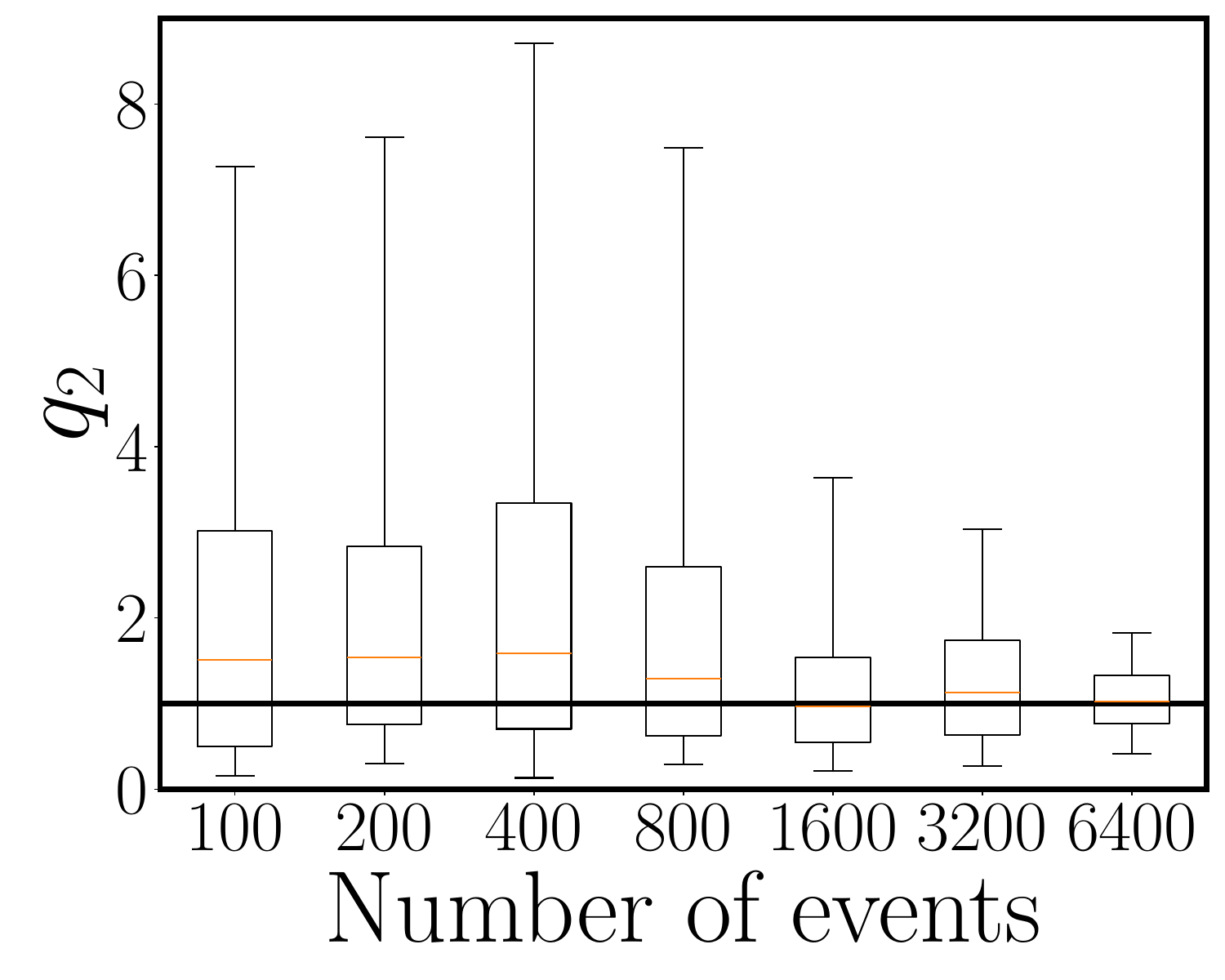}%
    }
    \caption{\textit{Convergence with respect to the number of events} --- Box plots of the parameters with respect to the number of events. Whiskers represent 5\% and 95\% quantiles. The plain line represents the true parameter.}
    \label{fig:box_plot_parameters_nevents}
\end{figure}

\begin{figure}
    \centering
    \subfloat[$\mu_1$]{%
        \includegraphics[width=0.25\linewidth]{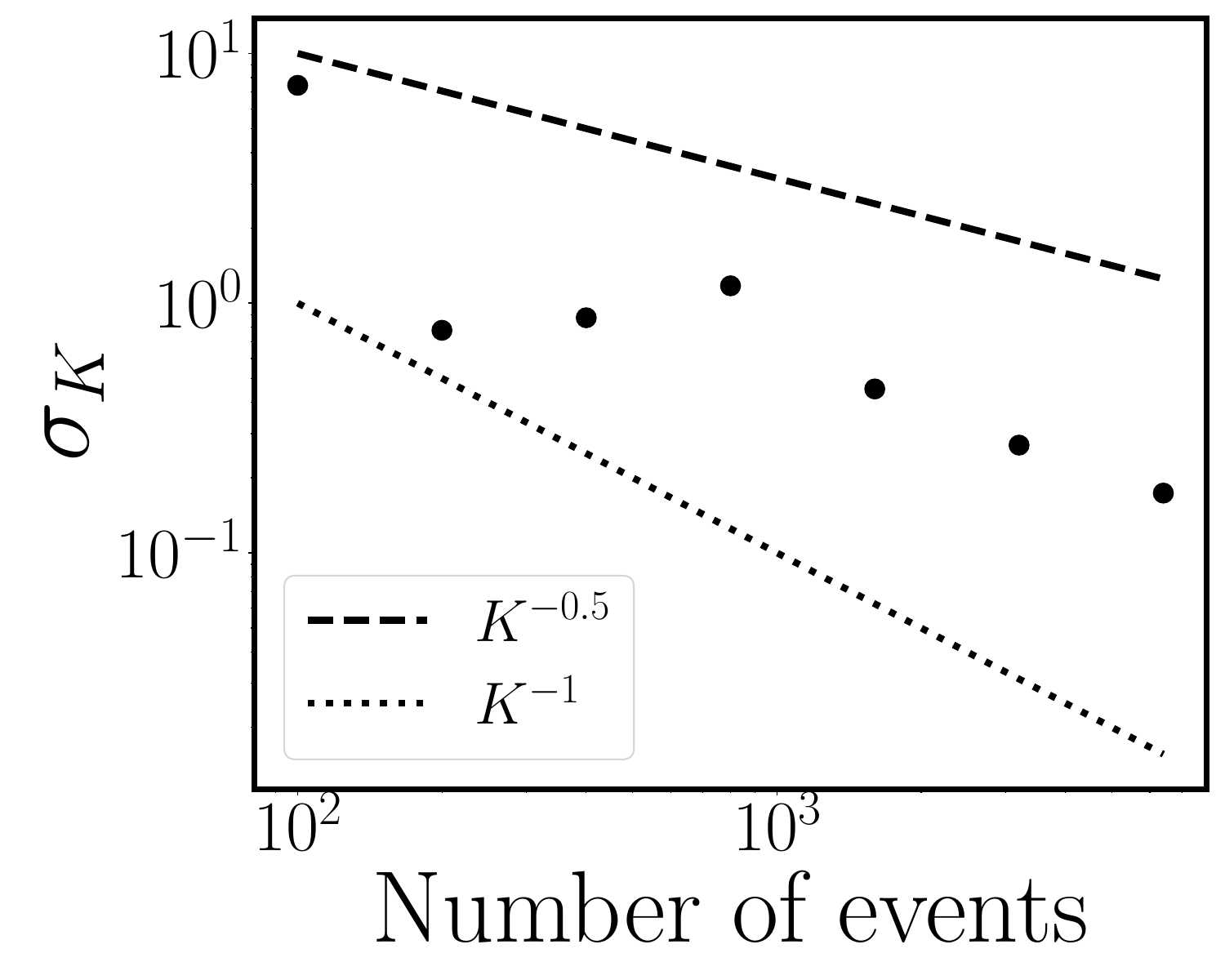}%
    }
    \subfloat[$\mu_2$]{%
        \includegraphics[width=0.25\linewidth]{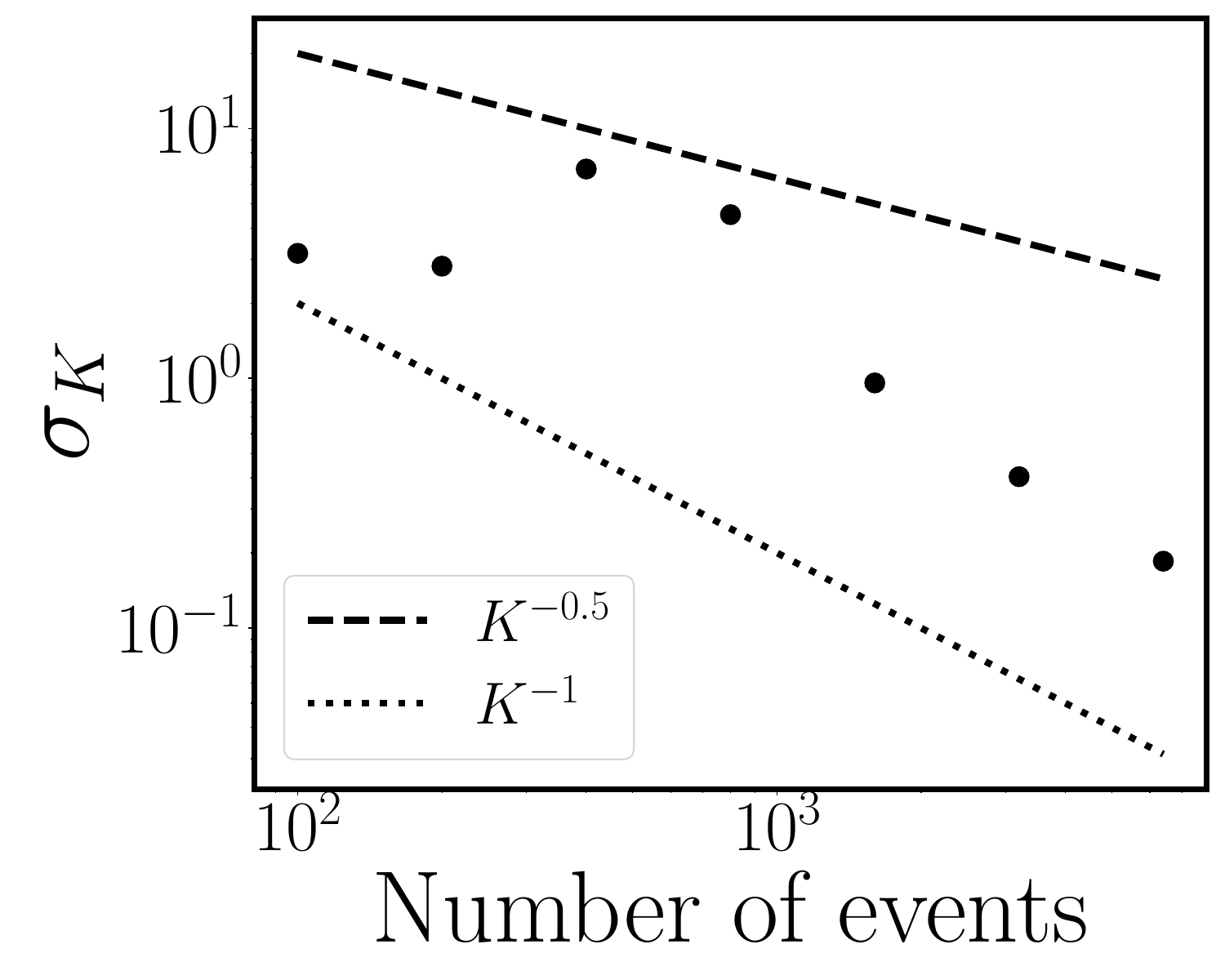}%
    }
    \subfloat[$\alpha_1$]{%
        \includegraphics[width=0.25\linewidth]{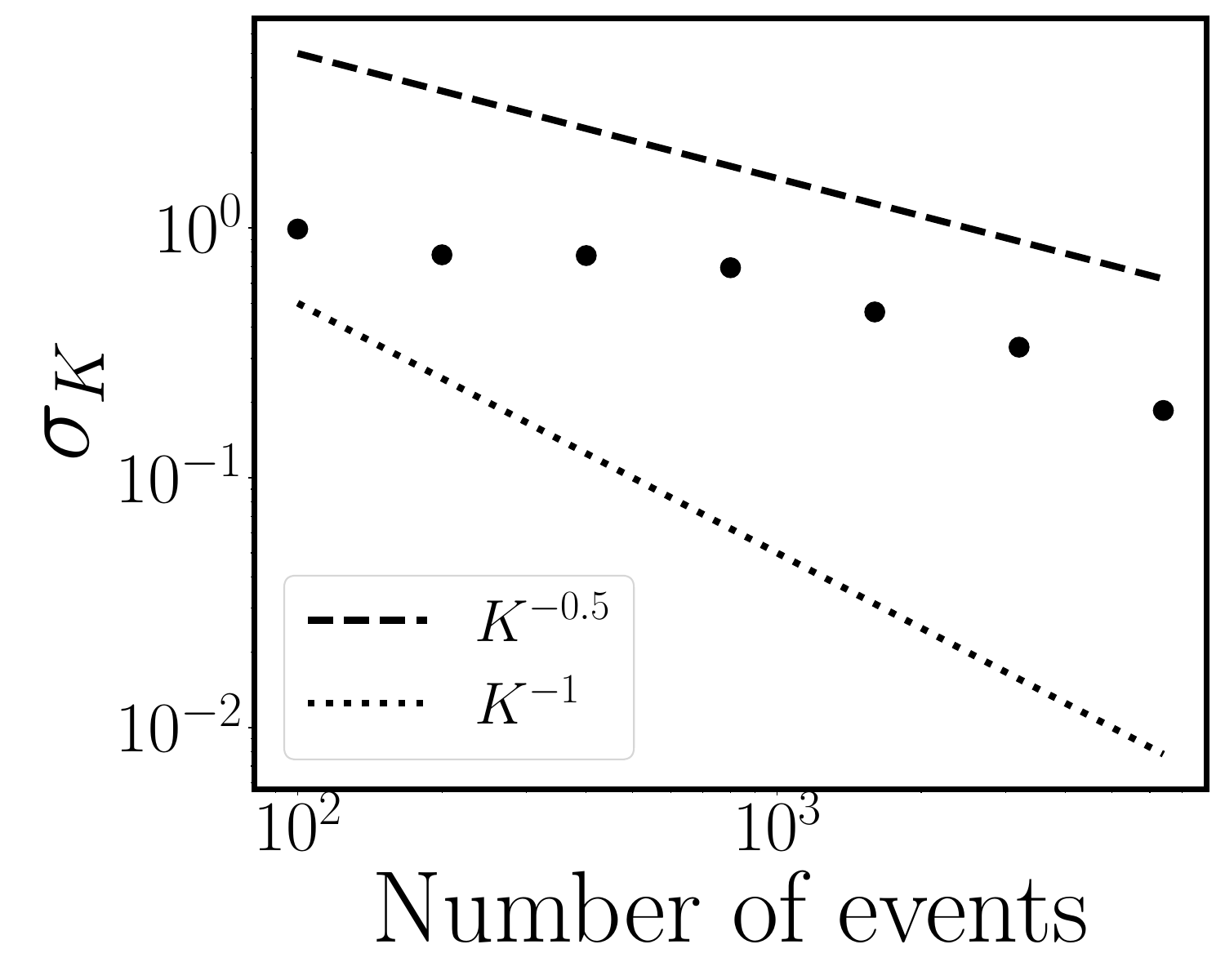}%
    }
    \subfloat[$\alpha_2$]{%
        \includegraphics[width=0.25\linewidth]{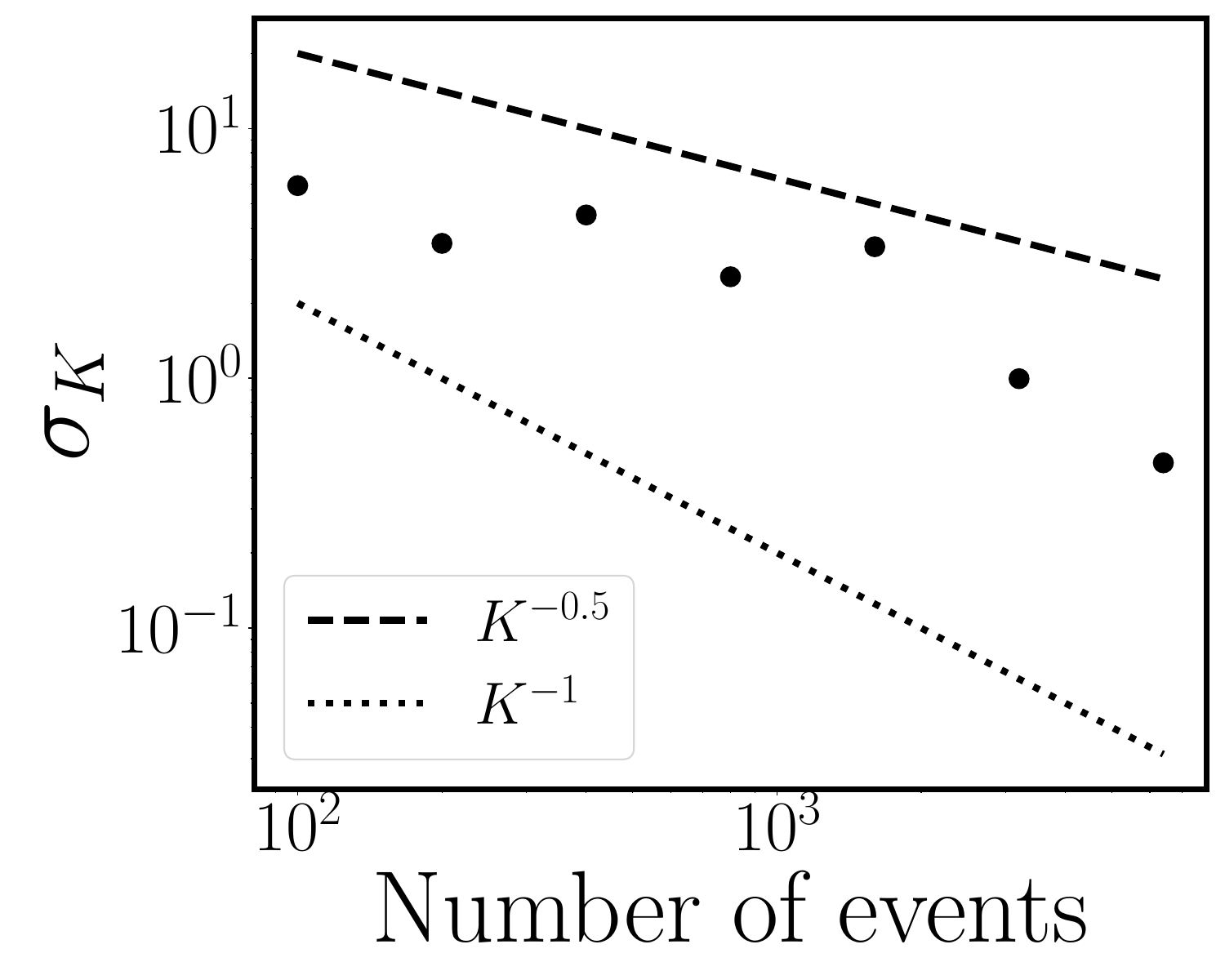}%
    }\\
    \subfloat[$\beta_1$]{%
        \includegraphics[width=0.25\linewidth]{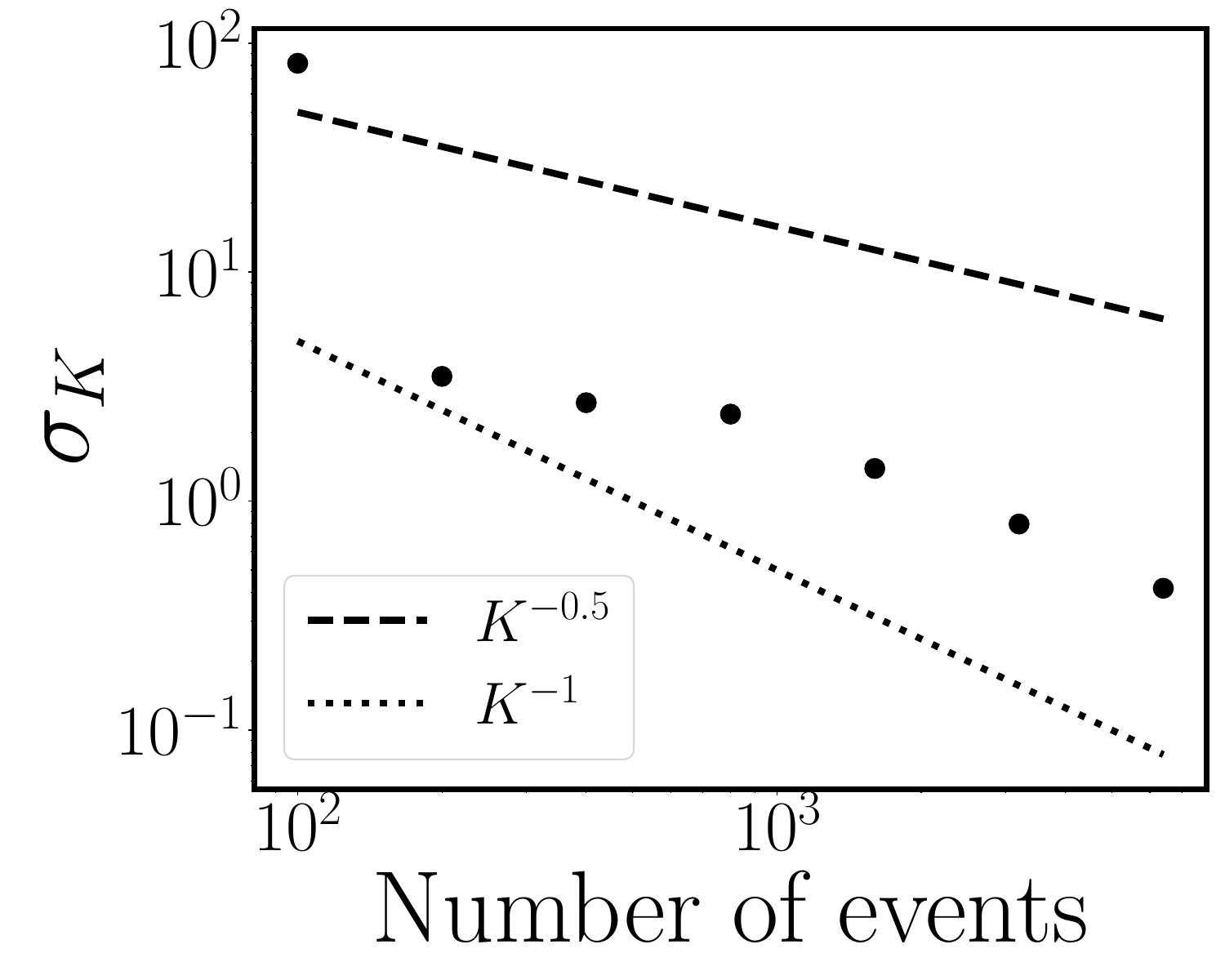}%
    }
    \subfloat[$\beta_2$]{%
        \includegraphics[width=0.25\linewidth]{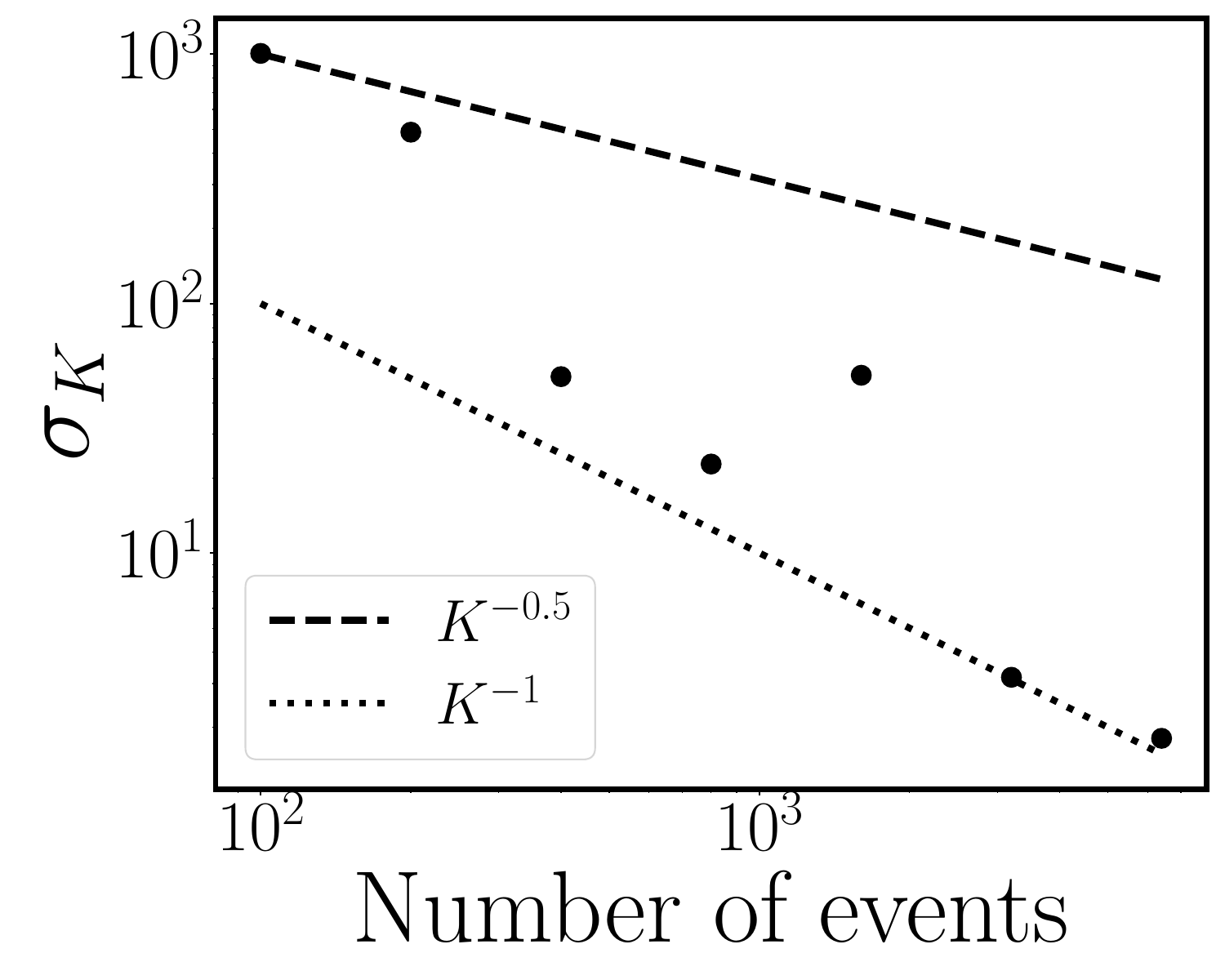}%
    }
    \centering
    \subfloat[$q_1$]{%
        \includegraphics[width=0.25\linewidth]{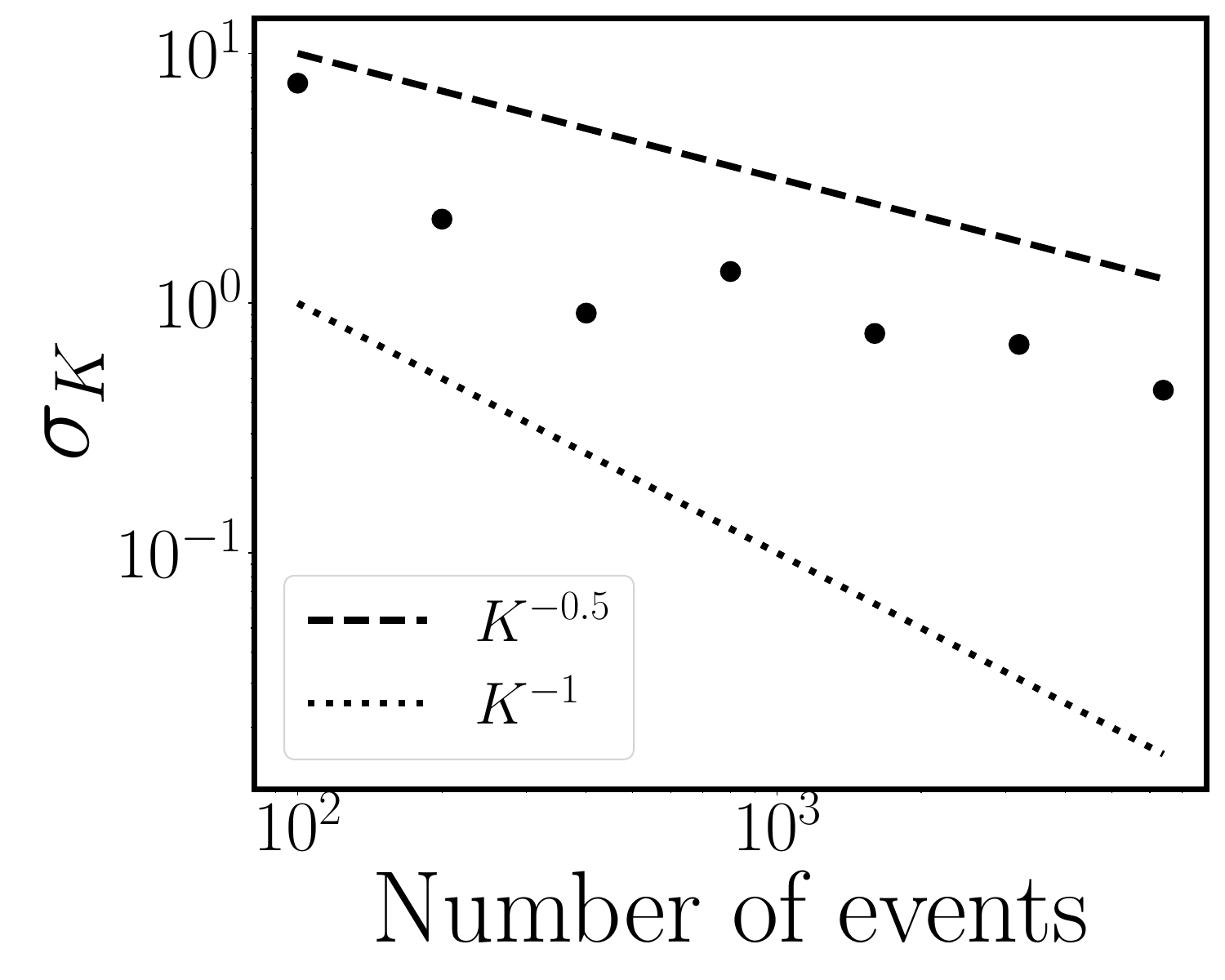}%
    }
    \subfloat[$q_2$]{%
        \includegraphics[width=0.25\linewidth]{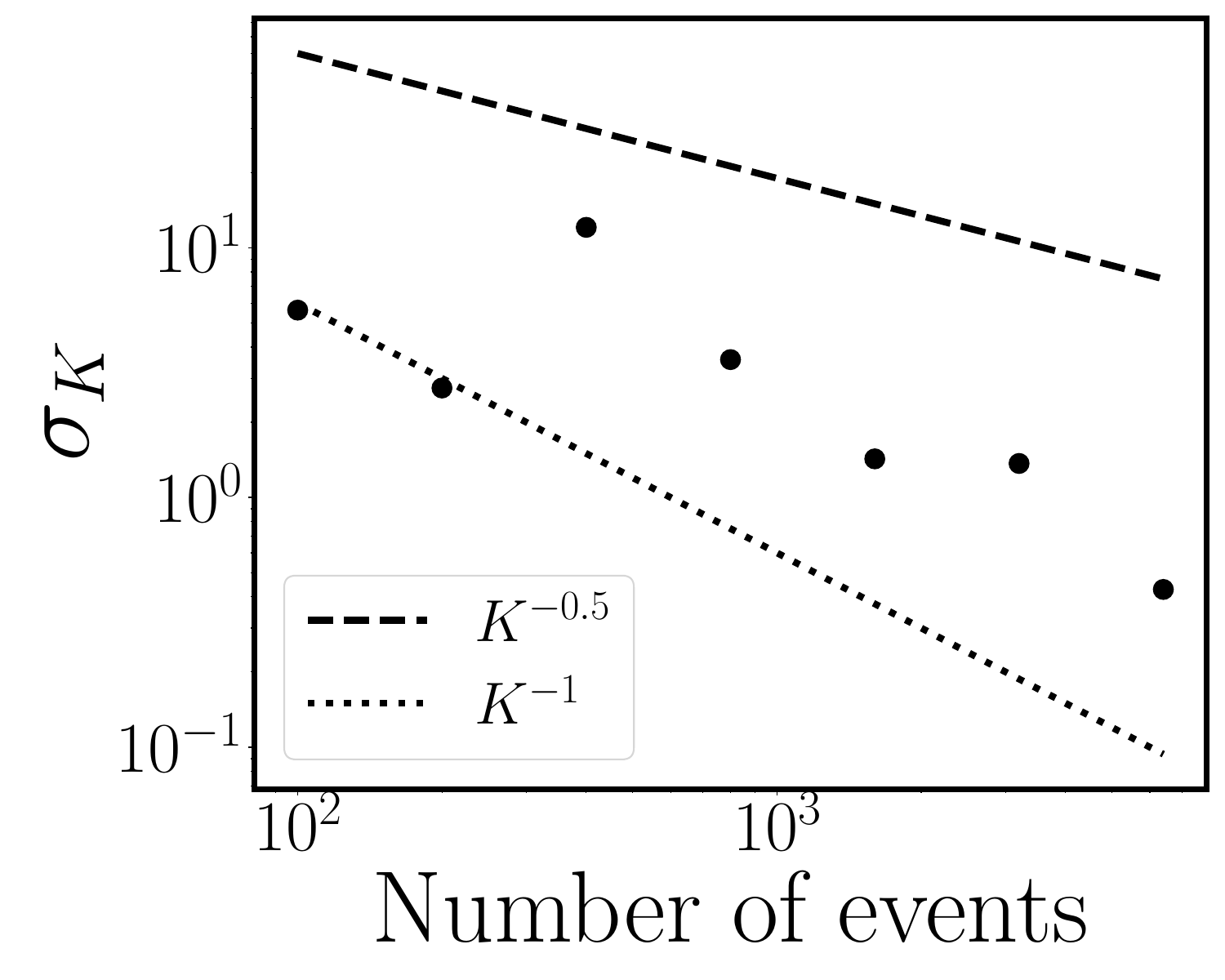}%
    }
    \caption{\textit{Convergence with respect to the number of events} --- Standard deviation of the estimated parameters with respect to the number of events.}
    \label{fig:box_plot_parameters_rate_nevents}
\end{figure}

\paragraph{Quality of the estimation with respect to $\delta$} We simulate 100 samples of 6,400 events of the MMHP-$\delta$ with $\delta\in\{0.01, 0.1, 1\}$ . The boxplots of the EM estimators as functions of $\delta$ are displayed in appendix, in Figure \ref{fig:box_plot_parameters_convergence_delta}. We observe that the algorithm is able to retrieve the true parameters in all cases.

\paragraph{Convergence towards the MMHP with continuous kernel} As $\delta\rightarrow0$, the kernel with $\delta$-piecewise constant kernel should tend to its continuous version. We now validate this assertion on simulations. We simulate 100 samples of the MMHP with continuous decay and estimate its $\delta$-stepwise decay version with varying $\delta$. After each estimation, we compute the transformed inter-arrival times given by Equation \eqref{eq:gof_compensator} and compute the p-value of the Kolmogorov-Smirnov test with respect to the exponential distribution. The results are displayed in Figure \ref{fig:qq_box_plots_convergence_continuous_kernel}. We show the box plots of the p-values with respect to the number of events and the average QQ-plot with respect to the exponential distribution. We observe that as $\delta\rightarrow0$, we are not able to reject the null hypothesis of an exponentially distributed transformed inter-arrival time, meaning that the model approximates its continuous counterpart for small $\delta$. The QQ plot provides more evidence as it becomes closer to the line $y=x$ as $\delta$ gets smaller. Additional material is displayed in Figure \ref{fig:box_plot_parameters_convergence_continuous_kernel} of the Appendix concerning the convergence of the parameters towards the true values of the continuous MMHP.

\begin{figure}
    \centering
    \subfloat[KS test, $\delta = 1$]{%
        \includegraphics[width=0.25\linewidth]{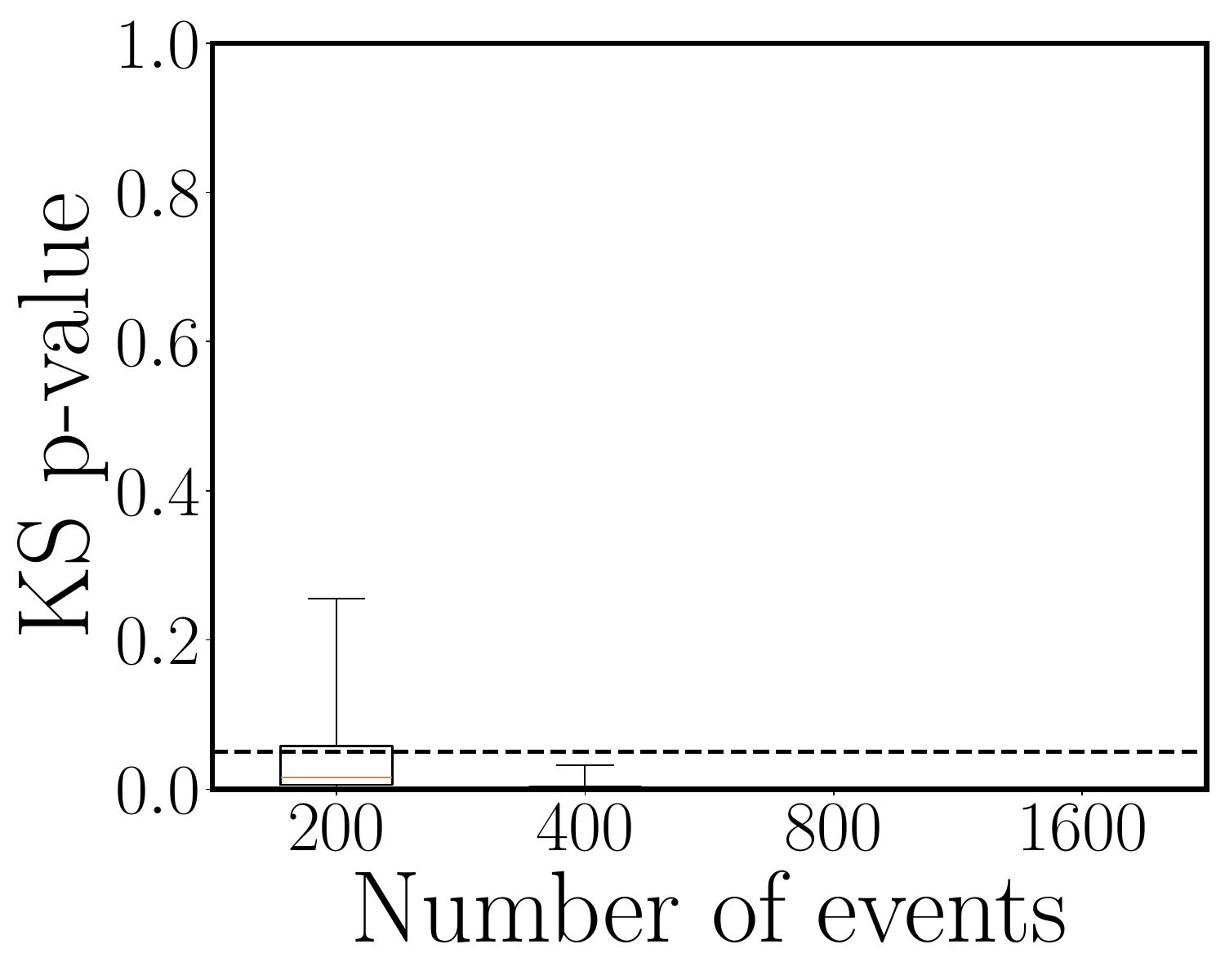}%
    }
    \subfloat[KS test, $\delta = 0.1$]{%
        \includegraphics[width=0.25\linewidth]{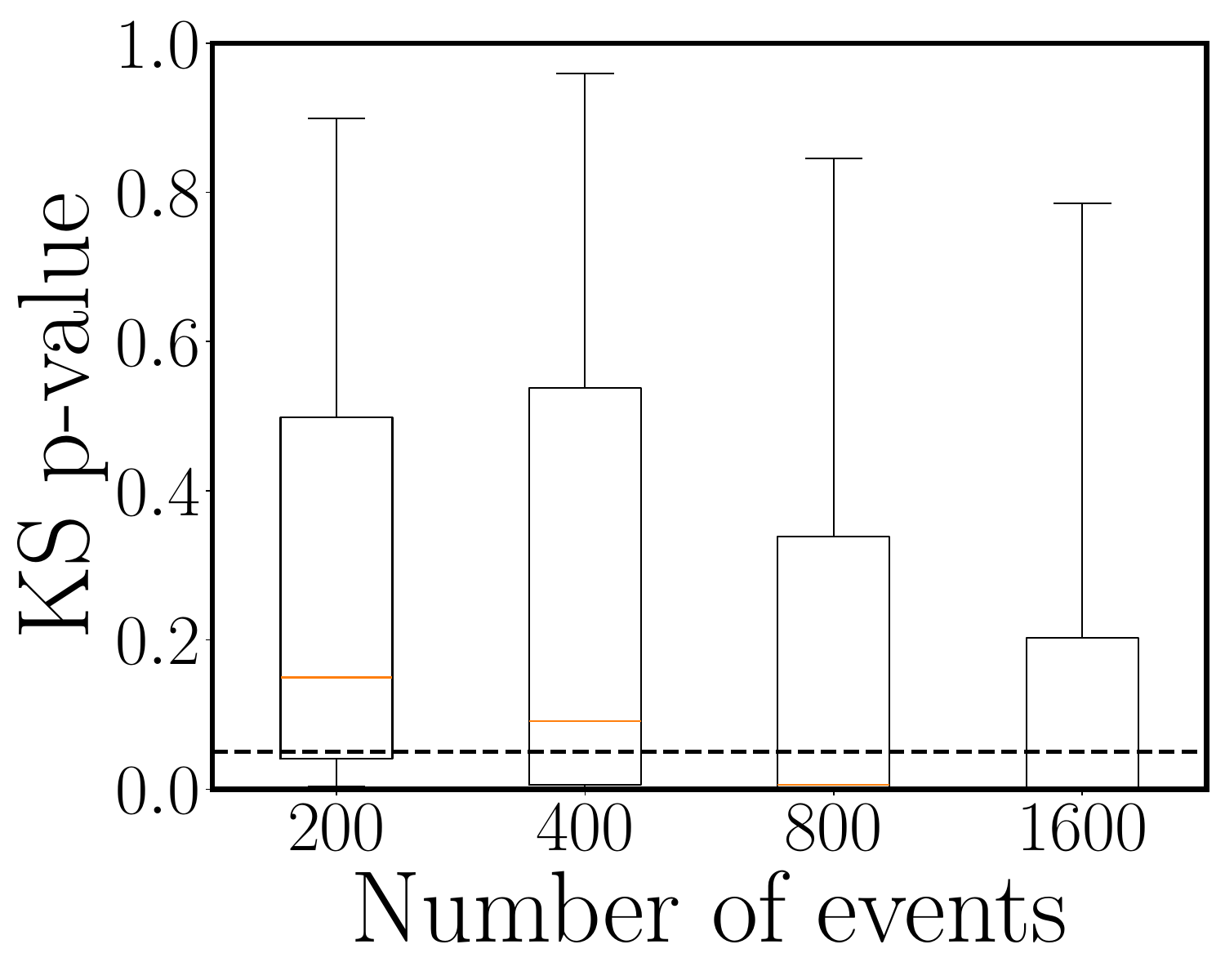}%
    }
    \subfloat[KS test, $\delta = 0.01$]{%
        \includegraphics[width=0.25\linewidth]{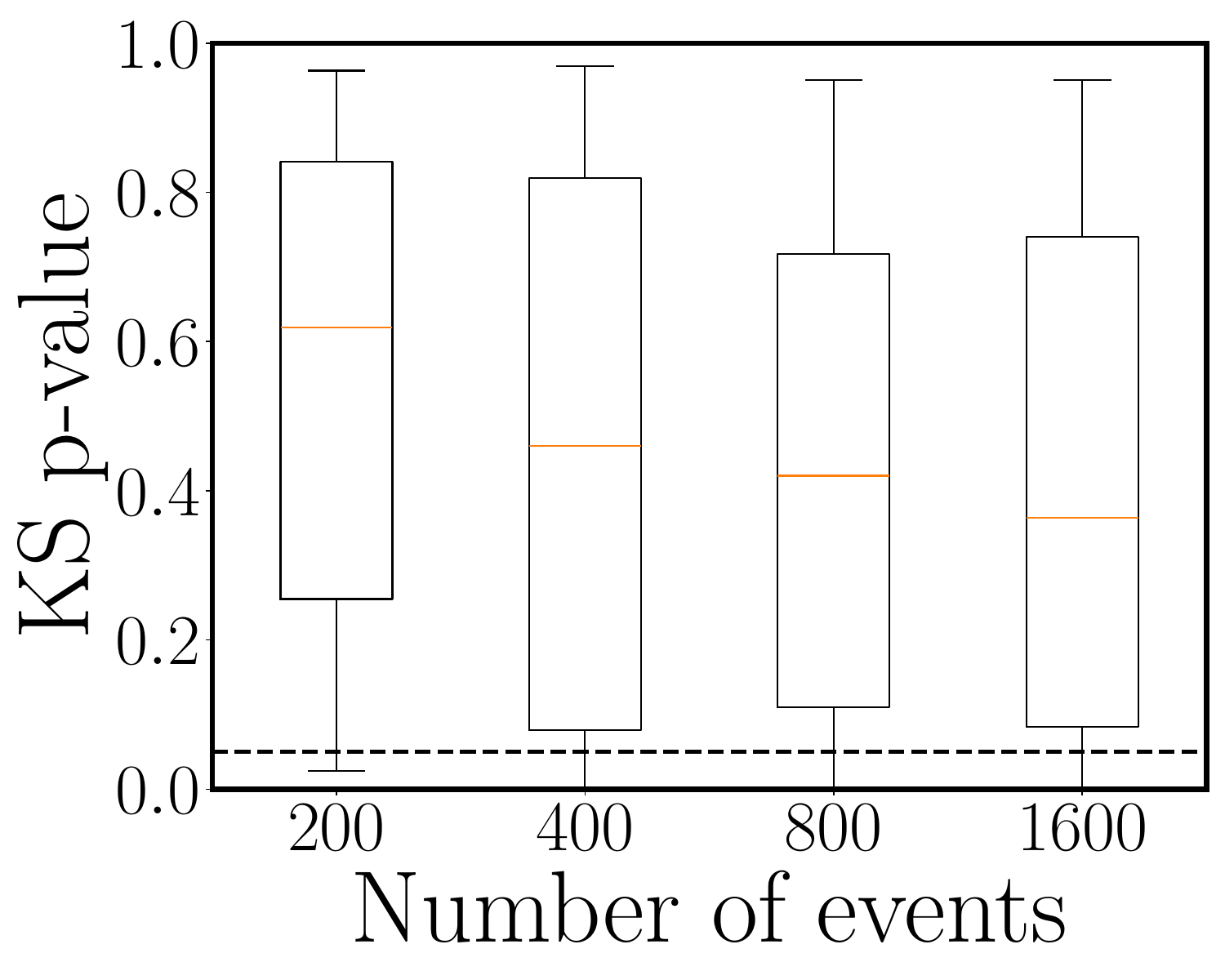}%
    }
    \subfloat[Average QQ plots and 95\% confidence intervals]{%
        \includegraphics[width=0.25\linewidth]{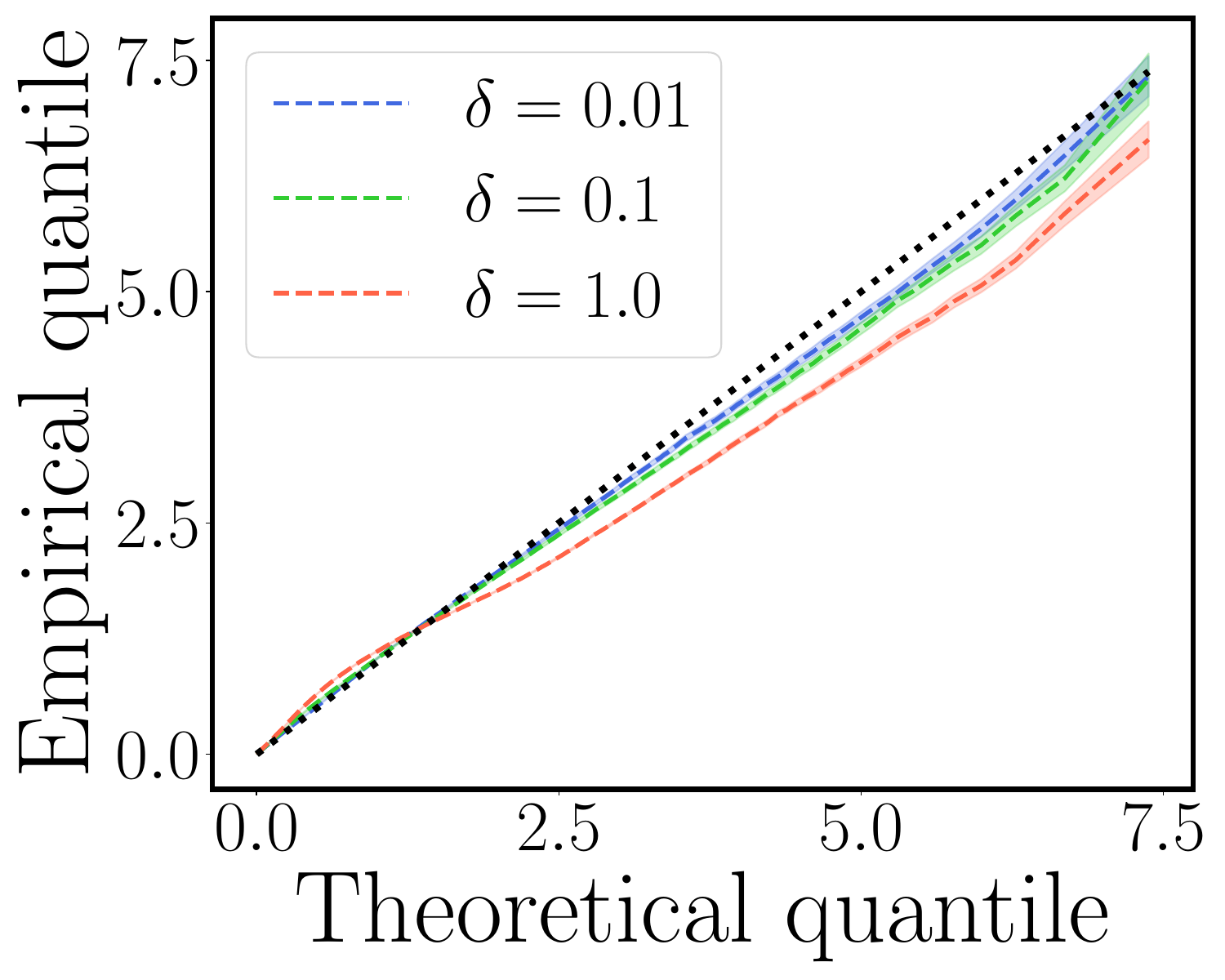}%
    }
    \caption{\textit{Convergence towards the continuous MMHP} --- Box plots of the p-values of the Komogorov-Smirnov test for three values of $\delta$, and average QQ-plots of the simulation experiments with respect to the parameter $\delta$. For the box plots, the dashed line represents the 5\% threshold and whiskers represent 5\% and 95\% quantiles. For the QQ plot, dashed lines represent the average empirical quantiles, and shaded areas represent its 95\% confidence intervals.}
    \label{fig:qq_box_plots_convergence_continuous_kernel}
\end{figure}

\section{Application}
\label{section:application}

In this section, we apply the MMHP-$\delta$ model to cryptocurrency trade data on a centralized exchange (CEX) and show its ability to identify suspicious trading patterns. We compare the model with the Markov-modulated Poisson process (MMPP).

\subsection{Data and intensity seasonality}

The dataset is composed of trades executed for SEI-USD between December 1st, 2023 and March 31st, 2024 on Coinbase. The SEI token was issued on CEXs in 2023 which makes it a great candidate for suspicious trading detection. With every trade comes a timestamp with microsecond precision. This timestamp corresponds to the time at which the matching engine of the market processed the transaction. Transactions are aggregated according to their order identification number allowing to recover the full marketable orders. 

It is well known that trade data exhibits non-stationary properties such as intraday seasonality, \textit{i.e.} the intensity of trades vary significantly throughout the day, and a part of these variations are predictable (traditional market opening and closing times). Applying the MMHP model on such non-stationary data could lead to inaccurate estimations of the parameters (even though previous empirical results of Hawkes processes suggest that fits are rather robust with respect to seasonality in financial timeseries \citep{rambaldi2017role,fabre2024neural}). Hence, we apply the following procedure. We compute the trading intensity at time $t$, denoted by $\bar{\Lambda}(t)$, over 5-minutes bins each day. Then, we compute the average of these intensities per bin over the time period, with a 95\% confidence interval. We also compute the daily trading intensity each day, and its average over the time period per weekday, with a 95\% confidence interval. The results are displayed in Figure \ref{fig:intensity_seasonality}. We observe an intraday seasonality, and a weekday seasonality with less trading intensity during the weekend. The intensity appears to be reasonably stable from 05:00 to 12:00 UTC, exhibiting smaller variance and no apparent drift. We therefore restrict our sample by focusing on the time interval 5:00-12:00 UTC every day. Furthermore, in order to account for weekday seasonality, when we estimate the model on a given weekday, we test the model on the corresponding day of the following week. For example, the first model is estimated on December 1st, 2023, and then tested on December 8th, 2023.

\begin{figure}[!h]
    \centering
    \subfloat[Weekday seasonality]{%
        \includegraphics[width=0.45\linewidth]{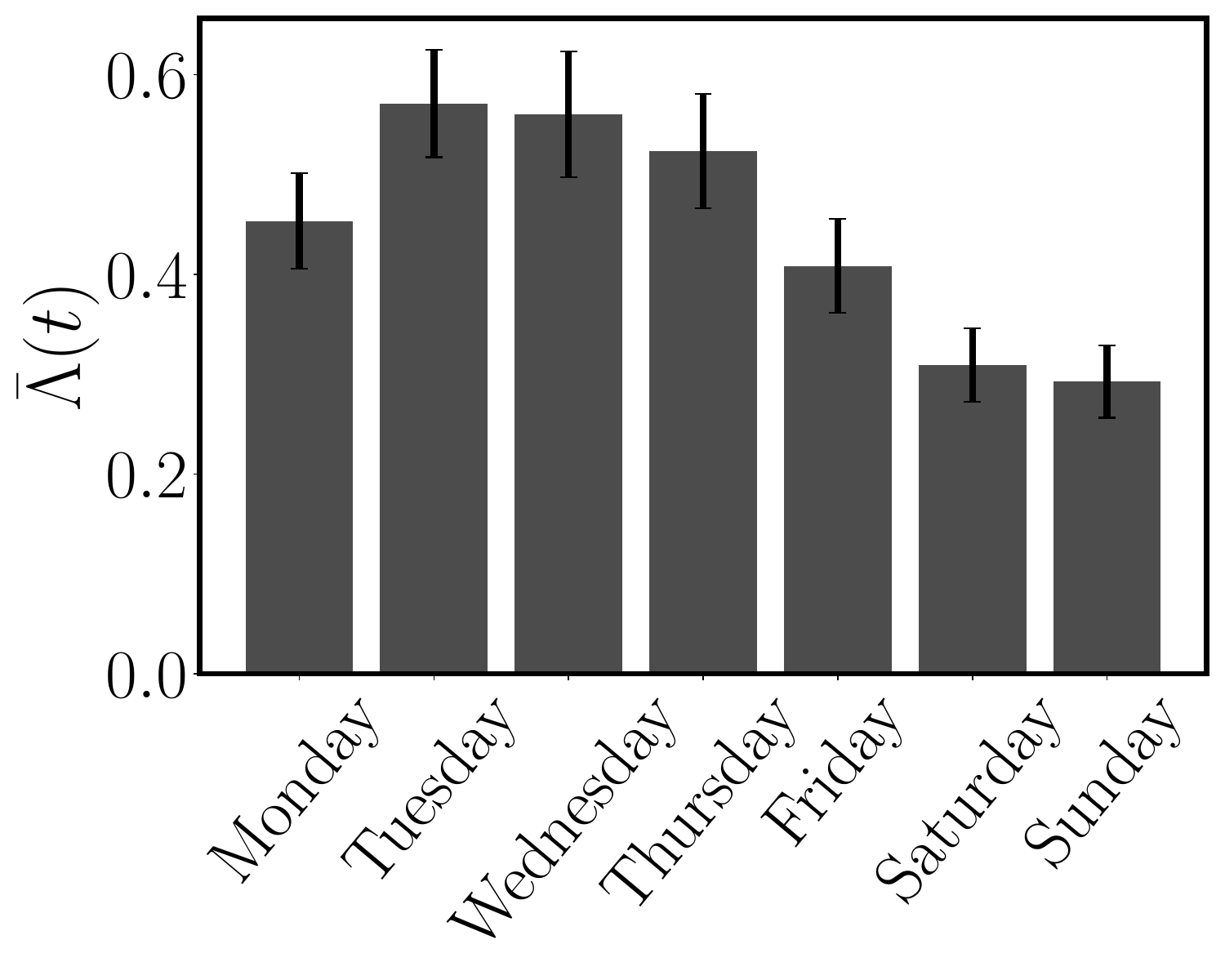}%
    }~~~
    \subfloat[Intraday seasonality]{%
        \includegraphics[width=0.45\linewidth]{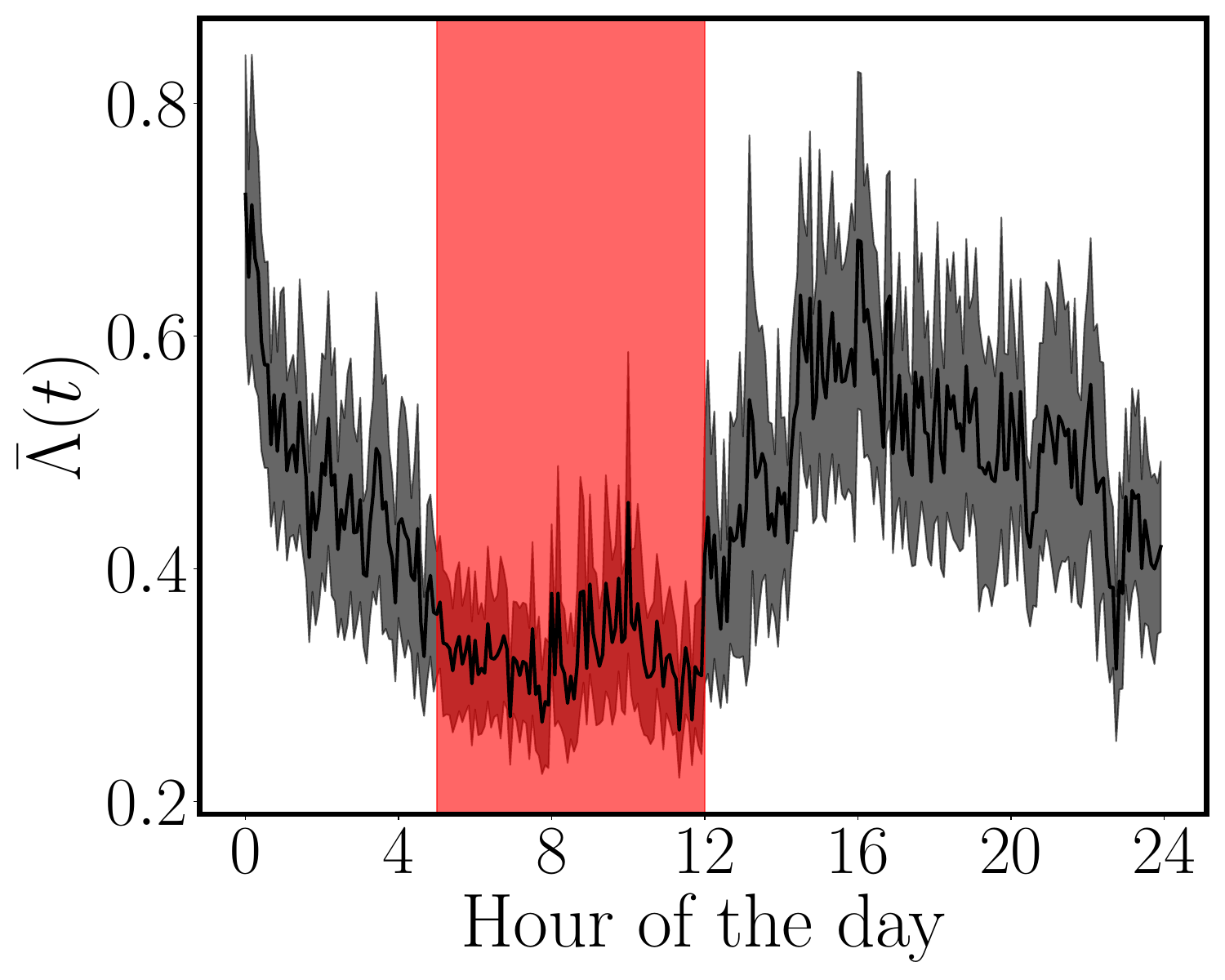}%
    }
    \caption{\textit{Intraday seasonality} --- Intensity of marketable orders expressed in number of events per second, computed over the day for the weekday seasonality, and over bins of 5 minutes for the intraday seasonality. The intensity is averaged over the 4 months. The error bar of the weekday seasonality plot and the dark shade of the intraday seasonality plot both represent the 95\% confidence interval. The red shaded aread represents the time period from 5:00 to 12:00 that is used in the empirical experiment.}
    \label{fig:intensity_seasonality}
\end{figure}

\subsection{Detection of suspicious traded volume}

\subsubsection{Characterization of high-frequency wash trading}

The Committee of European Securities Regulators (CESR) defines wash trading as ``the practice of entering into arrangements for the sale or purchase of
an financial instrument where there is no change in beneficial interests or market risk or where the transfer of beneficial interest or market risk is only between parties who are acting in concert or collusion''\footnote{\url{https://www.esma.europa.eu/esmas-activities/markets-and-infrastructure/market-integrity}}. From this definition, wash trades can either be executed by a single trading account or by multiple trading accounts. In practice, most of the large centralized exchanges have implemented a self-trade prevention (STP) mechanism in their matching engine. By enforcing STP, venues prevent agents to trade against themselves, mitigating the risk of fake volume generation. Nevertheless, a trader can still create many accounts and execute wash trades without being suspected.

The desired effect of wash trading activity is a substantial increase of traded volume, appearing as information to other agents, thus triggering \textit{bona fide} executions. Note that centralized exchanges can also profit from this market abuse since a venue's ranking greatly depends on its reported traded volume. 

From a limit order book perspective, wash trading may either be materialized by the execution of one large trade against one large limit order, or by the execution of multiple trades against one or multiple limit orders. The former tactic generates trading patterns that are not difficult to spot: one may for example monitor the trade sizes and investigate large trades that consume entirely the first limit but do not trade-through. However, the latter wash trading method splits one large trade into many smaller ones with different sizes, making it much more trickier to identify. Furthermore, the second tactic not only manipulates the traded volume, but also the arrival rate of transactions. Since this rate is often a key feature of price formation models used by trading engines, the second method may therefore trigger manipulated order flows. We therefore focus on this tactic in the remainder of the section.

\subsubsection{Suspicious trading detection with the MMHP-$\delta$ model}

Let us denote by $K^b$ and $K^a$ the number of bid and ask trades and let $(P_n^{b, a})_{1\leq n\leq K^{b,a}}$ be the sequence of prices of trades hitting the bid side --- indexed by b --- or the ask side --- indexed by a --- and let $(t_n^{b, a})_{1\leq n\leq K^{b,a}}$ be the corresponding sequence of arrival times. When the execution of an order leads to many trades, \textit{e.g.} a trade-through, the trade price is the one associated to the first transaction.

We apply the MMHP-$\delta$ model to the detection of extreme bursts of transactions that occur on the same side of the limit order book. We apply a filter to the trades that are used in the estimation. Since the transactions that compose a wash trading tactic are executed against the same price limit, we restrict the analysis to trades with zero price return, \textit{i.e.} we consider the sub-sample of arrival times $\{\tau_k^{b,a}\}:=\{t_n^{b, a}:\,P_n^{b, a}-P_{n-1}^{b, a}=0\}$.

We estimate the parameters of a MMHP-$\delta$ with exponential kernel on the observations $\{\tau_k\}$ on each trading day from December 1st, 2023 to March 24th, 2024. A goodness-of-fit procedure is applied to each model on the same weekday of the following week, from December 8th, 2023 to March 31st, 2024. We use 3 different values of $\delta$, which are $\{100, 10, 1\}$ seconds. Note that a MMHP-$\delta$ with $\delta = 100$s could be considered as a proxy of the MMHPSD process of \cite{wang2010statistical}. For each parameter $\delta$, two models are estimated separately, one for trades on the bid side (sell trades), and one for trades hitting the ask side (buy trades). 

Once the time step $\delta$ is set, one needs to specify a number of regimes $M$. A common practice is to estimate the model with different values of $M$, and then select the model that minimizes an information criterion such as the Akaike information criterion (AIC) or the Bayesian information criterion (BIC), see \citet{psaradakis2003determination, krolzig2013markov}. Hence, we estimate the models for $M\in\{2, 3, 4\}$ on each day, compute the corresponding AICs, and rank the models from 1 to 12 (best to worse) on each day. We report the median rank of each model in Table \ref{table:median_aic}. Firstly, the MMHP-$\delta$ demonstrates a superior fit compared to the MMPP. Furthermore, it is evident that increasing the number of regimes enhances the quality of fit as measured by the information criterion. Additionally, for each regime configuration, a reduction in the discretization step $\delta$ appears to improve the criterion. These findings suggest evidence that the MMHP-$\delta$ model with stepwise decreasing kernel provides a better representation of the data than the MMHPSD model proposed by \citet{wang2010statistical}. Although the results indicate that the MMHP-$\delta$ models with four regimes and small $\delta$ minimize the AIC criterion, we stress that going further in terms of the number of regimes was not possible due to limited computational resources. As an example, for the estimation of the MMHP-$\delta$ over a single day, the computation time is approximately 15 minutes with $\delta=100$s and around ten hours with $\delta=1$s. Indeed, with exponential kernels, the computation of the likelihood has a time complexity that is of the order of $\mathcal{O}(N+T\delta^{-1})$, where $N$ is the sample size, and $T$ is the size of the period in calendar time.

\begin{table}[!h]
   \small
   \centering
   \caption{\textit{Selection of the discretization step $\delta$ and the number of regimes $M$} - Median rank of each model among the 12 models, with respect to the AIC criterion.}
   \begin{tabular}{c|cccc}
   \toprule
   M & $\delta=1$ & $\delta=10$ & $\delta=100$ & MMPP \\ 
   \midrule
   2 & \textbf{9} & 10 & 11 & 12 \\
   3 & \textbf{4} & 5 & 6 & 8 \\
   4 & 3 & \textbf{2} & 4 & 5 \\
   \bottomrule
   \end{tabular}
   \label{table:median_aic}
\end{table}

In the rest of this section, we set the number of states to $M=3$, which represents a trade-off between flexibility to capture extreme bursts and computational time. The states are sorted from 1 to 3 with respect to the excitation parameter $\alpha$ of the exponential kernel form $\phi(t)=\alpha e^{-\beta t}$. The larger the hidden state, the larger this parameter. As a benchmark, an MMPP is estimated with $M=3$ regimes. 

We first provide a goodness-of-fit evaluation by computing the transformed residuals of the out-of-sample durations from December 8th, 2023 to March 31st, 2024. In Figure \ref{fig:qq_plot_benchmark_wash_trading}, we show the median QQ plot with its 95\% confidence interval for the MMPP and all MMHP-$\delta$. The median is computed on the daily out-of-sample QQ plots, and the confidence intervals are estimated using bootstrapping. We observe the MMHP-$\delta$ with small $\delta$ (1 or 10) provides better results than the MMPP in terms of out-of-sample goodness-of-fit. Moreover, as expected, setting a large time step $\delta$ seems to worsen the QQ plot. In the rest of the numerical experiments, the results that involve the MMHP-$\delta$ are obtained with $\delta=1$s. 

\begin{figure}
    \centering
    \subfloat[Bid side]{%
        \includegraphics[width=0.45\linewidth]{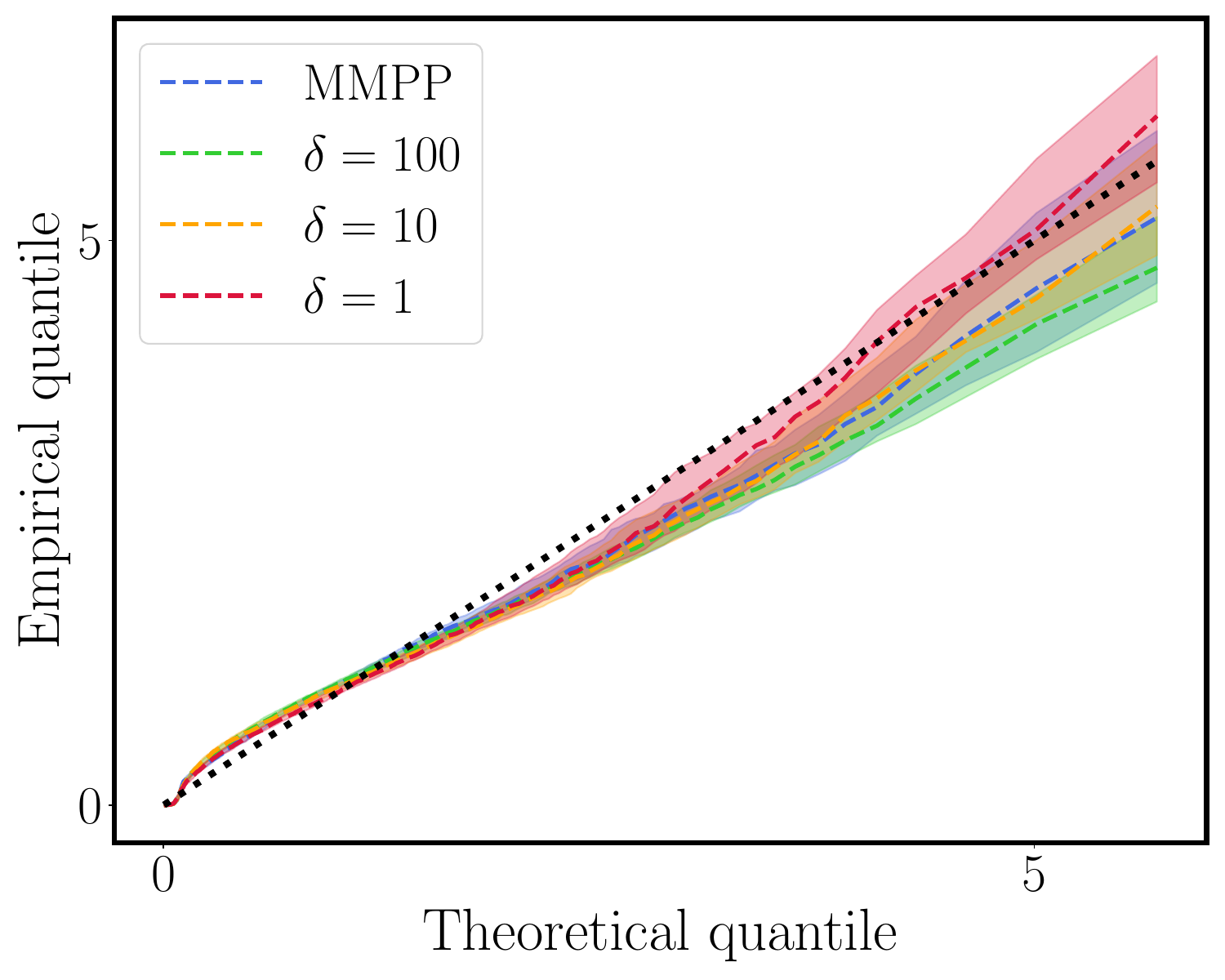}%
    }~~~
    \subfloat[Ask side]{%
        \includegraphics[width=0.45\linewidth]{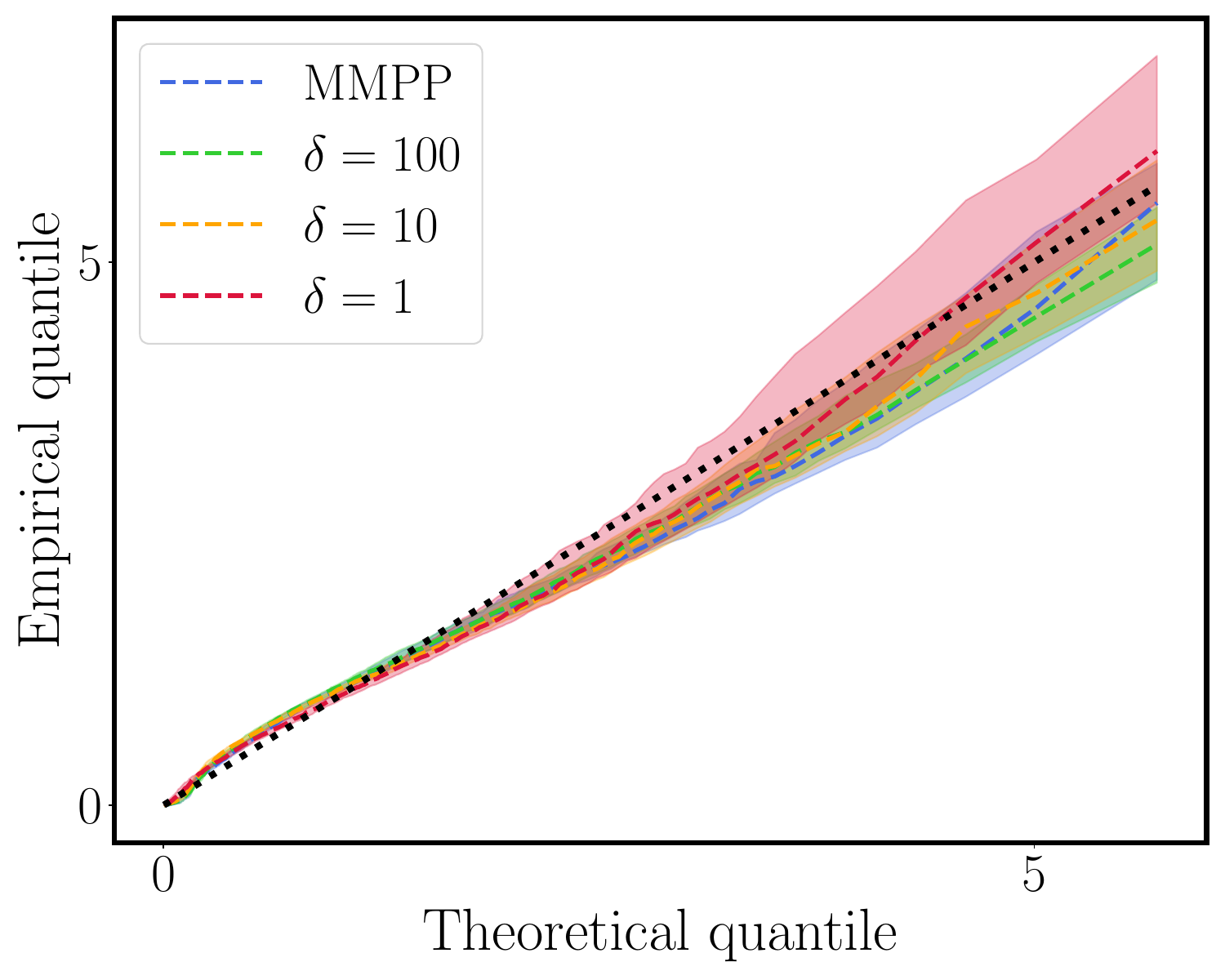}%
    }
    \caption{\textit{Goodness-of-fit} --- Median QQ plots and their 95\% confidence intervals, computed on a daily basis over the out-of-sample data from December 8th, 2023 to March 31st, 2024. Confidence intervals are computed via bootstrapping.}
    \label{fig:qq_plot_benchmark_wash_trading}
\end{figure}

Once the MMHP-$\delta$ model is calibrated, we run the Viterbi algorithm --- described in Algorithm \ref{algo:viterbi} --- to get the sequence of states from December 8th, 2023 to March 31st, 2024. In our setting, trades classified in state 3 are natural candidates for suspicious trading activity. The estimated stationary distributions indicate that the MMHP-$\delta$ model is 90.31\% of the time in state 1 (normal activity), 9.55\% of the time in state 2 (high activity), and 0.14\% of the time in state 3 (extreme bursts) --- here, note that time refers to calendar time. In terms of volume traded, suspicious trading represents 24.20\% of the buy volume and 21.39\% of the sell volume during the period, totaling 216M USD. For comparison, these figures are 25\% of the buy volume and 24.30\% of the sell volume for the MMPP model, totaling 233M USD. Hence, the introduction of a Hawkes component not only fits the data better but also makes the detection more conservative. As such, the MMPP could lead to more false positives than the MMHP-$\delta$ for the same number of regimes.

We show a typical example of suspicious trading activity that is spotted by the MMHP-$\delta$ on the sell trades, and we display an overview of the market at this specific time in Figure \ref{fig:bid_wash_trading_case_26122023}.
\begin{figure}
    \centering
    \subfloat[BBOs' dynamics and trade prices]{%
        \includegraphics[width=0.5\linewidth]{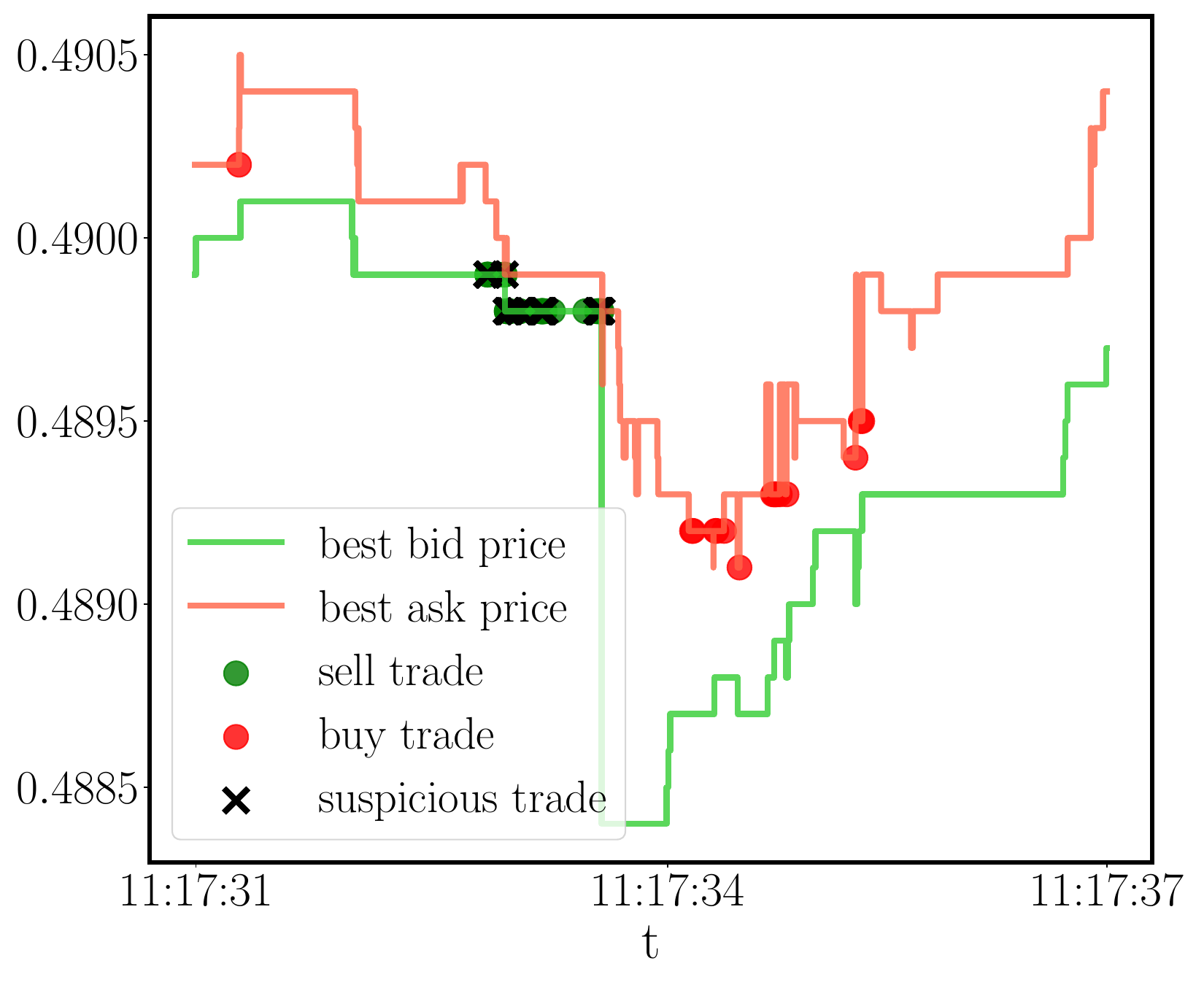}%
    }\hfill
    \subfloat[Trade sizes]{%
        \includegraphics[width=0.5\linewidth]{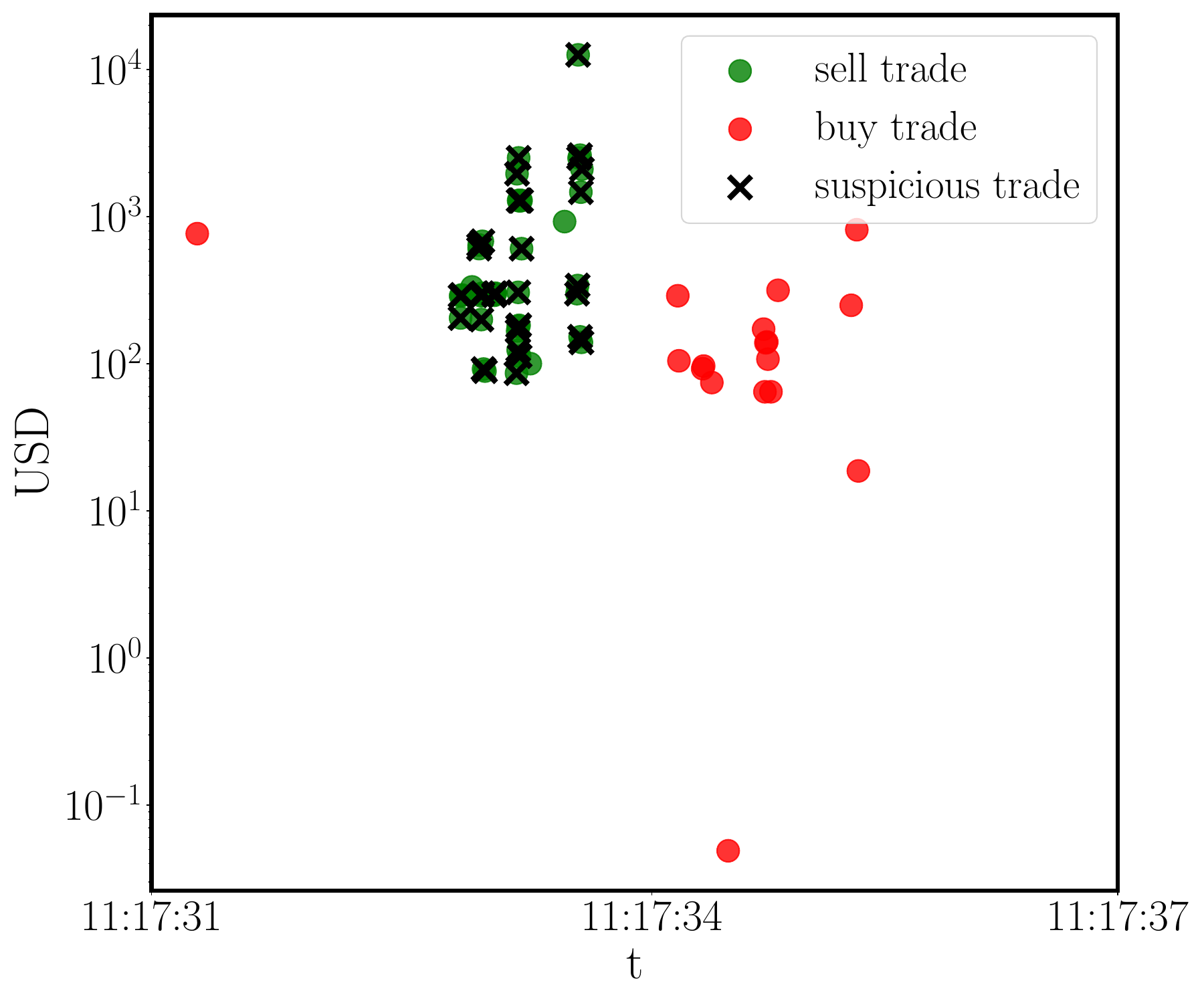}%
    }
    \caption{\textit{Suspicious trading activity} --- An example of ``suspicious'' behavior identified by the bid MMHP-$\delta$ as the extreme burst regime (state 3), anomalies detected on December 26th, 2023 UTC time zone.}
    \label{fig:bid_wash_trading_case_26122023}
\end{figure}
We observe that the model is able to detect sudden bursts of trades. Looking at the trade sizes reveals that tens to hundreds of orders are executed against the same price within a very small time frame, \textit{i.e.} of the order of milliseconds, providing the empirical evidence of a high-frequency activity. Even more interestingly, a significant price bump is observed following the trade burst. Such patterns indicate that this asset could be highly sensitive to fake volume generation. 

We provide more examples in the appendix. Figures \ref{fig:bid_wash_trading_case_06032024}, \ref{fig:ask_wash_trading_case_15012024}, and \ref{fig:ask_wash_trading_case_03032024} show how the joint use of a MMHP-$\delta$ model and the Viterbi algorithm detects suspicious trading activity and potential price manipulation in buy and sell trades.

\subsubsection{Characterization of market states}

Beyond suspicious trading detection, it is very interesting that we are able to characterize the market states identified by the MMHP-$\delta$ model. We now proceed to two experiments to that end. The first experiment analyzes the liquidity imbalance signal in the three market states, while the second one computes a price response function upon regime change.

\paragraph{Liquidity imbalance} We measure the liquidity imbalance at the best queues before a regime change. Denote by $q_k^b$ and by $q_k^a$ the volumes at the best bid queue and at the best ask queue at time $(\tau_k^b)^-$, \textit{i.e.} right before the $k$th event time. We define the liquidity imbalance measure

\begin{equation}
    \mathcal{I}_k:=\frac{q_k^b-q_k^a}{q_k^b+q_k^a}.
\end{equation}

We compute this measure at each state transition, \textit{i.e.} when a trade occurs without moving the price and the Markov chain switches to state $i$, $1\leq i\leq M$, we record the value of the liquidity imbalance $\mathcal{I}$ right before this event. A well-known stylised facts of market microstructure is the dependence of the arrival rate of trades on the best queue imbalance. In practice, one finds that the intensity of sell trades is a decreasing function of the imbalance (and an increasing function for the intensity of buy trades). But this measure does not distinguish between market regimes and heterogeneity of agents as it is computed over all data points.

\paragraph{Price response} In the spirit of \cite{bouchaud2003fluctuations}, we define a price response to the arrival of trades but we adapt the framework to a regime-specific price response. Let $p_n$ be the $n$th mid-price change, $\varepsilon_n$ the sign of the last passive trade that occurred before the $n$th mid-price change (with event time that is an element of $\{\tau_k^{b,a}\}$), and $S_n$ the state of the hidden Markov chain for this trade. We define the mid-price response function as

\begin{equation}
    \mathcal{R}(h,s):=\mathbb{E}\left(\varepsilon_n\left(\frac{p_{n+h}}{p_n}-1\right)\,\bigg|\,S_n=s,S_{n-1}\neq s\right), \hfill 1\leq h\leq H, \hfill 1\leq s\leq M.
\end{equation}

The response $\mathcal{R}(h,s)$ measures the average positive return conditioned on a trade at time 0, $h$ mid-price moves later, and conditioned to a regime transition to state $s$ at time 0.

The results are displayed in Figure \ref{fig:state_characterization_mmhp_wash}. For completeness, we also make the same experiment using the entire trading day, without accounting for intraday seasonality, and report the results in Appendix, in Figure \ref{fig:state_characterization_mmhp_wash_full_period}. In this case, we train the MMHP-$\delta$ on December 1st, 2023 and compute the metrics on the rest of the period.
\begin{figure}
    \centering
    \subfloat[Imbalance distribution before state transition, bid]{%
        \includegraphics[width=0.33\linewidth]{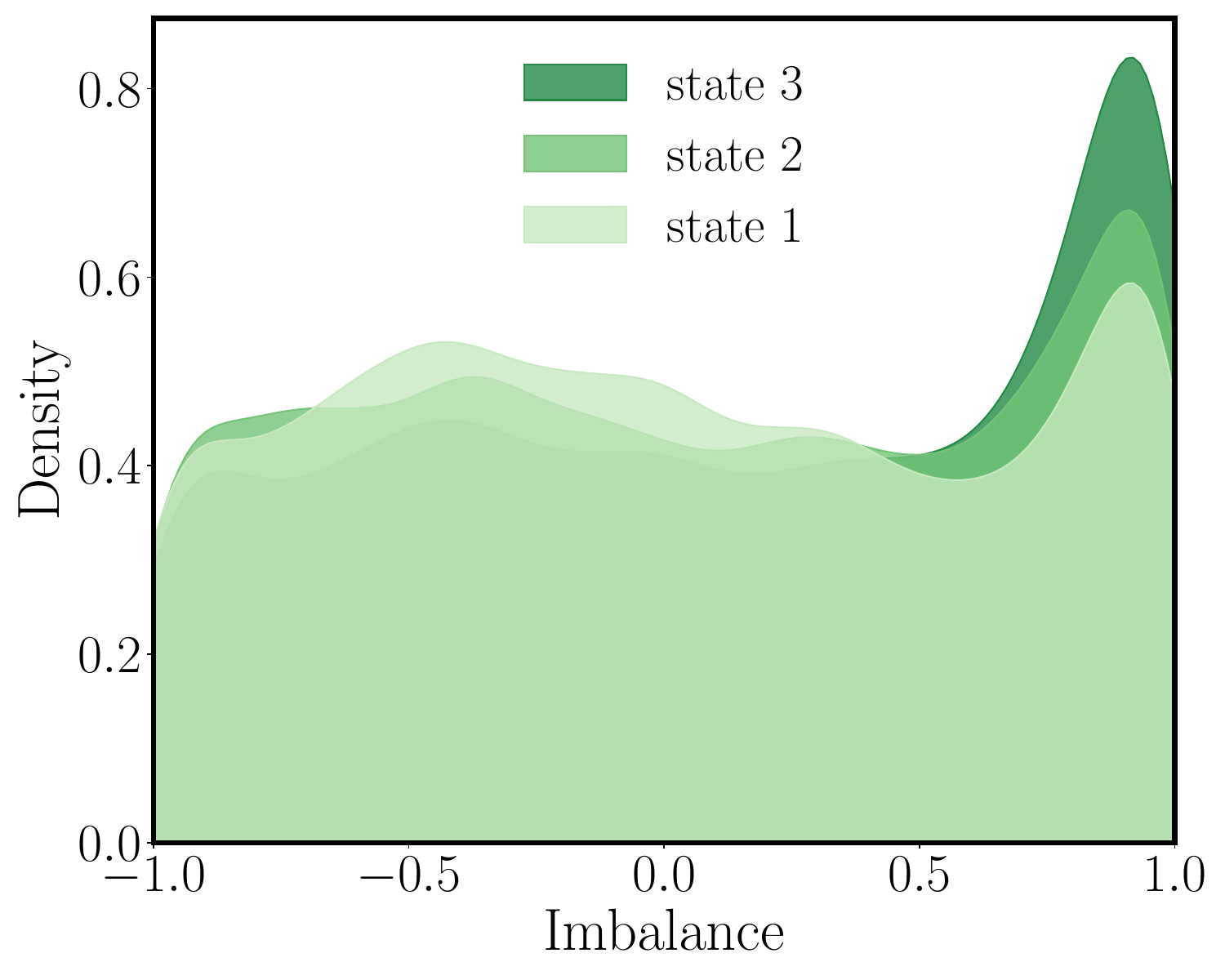}%
    }~~~
    \subfloat[Imbalance distribution before state transition, ask]{%
        \includegraphics[width=0.33\linewidth]{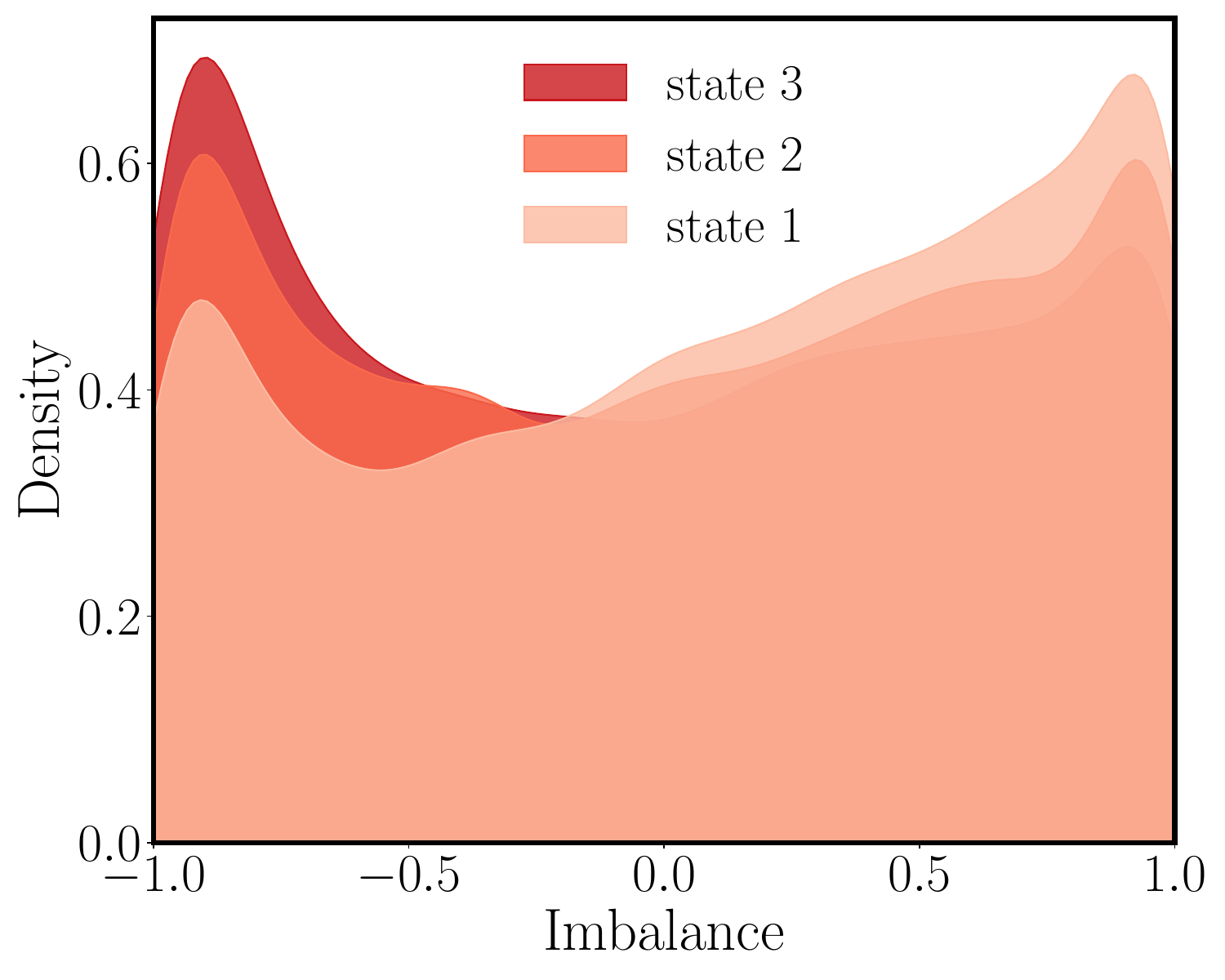}%
    }~~~
    \subfloat[Mid price response function after state transition with 95\% confidence intervals]{%
        \includegraphics[width=0.33\linewidth]{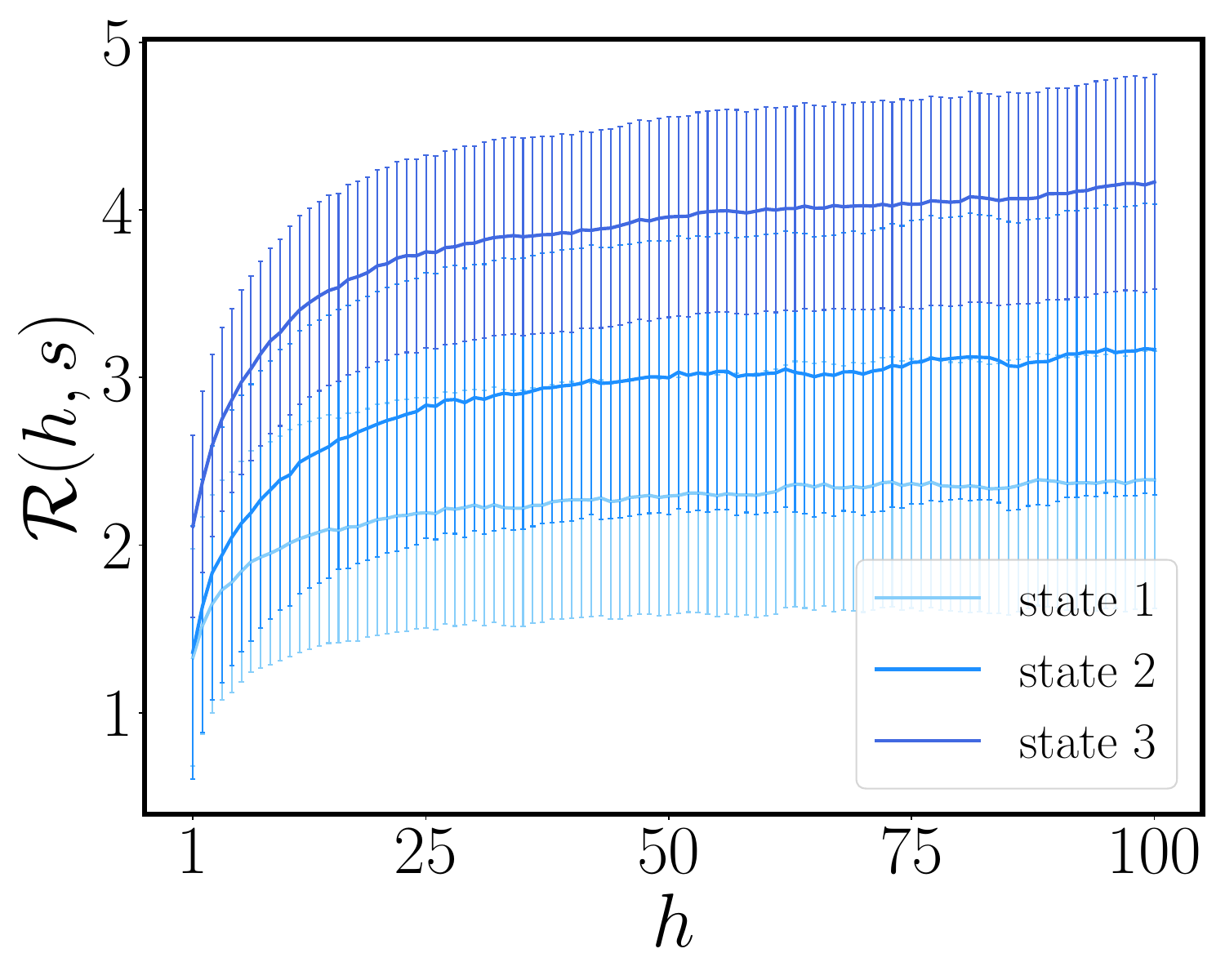}%
    }
    \caption{\textit{Suspicious trading activity} --- Market characterization of the trading activity regimes identified by the MMHP-$\delta$ model. The price response is expressed in basis points and the horizon in number of mid price moves.}
    \label{fig:state_characterization_mmhp_wash}
\end{figure}
Strikingly, we observe that the extreme burst regime occurs with an imbalance distribution that is inversely skewed compared to what we usually measure. For example, for sell trades, the classical result is that the imbalance distribution before the arrival of the trade is left-skewed, which is what we observe for the ``normal'' regime, \textit{i.e.} state 1. As we get closer to the burst regime, the distribution skews on the opposite side, revealing an atypical behavior for such trades. It shows that when the market is in the third state, even though the imbalance is large and an upward price move or the arrival of a buy trade is expected to happen, the opposite happens, causing the well-known imbalance-intensity relationship to break. Concerning the mid-price response to a state transition, we see that the arrival of burst regimes is, on average, followed by significant price moves even though the burst trades do not consume the entire queue. A possible interpretation is that market makers cancel orders that are pending deep in the book, creating large liquidity gaps before the first queue is either filled or canceled. This provides evidence of a significant price impact of the burst regime, demonstrating that a manipulative agent could use this microstructure property to manipulate the price by sending many orders to execute against herself.

\section{Conclusion}

This paper presents a novel approach to market manipulation detection in cryptocurrency markets through the use of regime-switching point processes. We introduced a point process with a Hawkes-like intensity whose parameters' dynamics are driven by a continuous time Markov chain. In this model, ``Hawkes-like'' stands for the piecewise constant kernel that is defined, with a discretization step that is kept constant between events. This specification trick enabled us to develop an estimation procedure, built upon an Expectation-Maximization (EM) algorithm. We demonstrated the convergence of the process to its continuous version as the discretization step becomes small, and we proved the robustness of the method using simulated data. Concerning the model, an important point needs to be discussed and highlighted. By introducing a kernel that is piecewise constant between events, and not over the entire time frame, we diverge from traditional Hawkes processes. This detail of specification makes the theory of Hawkes processes --- \textit{e.g.} the computation of endogeneity ratios or the stationarity condition --- unusable and an exploration of the new process's theoretical properties would be of interest. 

Applying the model to real cryptocurrency trade data, we effectively demonstrated its ability to identify extreme bursts of trades, underscoring its practical utility in real-world market scenarios. Further analysis enabled us to characterize the states of the Markov chain in relation to the limit order book, providing deeper insights into market dynamics. Notably, if these identified bursts are not due to manipulative behavior, it suggests that the imbalance property, often referred to as the "worst-kept secret of high-frequency traders," is highly sensitive to market regimes. Our methodology showed promising results in detecting fake volume and pinging activities, highlighting its potential for broader market surveillance applications. Future research should aim to refine the theoretical foundations of this process and explore additional applications across various trading environments to enhance its robustness and applicability. Given the absence of anonymous trading agent IDs, our detection method falls into the category of indirect detection methodologies and is therefore prone to false positives. To mitigate this issue, complementary detection rules should be incorporated into the detection algorithm.

\bibliography{main}
\bibliographystyle{apalike}

\appendix

\section{Proofs}

\label{section:proofs}

\subsection{Proof of Proposition~\ref{prop:forward_ode}}
\begin{quote}\it
    The forward transition matrix is the solution of the system of ordinary differential equations
    \begin{equation}
        \frac{\mathrm{d}H^{(n)}}{\mathrm{d}u}(u)=H^{(n)}(u)\left(Q-\Lambda_{t_{n-1}+u}\right)\notag,\hspace{0.3cm} u\in]0, x_n],
    \end{equation}
    with initial condition $H^{(n)}(0)=\mathbb{I}_M$.
\end{quote}
\begin{proof}
    Let $1\leq i, j\leq M$, $u,\,\Delta u\,> 0$. With elementary probability calculus and by using the Markov property of $S$, we get
    \begin{align}
        H_{ij}^{(n)}(u+\Delta u)&=\mathbb{P}\left(S_{t_{n-1}+u+\Delta u}=j, N_{t_{n-1}+u+\Delta u}-N_{t_{n-1}}=0\,|\,S_{t_{n-1}}=i,\mathcal{F}_{t_{n-1}}\right)\notag\\
        &\hspace{-2.2cm}=\sum_{k=1}^M\mathbb{P}\left(S_{t_{n-1}+u+\Delta u}=j,S_{t_{n-1}+u}=k,N_{t_{n-1}+u+\Delta u}-N_{t_{n-1}}=0\,|\,S_{t_{n-1}}=i,\mathcal{F}_{t_{n-1}}\right)\notag\\
        &\hspace{-2.2cm}=\sum_{k=1}^M\mathbb{P}\big(S_{t_{n-1}+u+\Delta u}=j,S_{t_{n-1}+u}=k,N_{t_{n-1}+u+\Delta u}-N_{t_{n-1}+u}=0, \notag
        \\ & N_{t_{n-1}+u}-N_{t_{n-1}}=0\,|\, S_{t_{n-1}}=i,\mathcal{F}_{t_{n-1}}\big)\notag\\
        &\hspace{-2.2cm}=\sum_{k=1}^M\mathbb{P}\left(S_{t_{n-1}+u}=k,N_{t_{n-1}+u}-N_{t_{n-1}}=0\,|\,S_{t_{n-1}}=i,\mathcal{F}_{t_{n-1}}\right)\times\,\mathbb{P}\big(S_{t_{n-1}+u+\Delta u}=j,\notag\\
        & N_{t_{n-1}+u+\Delta u}-N_{t_{n-1}+u}=0\,|\,S_{t_{n-1}+u}=k,S_{t_{n-1}}=i,N_{t_{n-1}+u}-N_{t_{n-1}}=0,\mathcal{F}_{t_{n-1}}\big)\notag\\
        &\hspace{-2.2cm}=\sum_{k=1}^MH_{ik}^{(n)}(u)\mathbb{P}\big(S_{t_{n-1}+u+\Delta u}=j,N_{t_{n-1}+u+\Delta u}-N_{t_{n-1}+u}=0\notag\\
        &\hspace{-2.2cm}\hspace{3.5cm}|\,S_{t_{n-1}+u}=k,S_{t_{n-1}}=i,N_{t_{n-1}+u}-N_{t_{n-1}}=0,\mathcal{F}_{t_{n-1}}\big)\notag\\
        &\hspace{-2.2cm}=\sum_{k=1}^MH_{ik}^{(n)}(u)\mathbb{P}\big(S_{t_{n-1}+u+\Delta u}=j,N_{t_{n-1}+u+\Delta u}-N_{t_{n-1}+u}=0\,|\,S_{t_{n-1}+u}=k,\notag
        \\ & N_{t_{n-1}+u}-N_{t_{n-1}}=0,\mathcal{F}_{t_{n-1}}\big)\notag\\
        &\hspace{-2.2cm}=\sum_{k=1}^MH_{ik}^{(n)}(u)\mathbb{P}\big(S_{t_{n-1}+u+\Delta u}=j\,|\,S_{t_{n-1}+u}=k,N_{t_{n-1}+u}-N_{t_{n-1}}=0,\mathcal{F}_{t_{n-1}}\big)\notag\\
        &\hspace{-2.2cm}\hspace{1cm}\times\mathbb{P}\big(N_{t_{n-1}+u+\Delta u}-N_{t_{n-1}+u}=0\,|\,S_{t_{n-1}+u+\Delta u}=j,S_{t_{n-1}+u}=k,N_{t_{n-1}+u}-N_{t_{n-1}}=0,\mathcal{F}_{t_{n-1}}\big)\notag\\
        &\hspace{-2.2cm}=\sum_{k=1}^MH_{ik}^{(n)}(u)\mathbb{P}\big(S_{t_{n-1}+u+\Delta u}=j\,|\, S_{t_{n-1}+u}=k\big)\notag\\
        &\hspace{-2.2cm}\hspace{1cm}\times\mathbb{P}\big(N_{t_{n-1}+u+\Delta u}-N_{t_{n-1}+u}=0\,|\,S_{t_{n-1}+u+\Delta u}=j,S_{t_{n-1}+u}=k,N_{t_{n-1}+u}-N_{t_{n-1}}=0,\mathcal{F}_{t_{n-1}}\big)\notag.
    \end{align}
    By applying a Taylor series expansion with $\Delta u\to0$, we get
    \begin{align}
        \hspace{-2cm}H_{ij}^{(n)}(u+\Delta u)&=\sum_{k=1}^MH_{ik}^{(n)}(u)\left(\delta_{kj}+q_{kj}\Delta u+o(\Delta u)\right)\left(1-\lambda_{t_{n-1}+u}^k\Delta u+o(\Delta u)\right)\notag\\
        &=\sum_{k=1}^MH_{ik}^{(n)}(u)\left\{\delta_{kj}+\left(q_{kj}-\delta_{kj}\lambda_{t_{n-1}+u}^k\right)\Delta u+o(\Delta u)\right\},\label{eq:recurrence_H}
    \end{align}
    where
    \begin{equation}
        \lambda_{t_{n-1}+u}^k=\mu^k+\sum_{t_l\leq t_{n-1}}\phi^k(t_{n-1}-t_l+u)\notag.
    \end{equation}
    Using Equation \eqref{eq:recurrence_H},
    \begin{align}
        \frac{1}{\Delta u}\left(H_{ij}^{(n)}(u+\Delta u)-H_{ij}^{(n)}(u)\right)&=\frac{1}{\Delta u}\sum_{k\neq j}H_{ik}^{(n)}(u)\big(q_{kj}\Delta u+o(\Delta u)\big)\notag\\
        &\hspace{0.5cm}-\frac{1}{\Delta u}H_{ij}^{(n)}(u)\big((q_j+\lambda_{t_{n-1}+u}^j)\Delta u+o(\Delta u)\big)\notag,
    \end{align}
    which leads to
    \begin{equation}
        \lim_{\Delta u \to 0^+}\frac{1}{\Delta u}\left(H_{ij}^{(n)}(u+\Delta u)-H_{ij}^{(n)}(u)\right)\,=\,-H_{ij}^{(n)}(u)\left(q_j+\lambda_{t_{n-1}+u}^j\right)+\sum_{k\neq j}H_{ik}^{(n)}(u)q_{kj}\notag.
    \end{equation}
    Since $H_{ij}^{(n)}(0)=\delta_{ij}$, $1\leq i,j \leq M$ we obtain the desired result.
\end{proof}

\subsection{Proof of Lemma~\ref{lemma:forward_backward_equation}}
\begin{quote}\it
    The forward and backward transition matrices follow the equality
    \begin{equation}
        H^{(n)}(t_n-t_{n-1})=H^{(n)}(t-t_{n-1})G^{(n)}(t_n-t)\notag, \hspace{0.3cm}t_{n-1}\leq t\leq t_n.
    \end{equation}
\end{quote}
\begin{proof}
    Let $1\leq i, j\leq M$, $t_{n-1}\leq t\leq t_n$. By using the Markov property of $S$, we have
    \begin{align}
        H_{ij}^{(n)}(t_n^--t_{n-1})&=\mathbb{P}\left(S_{t_n}=j,N_{t_n^-}-N_{t_{n-1}}=0\,|\,S_{t_{n-1}}=i,\mathcal{F}_{t_{n-1}}\right)\notag\\
        &=\mathbb{P}\left(S_{t_n}=j,N_{t_n^-}-N_t=0,N_t-N_{t_{n-1}}=0\,|\,S_{t_{n-1}}=i,\mathcal{F}_{t_{n-1}}\right)\notag\\
        &=\sum_{k=1}^M\mathbb{P}\left(S_{t_n}=j,S_t=k,N_{t_n^-}-N_t=0,N_t-N_{t_{n-1}}=0\,|\,S_{t_{n-1}}=i,\mathcal{F}_{t_{n-1}}\right)\notag\\
        &=\sum_{k=1}^M\mathbb{P}\left(S_t=k,N_t-N_{t_{n-1}}=0\,|\,S_{t_{n-1}}=i,\mathcal{F}_{t_{n-1}}\right)\notag\\
        &\hspace{1.5cm}\times\mathbb{P}\left(S_{t_n}=j,N_{t_n^-}-N_t=0\,|\,S_t=k,S_{t_{n-1}}=i,N_t-N_{t_{n-1}}=0,\mathcal{F}_{t_{n-1}}\right)\notag\\
        &=\sum_{k=1}^MH_{ik}^{(n)}(t-t_{n-1})\mathbb{P}\left(S_{t_n}=j,N_{t_n^-}-N_t=0\,|\,S_t=k,N_t-N_{t_{n-1}}=0,\mathcal{F}_{t_{n-1}}\right)\notag\\
        &=\sum_{k=1}^MH_{ik}^{(n)}(t-t_{n-1})\mathbb{P}\left(S_{t_n}=j,N_{t_n^-}-N_t=0\,|\,S_t=k,\mathcal{F}_t\right)\notag\\
        &=\sum_{k=1}^MH_{ik}^{(n)}(t-t_{n-1})G_{kj}^{(n)}(t_n-t)\notag.
    \end{align}
\end{proof}

\subsection{Proof of Proposition~\ref{prop:backward_ode}}
\begin{quote}\it
    The backward transition matrix is the solution of the system of ordinary differential equations
    \begin{equation}
        \frac{\mathrm{d}G^{(n)}}{\mathrm{d}u}(u)=\left(Q-\Lambda_{t_n-u}\right)G^{(n)}(u)\notag,\hspace{0.3cm}u\in]0, x_n],
    \end{equation}
    with initial condition $G^{(n)}(0)=\mathbb{I}_M$.
\end{quote}
\begin{proof}
    Let $0\leq u\leq x_n$ and write the equation of Lemma \ref{lemma:forward_backward_equation} with the change of variable $u:=t_n-t$ 
    \begin{equation}\label{eq:forward_backward_with_u}
        H^{(n)}(t_n-t_{n-1})=H^{(n)}(t_n-t_{n-1}-u)G^{(n)}(u)\notag.
    \end{equation}
    Differentiating both sides of Equation \eqref{eq:forward_backward_with_u} with respect to $u$ leads to
    \begin{equation}
        H^{(n)}(t_n-t_{n-1}-u)\frac{\mathrm{d}G^{(n)}}{\mathrm{d}u}(u)=\frac{\mathrm{d}H^{(n)}}{\mathrm{d}u}(t_n-t_{n-1}-u)G^{(n)}(u)\notag.
    \end{equation}
    We then use Proposition \ref{prop:forward_ode} to obtain
    \begin{align}
        \frac{\mathrm{d}G^{(n)}}{\mathrm{d}u}(u)&=\left[H^{(n)}(t_n-t_{n-1}-u)\right]^{-1}\frac{\mathrm{d}H^{(n)}}{\mathrm{d}u}(t_n-t_{n-1}-u)G^{(n)}(u)\notag\\
        &=\left(Q-\Lambda_{t_n-u}\right)G^{(n)}(u)\notag.
    \end{align}
\end{proof}

\subsection{Proof of Proposition~\ref{prop:forward_backward_formulas}}
\begin{quote}\it
    If the intensity is given by Equation \eqref{eq:piecewise_constant_intensity}, the forward transition matrix is
    \begin{equation}
        H^{(n)}(u)=\Xi^{(n)}_{\ell(t_{n-1}+u)}e^{(Q-\Lambda_{t_{n-1}+\ell(t_{n-1}+u)\delta})(u-\ell(t_{n-1}+u)\delta)},\hspace{0.3cm} 0\leq u\leq x_n,
    \end{equation}
    for which we define the following quantity
    \begin{equation}
        \Xi^{(n)}_k:=\prod_{r=1}^{k}e^{((Q-\Lambda_{t_{n-1}+(r-1)\delta})\delta)}, \hspace{0.3cm} 1\leq n\leq K,\hspace{0.1cm} 1\leq k \leq \ell_n.
    \end{equation}
    
    Similarly, the backward transition matrix is written as, for $u\in[0, x_n]$
    \begin{equation}
        \begin{cases} G^{(n)}(u)=e^{(Q-\Lambda_{t_{n-1}+\ell_n\delta})u}\hspace{0.2cm}\text{if}\;u\leq\Delta_n,\\\\  G^{(n)}(u)=e^{(Q-\Lambda_{t_{n-1}+\ell(t_n-u)\delta})(u-((\ell_n-\ell(t_n-u)-1)\delta+\Delta_n))}\Psi^{(n)}_{\ell_n-\ell(t_n-u)-1}e^{(Q-\Lambda_{t_{n-1}+\ell_n\delta})\Delta_n}\hspace{0.2cm}\text{else},
    \end{cases}
    \end{equation}
    for which we define the following quantity
    \begin{equation}
        \Psi^{(n)}_k:=\prod_{r=1}^{k}e^{(Q-\Lambda_{t_{n-1}+(\ell_n-(k-r)-1)\delta})\delta}, \hspace{0.3cm} 1\leq n\leq K,\hspace{0.1cm} 1\leq k \leq \ell_n.
    \end{equation}
    Here, the notation $e^A$ stands for the matrix exponential of $A$ and we use the convention $\prod_{r=1}^{0}A_r=\mathbb{I}_M$.
\end{quote}
\begin{proof}
    Under the $\delta$-piecewise constant intensity of Equation \eqref{eq:piecewise_constant_intensity}, Equation \eqref{eq:ode_forward} is
    \begin{align}
        \frac{\mathrm{d}H^{(n)}}{\mathrm{d}u}(u)&=H^{(n)}(u)\left(Q-\Lambda_{t_{n-1}+\ell(t_{n-1}+u)\delta}\right),\hspace{0.3cm}0< u\leq x_n,\notag\\
        H^{(n)}(0)&=\mathbb{I}_M.
    \end{align}
    Note that for all $u\in[0, x_n]$, $\Lambda_{t_{n-1}+u}=\Lambda_{t_{n-1}+\ell(t_{n-1}+u)\delta}$ and that $\Lambda_{t_{n-1}+\ell(t_{n-1}+u)\delta}$ is the same whether $\Lambda$ is the MMHP-$\delta$ intensity or its continuous counterpart. We apply a discretization to the domain $]0,x_n]$ with intervals of the form $]k\delta,(k+1)\delta]$, and we get the following set of ODEs
    \begin{equation}
        \frac{\mathrm{d}H^{(n)}}{\mathrm{d}u}(u)=H^{(n)}(u)\left(Q-\Lambda_{t_{n-1}+k\delta}\right),\hspace{0.3cm}u\in]k\delta, (k+1)\delta],\hspace{0.1cm}0\leq k\leq \ell_n.
    \end{equation}
    This system is easily solved iteratively, using the initial condition, for $k=0$, $H^{(n)}(0)=\mathbb{I}_M$ and for $1\leq k\leq \ell_n$,
    \begin{equation}
        H^{(n)}(k\delta)=\prod_{r=1}^ke^{(Q-\Lambda_{t_{n-1}+(r-1)\delta})\delta},
    \end{equation}
    
    leading to the desired expression of the forward transition matrix.

    The same idea is applied to the backward transition matrix. Indeed, Equation \eqref{eq:ode_backward} is
    \begin{align}
        \frac{\mathrm{d}G^{(n)}}{\mathrm{d}u}(u)&=\left(Q-\Lambda_{t_{n-1}+\ell(t_n-u)\delta}\right)G^{(n)}(u),\hspace{0.3cm}0< u\leq x_n,\notag\\
        G^{(n)}(0)&=\mathbb{I}_M.
    \end{align}
    By operating a discretization to the domain $]0,x_n]$ with intervals of the form $]k\delta+\Delta_n,(k+1)\delta+\Delta_n]$ and $]0,\Delta_n]$, we first get the ODE over the residual time to $t_n$
    \begin{align}
        \frac{\mathrm{d}G^{(n)}}{\mathrm{d}u}(u)&=\left(Q-\Lambda_{t_{n-1}+\ell_n\delta}\right)G^{(n)}(u),\hspace{0.3cm}0<u<\Delta_n,\notag\\
        G^{(n)}(0)&=\mathbb{I}_M,
    \end{align}
    and then the following ODEs over the $\delta$-intervals
    \begin{equation}
        \frac{\mathrm{d}G^{(n)}}{\mathrm{d}u}(u)=\left(Q-\Lambda_{t_{n-1}+(\ell_n-k-1)\delta}\right)G^{(n)}(u),\hspace{0.3cm}k\delta+\Delta_n < u \leq (k+1)\delta+\Delta_n,\hspace{0.1cm}0\leq k\leq \ell_n-1.
    \end{equation}
    Again, this system is iteratively solved at each step $k$ and we obtain the initial conditions
    \begin{equation}
        G^{(n)}(k\delta+\Delta_n)=\prod_{r=1}^ke^{(Q-\Lambda_{t_{n-1}+(\ell_n-(k-r)-1)\delta})\delta}e^{(Q-\Lambda_{t_{n-1}+\ell_n\delta})\Delta_n},\hspace{0.1cm}0\leq k\leq \ell_n-1.
    \end{equation}
    We finally get the desired expression of the backward transition matrix.
\end{proof}

\subsection{Proof of Proportion~\ref{prop:complete_likelihood}}

\begin{quote}\it
    The log-likelihood of an MMHP with parameters $\Theta:=\Theta^M\cup\Theta^H$ over the observations $\mathcal{T}$ can be written as
    \begin{equation}
        \mathcal{L}(\Theta,\mathcal{T})=\log \mathcal{L}^M(\Theta^M,\mathcal{T})+\log \mathcal{L}^H(\Theta^H,\mathcal{T})\notag,
    \end{equation}
    where
    \begin{equation}
        \log \mathcal{L}^M(\Theta^M,\mathcal{T})=\sum_{i=1}^M\left\{\mathds{1}_{\{S_0=i\}}\log \xi_0^i-D_iq_i+\sum_{j=1,j\neq i}^Mw_{ij}\log q_{ij}\right\}\notag,
    \end{equation}
    and
    \begin{equation}
        \log \mathcal{L}^H(\Theta^H,\mathcal{T})=\sum_{i=1}^M\left\{\sum_{n=1}^K\mathds{1}_{\{S_{t_n}=i\}}\log \lambda_{t_n}^i(\Theta^H)\,-\,\int_0^T\mathds{1}_{\{S_t=i\}}\lambda_t^i(\Theta^H)\,\mathrm{d}t\right\}\notag,
    \end{equation}
    for which we define the time $S$ spends in state $i$ over $[0,T]$
    \begin{equation}
        D_i:=\int_0^T\mathds{1}_{\{S_t=i\}}\mathrm{d}t=\sum_{p=1}^{m+1} \mathds{1}_{\{s_p=i\}}(u_p-u_{p-1})\notag,
    \end{equation}
    and the number of times $S$ jumps from state $i$ to state $j$ over $[0,T]$
    \begin{equation}
        w_{ij}:=\int_0^T\underset{\Delta\to0^+}{\lim}\frac{1}{\Delta}\mathds{1}_{\{S_{t-\Delta}=i,\, S_t=j\}}\mathrm{d}t=\sum_{p=1}^m \mathds{1}_{\{s_p=i, s_{p+1}=j\}}\notag.
    \end{equation}
\end{quote}

\begin{proof}
    Denote by $(u_p)_{1\leq p\leq m}$ the ordered sequence of unobserved transition times of the Markov chain $S$ over the time interval $[0,T]$, with the conventions $u_0=0$, $u_{m+1}=T$. Let $(s_p)_{1\leq p \leq m}$ be the sequence of states such that $S_t=s_p$, for any $t\in[u_{p-1},u_p)$. Let $z_p:=N_{u_p}-N_{u_{p-1}}$, $1\leq p\leq m+1$, be the number of events between the two state transition times $u_{p-1}$ and $u_p$, with the convention $z_0=0$. Define
    \begin{equation}
        \rho_k:=\sum_{p=0}^kz_p\notag,\hspace{0.3cm} 0\leq k\leq m+1,
    \end{equation}
    the number of events in the interval $[0,u_k]$. The likelihood of the model is
    \begin{align}
        \mathcal{L}(\Theta,\mathcal{T})&=\xi_0^{s_1}\left(\prod_{p=1}^m e^{-q_{s_p}(u_p-u_{p-1})}q_{s_ps_{p+1}}\right)e^{-q_{s_{m+1}}(T-u_m)}\prod_{p=1}^{m+1}\prod_{k=1}^{z_p}\lambda_{t_{\rho_{p-1}+k}}^{s_p}e^{-\int_{t_{\rho_{p-1}+k-1}}^{t_{\rho_{p-1}+k}}\lambda_t^{s_p}\mathrm{d}t}\notag\\
        &=\left(\sum_{i=1}^M\mathds{1}_{\{s_0=i\}}\xi_0^i\right)\exp\left(-\sum_{p=1}^m \sum_{i=1}^M \mathds{1}_{\{s_p=i\}}q_i(u_p-u_{p-1})\right)\left(\prod_{p=1}^m \sum_{1\leq i,j \leq M,i\neq j} \mathds{1}_{\{s_p=i, s_{p+1}=j\}}q_{ij}\right)\notag\\
        &\hspace{0.5cm}\times \exp\left(-\sum_{i=1}^M\mathds{1}_{\{s_{m+1}=i\}}q_i(T-u_m)\right)\prod_{p=1}^{m+1}\left(\sum_{i=1}^M \mathds{1}_{\{s_p=i\}}\exp\left(-\int_{t_{\rho_{p-1}}}^{t_{\rho_p}}\lambda_t^i\mathrm{d}t\right)\prod_{k=1}^{z_p}\lambda_{t_{\rho_{p-1}+k}}^i\right)\notag.
    \end{align}
    By taking the logarithm of both sides of the equation, we obtain
    \begin{align}
        \log\mathcal{L}(\Theta,\mathcal{T})&=\sum_{i=1}^M\mathds{1}_{\{s_0=i\}}\log\xi_0^i - \sum_{i=1}^M \sum_{p=1}^{m+1} \mathds{1}_{\{s_p=i\}}(u_p-u_{p-1})q_i + \sum_{1\leq i,j \leq M,i\neq j} \sum_{p=1}^m \mathds{1}_{\{s_p=i, s_{p+1}=j\}}\log q_{ij}\notag\\
        &\hspace{0.5cm}+ \sum_{i=1}^M \sum_{p=1}^{m+1} \mathds{1}_{\{s_p=i\}}\left(\sum_{k=1}^{z_p}\log\lambda_{t_{\rho_{p-1}+k}}^i-\int_{t_{\rho_{p-1}}}^{t_{\rho_p}}\lambda_t^i\mathrm{d}t\right)\notag\\
        &=\sum_{i=1}^M\left(\mathds{1}_{\{s_0=i\}}\log\xi_0^i - \sum_{p=1}^{m+1} \mathds{1}_{\{s_p=i\}}(u_p-u_{p-1})q_i + \sum_{j=1,j\neq i}^M \sum_{p=1}^m \mathds{1}_{\{s_p=i, s_{p+1}=j\}}\log q_{ij}\right)\notag\\
        &\hspace{0.5cm}+ \sum_{i=1}^M \sum_{n=1}^{K} \left(\mathds{1}_{\{s_{t_n}=i\}}\log\lambda_{t_n}^i-\int_{t_{n-1}}^{t_n}\mathds{1}_{\{s_t=i\}}\lambda_t^i\mathrm{d}t\right)\notag.
    \end{align}
    Use the definitions of $D_i$ and $w_{ij}$ in Equations \eqref{eq:D_i} and \eqref{eq:w_ij} to complete the proof.
\end{proof}

\subsection{Proof of Proposition~\ref{prop:e_step_likelihood}}

\begin{quote}\it
    Given an estimated set of parameters $\widehat{\Theta}$, the expected log-likelihood of the E-step can be written as
    \begin{equation}
        \mathbb{E}\left(\log \mathcal{L}(\Theta,\mathcal{T})|\mathcal{F}_T,\widehat{\Theta}\right)=\mathbb{E}\left(\log \mathcal{L}^M(\Theta,\mathcal{T})\Big|\mathcal{F}_T,\widehat{\Theta}\right)+\mathbb{E}\left(\log \mathcal{L}^H(\Theta,\mathcal{T})\Big|\mathcal{F}_T,\widehat{\Theta}\right)\notag,
    \end{equation}
    with
    \begin{equation}
        \mathbb{E}\left(\log \mathcal{L}^M(\Theta,\mathcal{T})\Big|\mathcal{F}_T,\widehat{\Theta}\right)=\sum_{i=1}^M\left\{\xi_{0|T}^i(\widehat{\Theta})\log \xi_0^i-\mathbb{E}(D_i|\mathcal{F}_T,\widehat{\Theta})q_i+\sum_{j=1,j\neq i}^M\mathbb{E}(w_{ij}|\mathcal{F}_T,\widehat{\Theta})\log q_{ij}\right\}\notag,
    \end{equation}
    and
    \begin{equation}
        \mathbb{E}\left(\log \mathcal{L}^H(\Theta,\mathcal{T})\Big|\mathcal{F}_T,\widehat{\Theta}\right)=\sum_{i=1}^M\left\{\sum_{n=1}^K\xi_{t_n|T}^i(\widehat{\Theta})\log \lambda_{t_n}^i(\Theta)-\int_0^T\xi_{t|T}^i(\widehat{\Theta})\lambda_t^i(\Theta)\mathrm{d}t\right\}\notag.
    \end{equation}

    Furthermore, we have the analytic expressions

    \begin{equation}
        \mathbb{E}\left(D_i|\mathcal{F}_T,\widehat{\Theta}\right)=\sum_{n=1}^K\frac{1}{c_n}\left[\sum_{k=0}^{\ell_n-1}\left[e^{C^{(n)}_k\delta}\right]^{M\boxtimes M}+\left[e^{D^{(n)}\Delta_n}\right]^{M\boxtimes M}\right]_{ii}\notag,
    \end{equation}

    \begin{equation}
        \mathbb{E}\left(w_{ij}|\mathcal{F}_T,\widehat{\Theta}\right)=\widehat{q}_{ij}\sum_{n=1}^K\frac{1}{c_n}\left[\sum_{k=0}^{\ell_n-1}\left[e^{C^{(n)}_k\delta}\right]^{M\boxtimes M}+\left[e^{D^{(n)}\Delta_n}\right]^{M\boxtimes M}\right]_{ji}\notag,
    \end{equation}

    \begin{equation}
        \int_0^T\xi_{t|T}^i(\widehat{\Theta})\lambda_t^i\mathrm{d}t=\sum_{n=1}^K\frac{1}{c_n}\left[\sum_{k=0}^{\ell_n-1}\Lambda_{t_{n-1}+k\delta}\,\odot\,\left[e^{C^{(n)}_k\delta}\right]^{M\boxtimes M}+\Lambda_{t_{n-1}+\ell_n\delta}\,\odot\,\left[e^{D^{(n)}\Delta_n}\right]^{M\boxtimes M}\right]_{ii}\notag,
    \end{equation}

    and the smoothed probabilities are given by Equations \eqref{eq:filtered_prob_vs_forward_backward} and \eqref{eq:smoothed_prob_vs_forward_backward}.
\end{quote}
\begin{proof}
    \begin{equation}
        \mathbb{E}\left(\log \mathcal{L}^M(\Theta,\mathcal{T})\Big|\mathcal{F}_T,\widehat{\Theta}\right)=\sum_{i=1}^M\left\{\xi_{0|T}^i(\widehat{\Theta})\log \xi_0^i-\mathbb{E}(D_i|\mathcal{F}_T,\widehat{\Theta})q_i+\sum_{j=1,j\neq i}^M\mathbb{E}(w_{ij}|\mathcal{F}_T,\widehat{\Theta})\log q_{ij}\right\}\notag,
    \end{equation}
    Using the work of \cite{ryden1994parameter}, we can write
    \begin{align}
        \mathbb{E}\left(D_i|\mathcal{F}_T,\widehat{\Theta}\right)&=\int_0^T\frac{\alpha_t^i\beta_t^i}{\mathcal{L}}\mathrm{d}t\notag,\hspace{0.3cm}1\leq i\leq M,\\
        \mathbb{E}\left(w_{ij}|\mathcal{F}_T,\widehat{\Theta}\right)&=\int_0^Tq_{ij}\frac{\alpha_t^i\beta_t^j}{\mathcal{L}}\mathrm{d}t\notag,\hspace{0.3cm}1\leq i,j\leq M,i\neq j.
    \end{align}
    We now aim at computing the integrals
    \begin{equation}
        \int_0^T\frac{\alpha_t^i\beta_t^j}{\mathcal{L}}\,\mathrm{d}t,\hspace{0.3cm}\int_0^T\frac{\alpha_t^i\beta_t^i}{\mathcal{L}}\lambda_t^i\,\mathrm{d}t\notag,\hspace{0.3cm}1\leq i,j\leq M.
    \end{equation}
    Using the $\delta$-piecewise constant kernel hypothesis, write
    \begin{align}
        \int_0^T\frac{\alpha_t^i\beta_t^i}{\mathcal{L}}\lambda_t^i\mathrm{d}t&=\sum_{n=1}^K\frac{1}{\mathcal{L}}\int_{t_{n-1}}^{t_n}\alpha_t^i\beta_t^i\lambda_t^i\mathrm{d}t\notag\\
        &=\sum_{n=1}^K\frac{1}{\prod_{k=1}^Kc_k}\int_{t_{n-1}}^{t_n}\pi\prod_{k=1}^{n-1}f^{(k)}(x_k)\lambda_t^iH^{(n)}\left(t-t_{n-1}\right)\mathds{1}_i\mathds{1}_i'G^{(n)}\left(t_n-t\right)\Lambda_{t_n^-}\prod_{k=n+1}^Kf^{(k)}(x_k)\mathds{1}\mathrm{d}t\notag\\
        &=\sum_{n=1}^K\frac{1}{c_n}\left[\int_0^{x_n}\Lambda_{t_{n-1}+x}\odot G^{(n)}(x_n-x)P^{(n)}H^{(n)}(x)\mathrm{d}x\right]_{ii}\notag\\
        &=\sum_{n=1}^K\frac{1}{c_n}\bigg[\sum_{k=0}^{\ell_n-1}\Lambda_{t_{n-1}+k\delta}\odot\int_{k\delta}^{(k+1)\delta}G^{(n)}(x_n-x)P^{(n)}H^{(n)}(x)\mathrm{d}x\notag\\
        &\hspace{1cm}+\Lambda_{t_{n-1}+\ell_n\delta}\odot\int_{\ell_n\delta}^{\ell_n\delta+\Delta_n}G^{(n)}(x_n-x)P^{(n)}H^{(n)}(x)\mathrm{d}x\bigg]_{ii}\notag\\
        &=\sum_{n=1}^K\frac{1}{c_n}\bigg[\sum_{k=0}^{\ell_n-1}\Lambda_{t_{n-1}+k\delta}\odot\int_{0}^{\delta}G^{(n)}(x_n-(x+k\delta))P^{(n)}H^{(n)}(x+k\delta)\mathrm{d}x\notag\\
        &\hspace{1cm}+\Lambda_{t_{n-1}+\ell_n\delta}\odot\int_{0}^{\Delta_n}G^{(n)}(x_n-(x+\ell_n\delta))P^{(n)}H^{(n)}(x+\ell_n\delta)\mathrm{d}x\bigg]_{ii}\notag\\
        &=\sum_{n=1}^K\frac{1}{c_n}\bigg[\sum_{k=0}^{\ell_n-1}\Lambda_{t_{n-1}+k\delta}\odot\int_{0}^{\delta}e^{(Q-\Lambda_{t_{n-1}+k\delta})(\delta-x)}\Omega^{(n)}_ke^{(Q-\Lambda_{t_{n-1}+k\delta})x}\mathrm{d}x\notag\\
        &\hspace{1cm}+\Lambda_{t_{n-1}+\ell_n\delta}\odot\int_{0}^{\Delta_n}e^{(Q-\Lambda_{t_{n-1}+\ell_n\delta})(\Delta_n-x)}P^{(n)}\Xi^{(n)}_{\ell_n}e^{(Q-\Lambda_{t_{n-1}+\ell_n\delta})\Delta_n}\mathrm{d}x\bigg]_{ii}.\notag
    \end{align}
    We end up computing integrals of matrix exponentials. Following \cite{van1978computing}, as recommended in \cite{roberts2006ryde}, we get
    \begin{equation}
        \int_{0}^{\delta}e^{(Q-\Lambda_{t_{n-1}+k\delta})(\delta-x)}\Omega^{(n)}_ke^{(Q-\Lambda_{t_{n-1}+k\delta})x}\mathrm{d}x=\left[e^{C^{(n)}_k\delta}\right]^{M\boxtimes M}\notag,
    \end{equation}
    and
    \begin{equation}
        \int_{0}^{\Delta_n}e^{(Q-\Lambda_{t_{n-1}+\ell_n\delta})(\Delta_n-x)}P^{(n)}\Xi^{(n)}_{\ell_n}e^{(Q-\Lambda_{t_{n-1}+\ell_n\delta})\Delta_n}\mathrm{d}x=\left[e^{D^{(n)}\Delta_n}\right]^{M\boxtimes M}\notag.
    \end{equation}
    Finally, we obtain
    \begin{equation}
        \int_0^T\frac{\alpha_t^i\beta_t^i}{\mathcal{L}}\lambda_i(t)\mathrm{d}t=\sum_{n=1}^K\frac{1}{c_n}\left[\sum_{k=0}^{\ell_n-1}\Lambda_{t_{n-1}+k\delta}\odot\left[e^{C^{(n)}_k\delta}\right]^{M\boxtimes M}+\Lambda_{t_{n-1}+\ell_n\delta}\odot \left[e^{D^{(n)}\Delta_n}\right]^{M\boxtimes M}\right]_{ii}\notag.
    \end{equation}
    In particular, we have the following formula for the integrated transition probability
    \begin{equation}
        \int_0^T\frac{\alpha_t^i\beta_t^j}{\mathcal{L}}\mathrm{d}t=\sum_{n=1}^K\frac{1}{c_n}\left[\sum_{k=0}^{\ell_n-1}\left[e^{C^{(n)}_k\delta}\right]^{M\boxtimes M}+\left[e^{D^{(n)}\Delta_n}\right]^{M\boxtimes M}\right]_{ji}\notag,
    \end{equation}
    which completes the proof.
\end{proof}

\section{Supplementary figures}
This appendix provides additional figures that do not appear in the main text in order to enhance readability.  
Figure \ref{fig:box_plot_parameters_convergence_continuous_kernel} displays boxplots of the estimated parameters for the MMHP with $\delta$-piecewise constant intensity with respect to $\delta$, from the simulation experiments of Section \ref{section:estimation_procedure}. It shows the convergence of the parameters towards those of the continuous MMHP.
Finally, Figure  \ref{fig:box_plot_parameters_convergence_delta} displays boxplots of the estimated parameters for three MMHPs with different steps $\delta$ of Section \ref{section:estimation_procedure}. We observe that the stability of the estimated parameters does not appear to depend on $\delta$.

\begin{figure}
    \centering
    \subfloat[$\mu_1$]{%
        \includegraphics[width=0.25\linewidth]{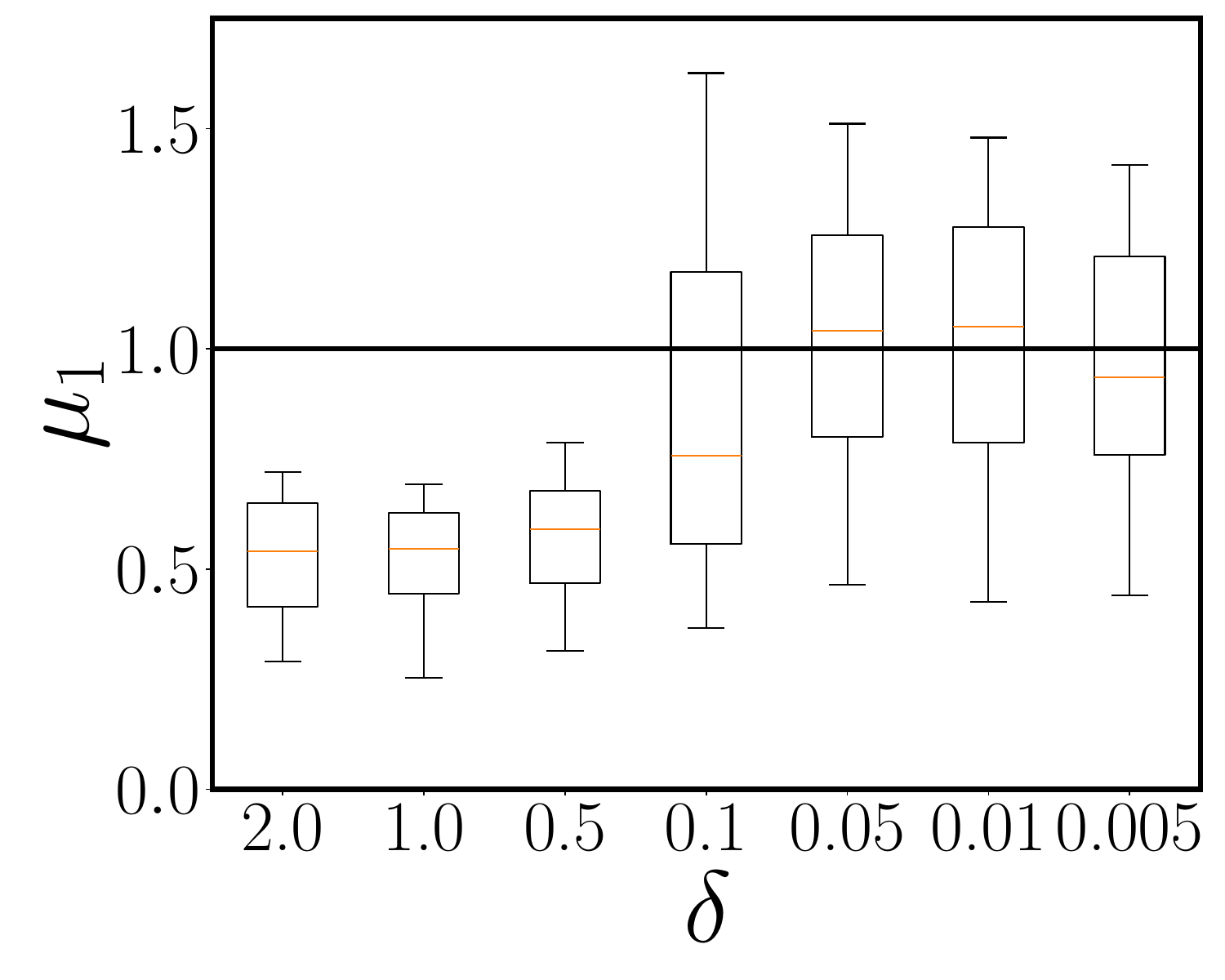}%
    }
    \subfloat[$\mu_2$]{%
        \includegraphics[width=0.25\linewidth]{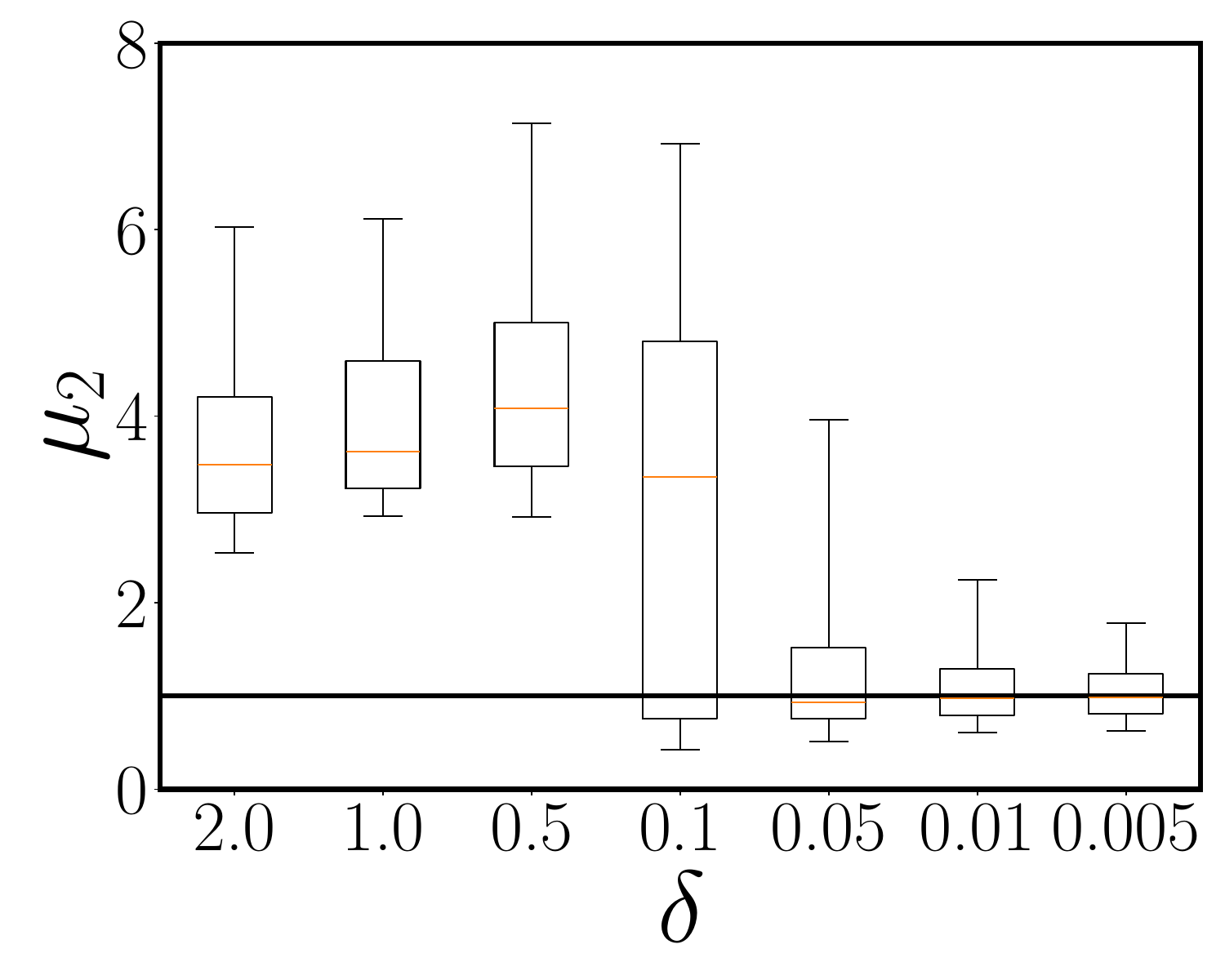}%
    }
    \subfloat[$\alpha_1$]{%
        \includegraphics[width=0.25\linewidth]{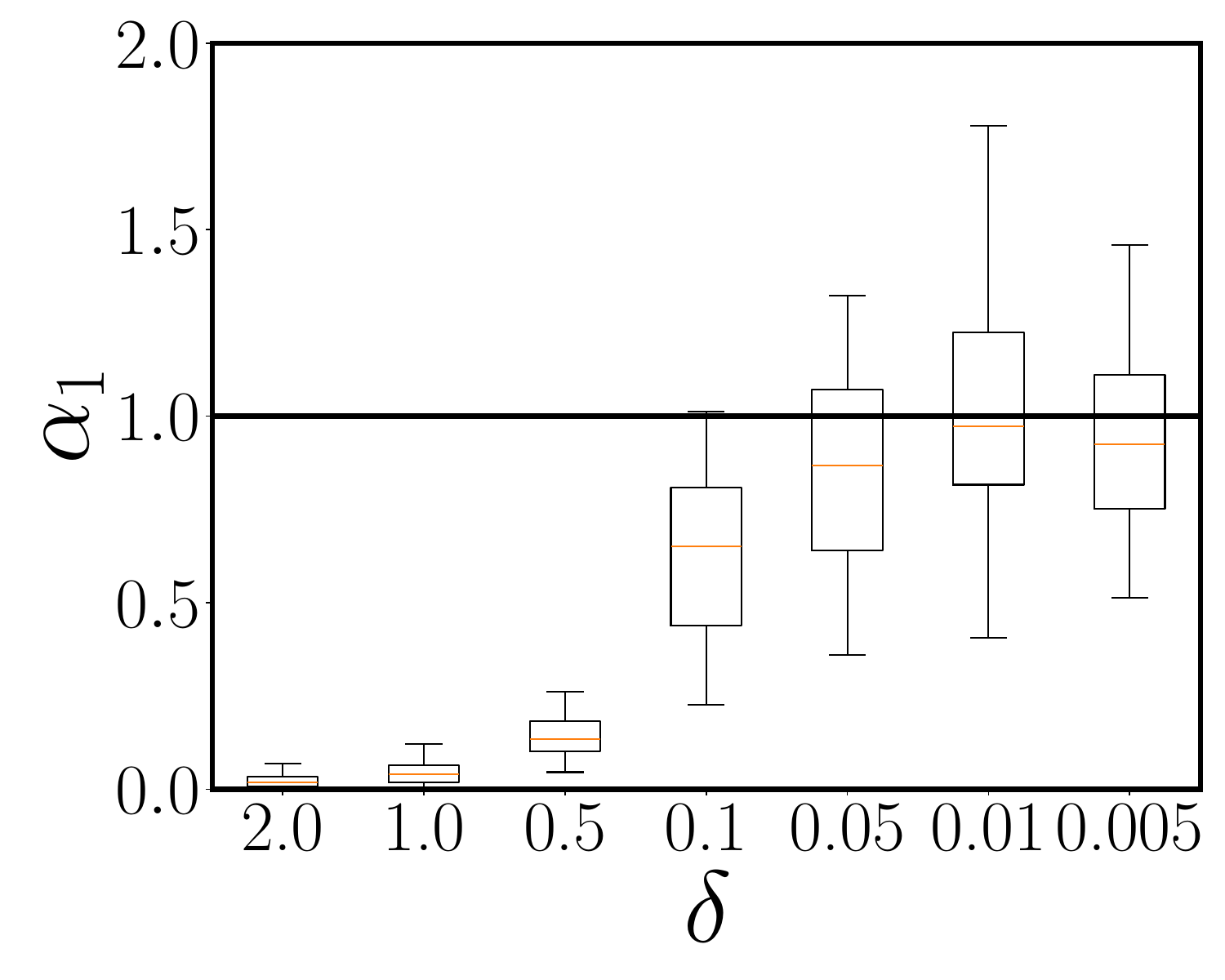}%
    }
    \subfloat[$\alpha_2$]{%
        \includegraphics[width=0.25\linewidth]{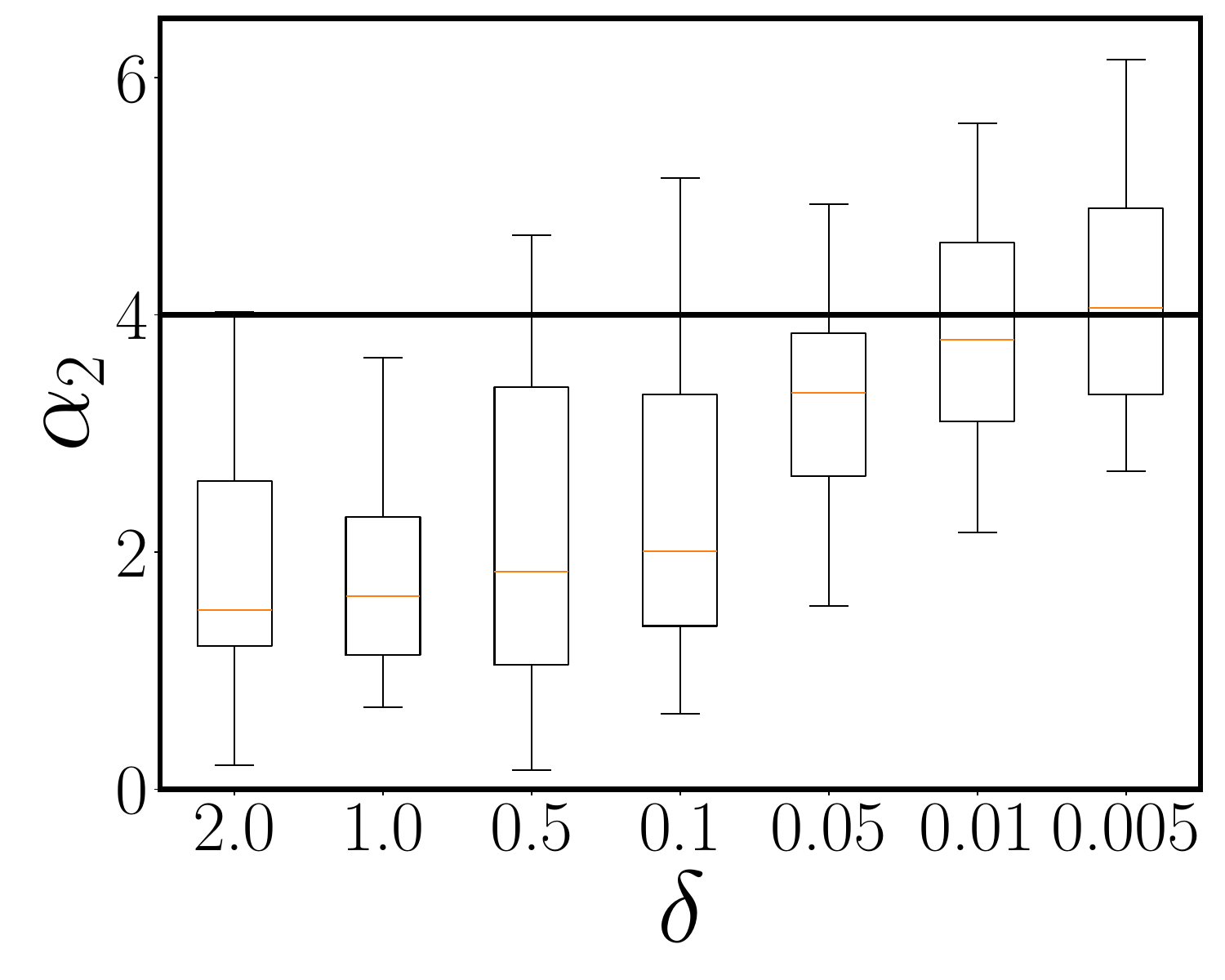}%
    }\\
    \subfloat[$\beta_1$]{%
        \includegraphics[width=0.25\linewidth]{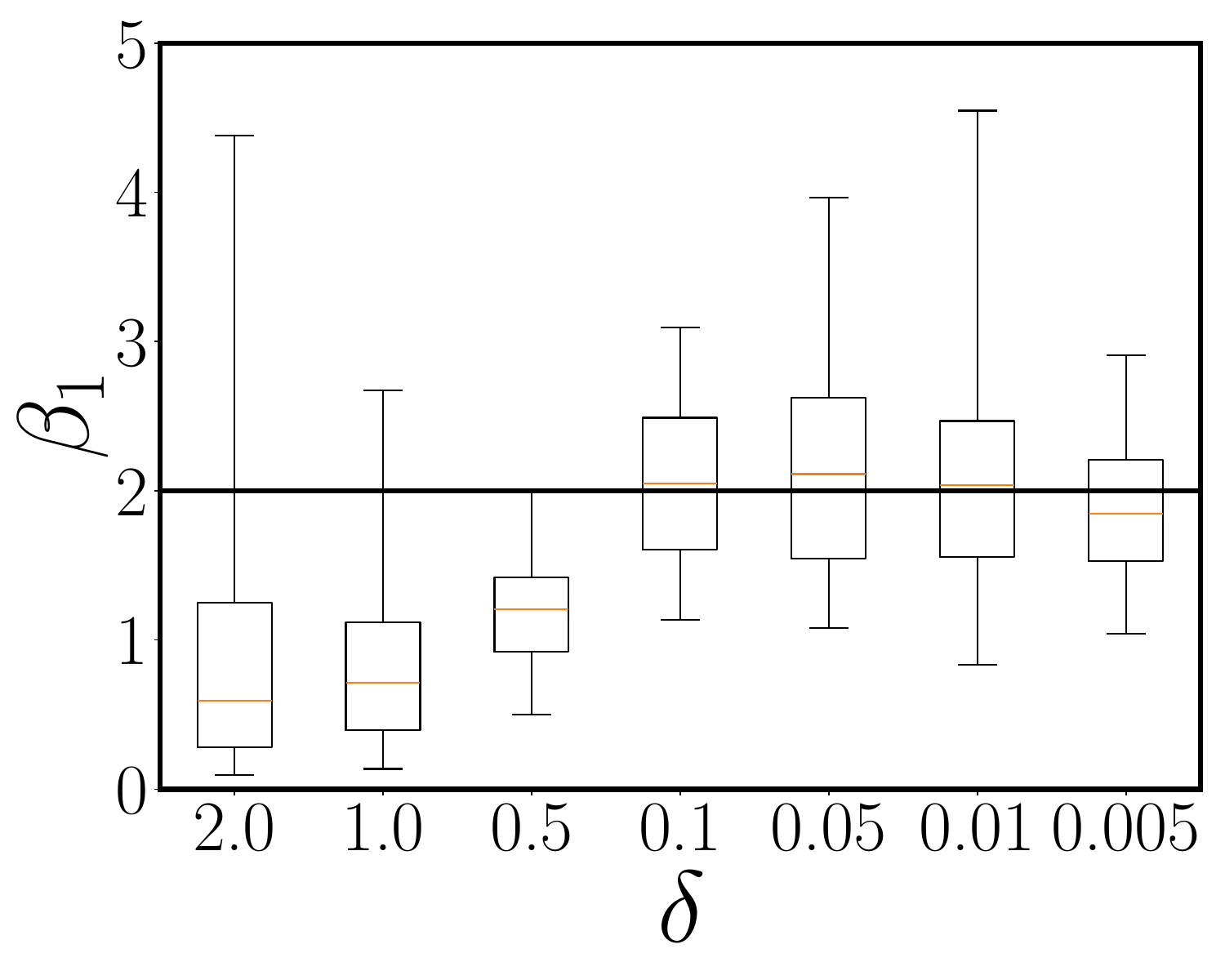}%
    }
    \subfloat[$\beta_2$]{%
        \includegraphics[width=0.25\linewidth]{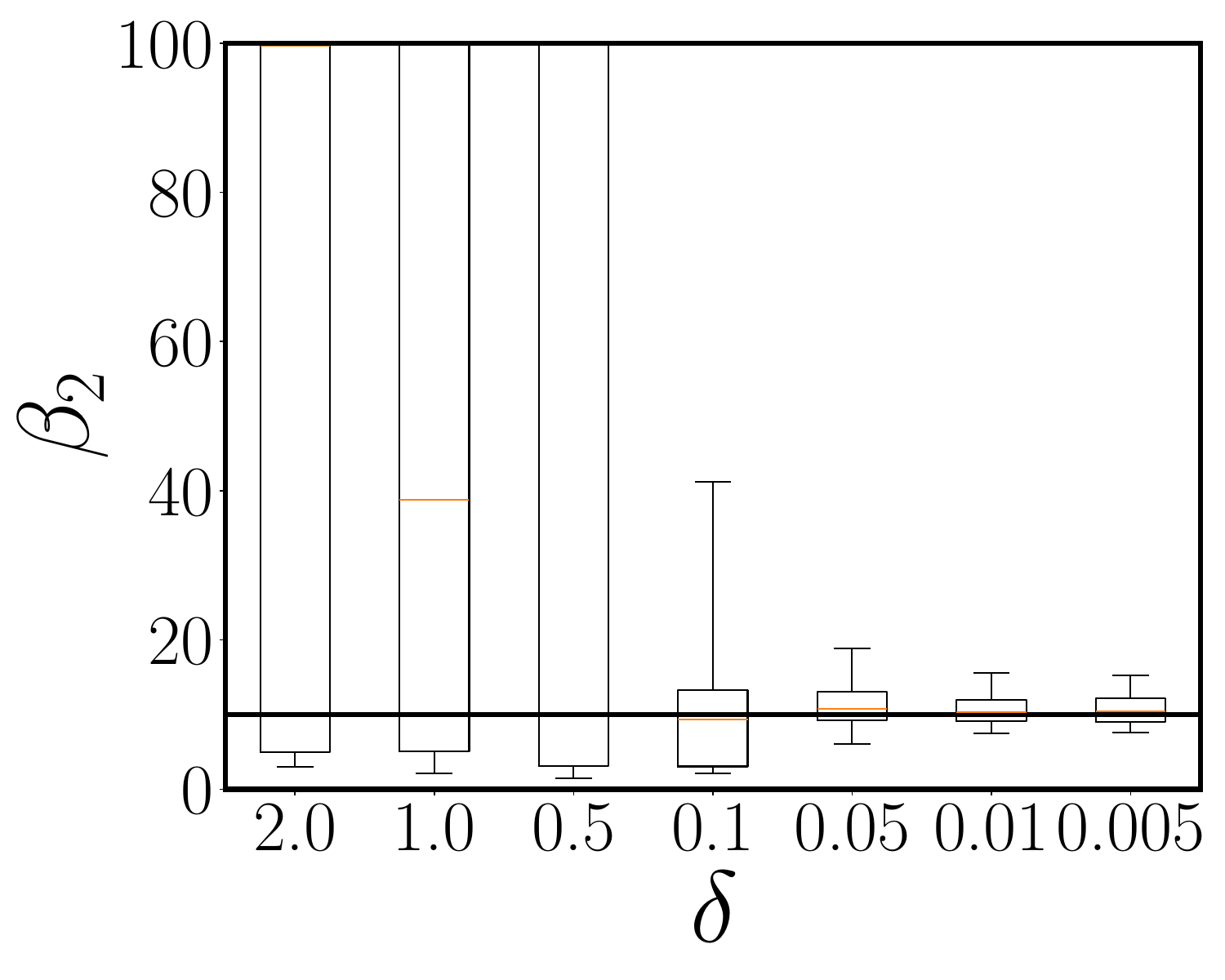}%
    }
    \centering
    \subfloat[$q_1$]{%
        \includegraphics[width=0.25\linewidth]{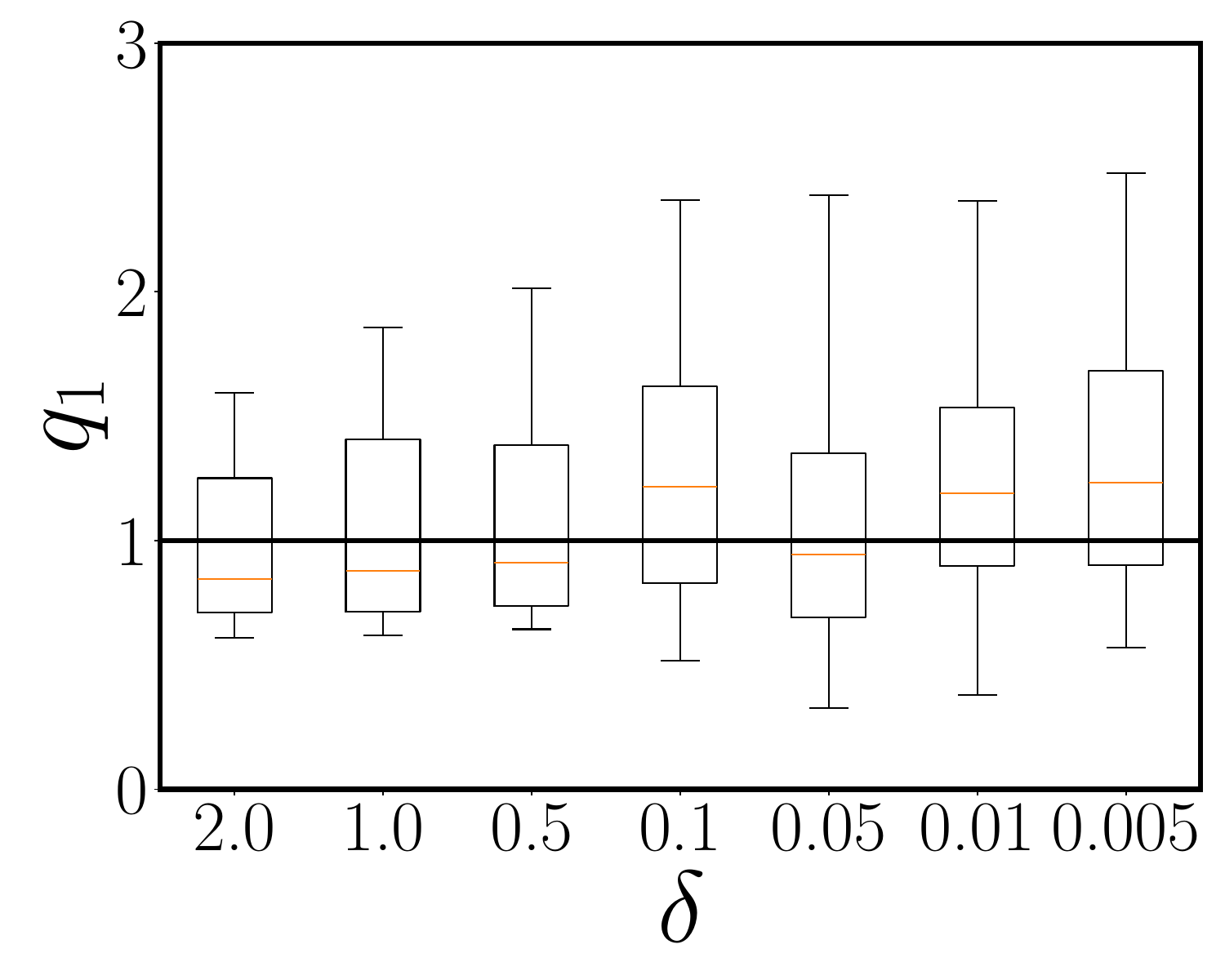}%
    }
    \subfloat[$q_2$]{%
        \includegraphics[width=0.25\linewidth]{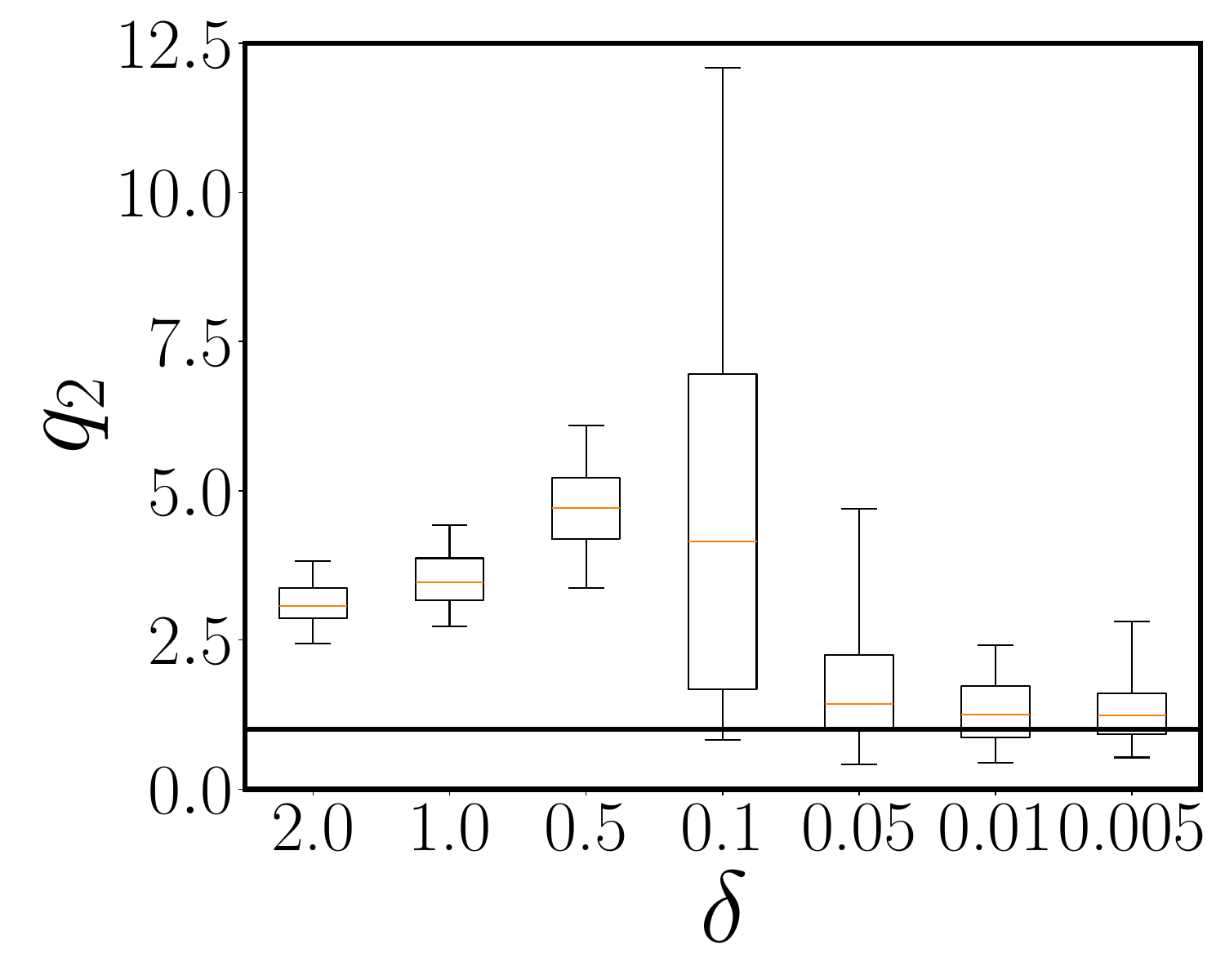}%
    }
    \caption{\textit{Convergence towards the continuous MMHP} --- Box plots of the parameters with respect to the parameter $\delta$. Whiskers represent 5\% and 95\% quantiles. The plain line represents the true parameter used for the simulation of the MMHP with continuous kernel.}
    \label{fig:box_plot_parameters_convergence_continuous_kernel}
\end{figure}

\begin{figure}
    \centering
    \subfloat[$\mu_1$]{%
        \includegraphics[width=0.25\linewidth]{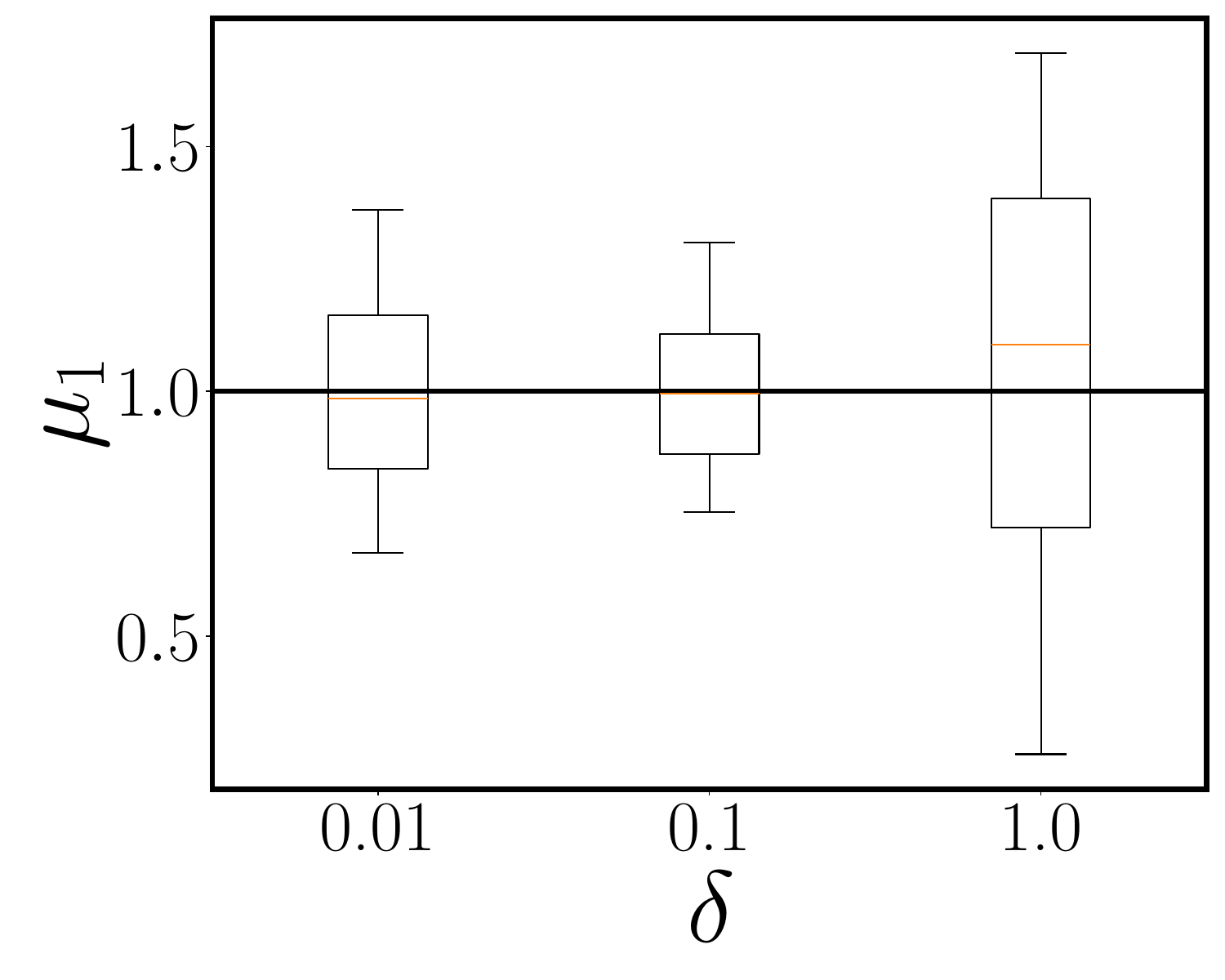}%
    }
    \subfloat[$\mu_2$]{%
        \includegraphics[width=0.25\linewidth]{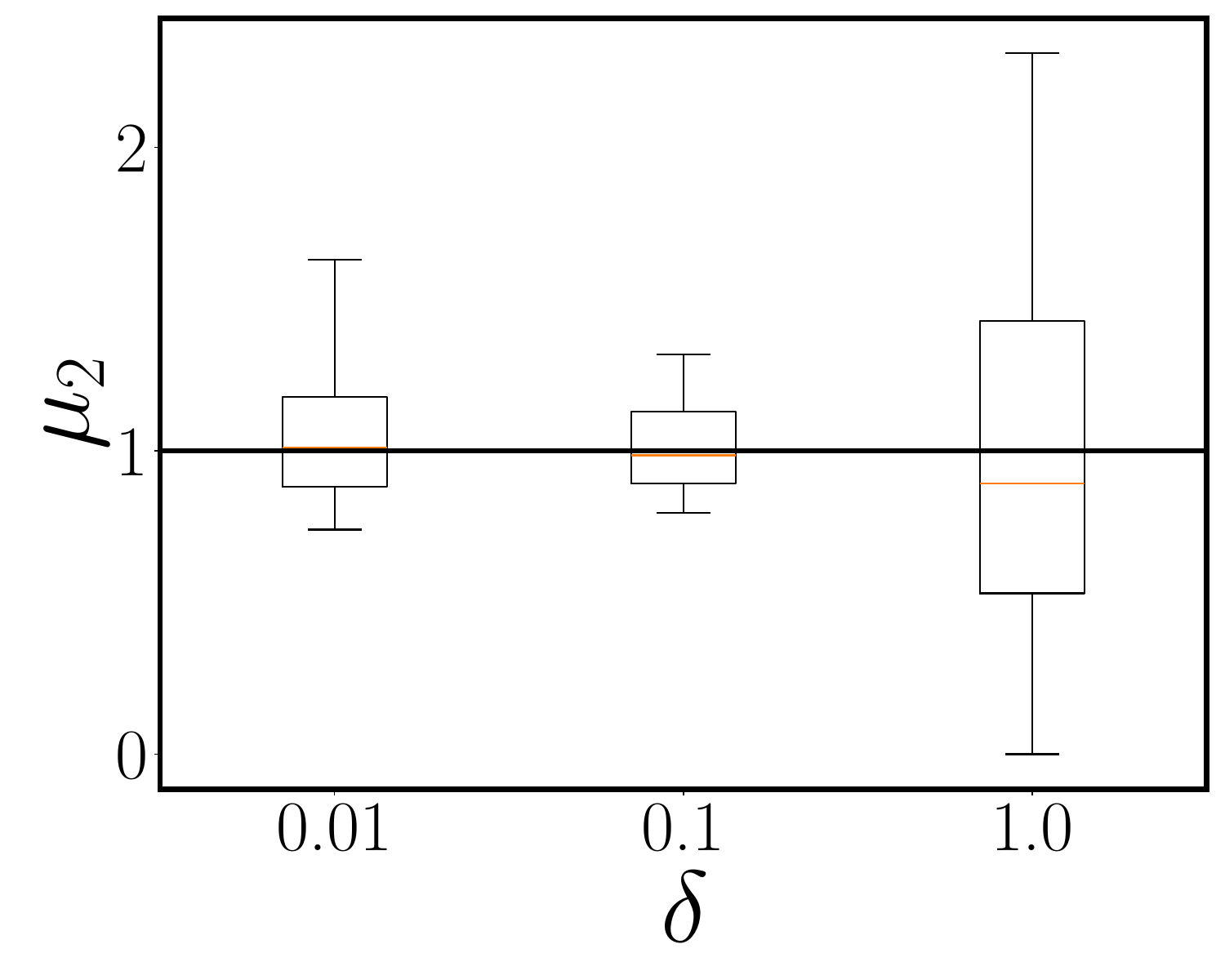}%
    }
    \subfloat[$\alpha_1$]{%
        \includegraphics[width=0.25\linewidth]{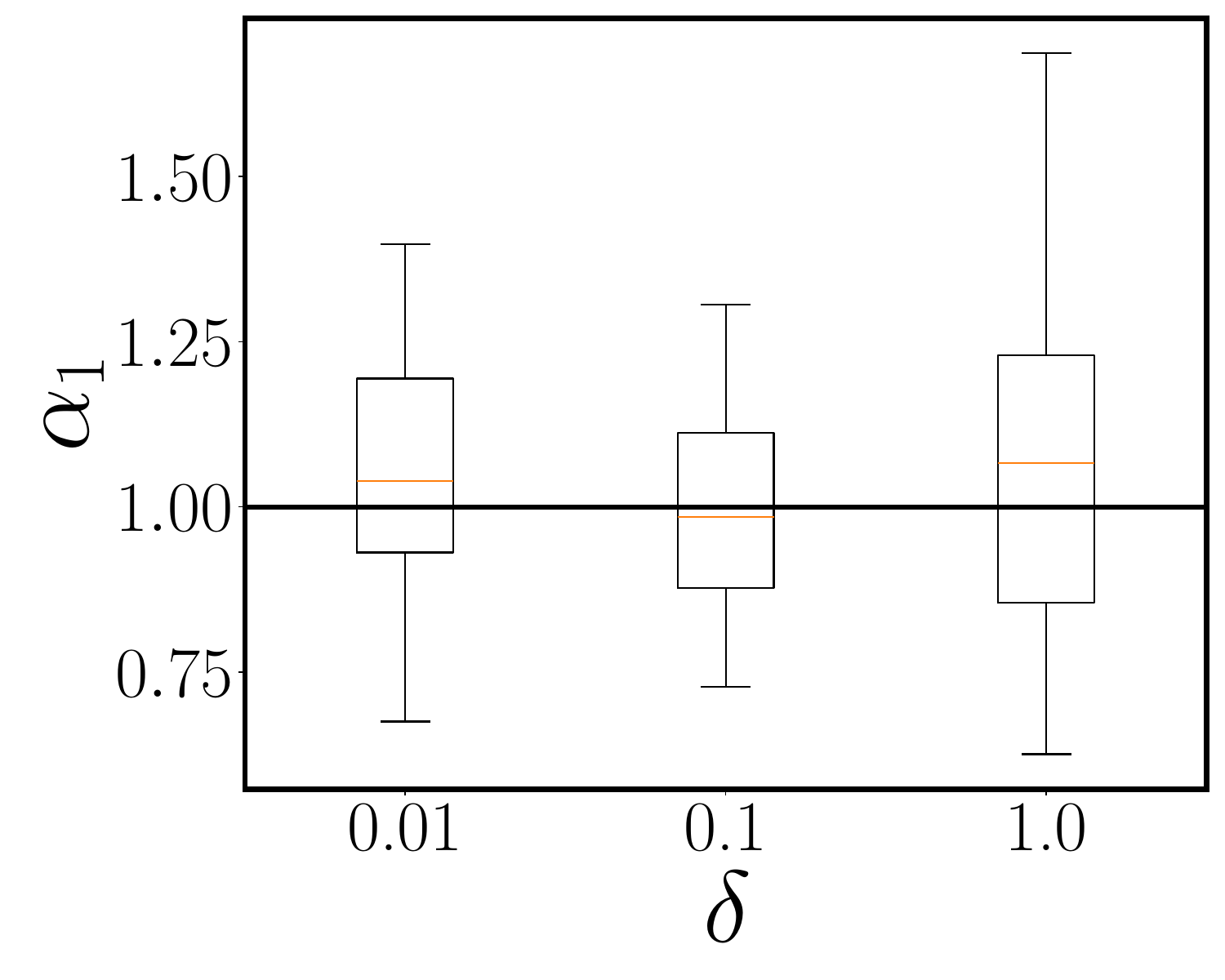}%
    }
    \subfloat[$\alpha_2$]{%
        \includegraphics[width=0.25\linewidth]{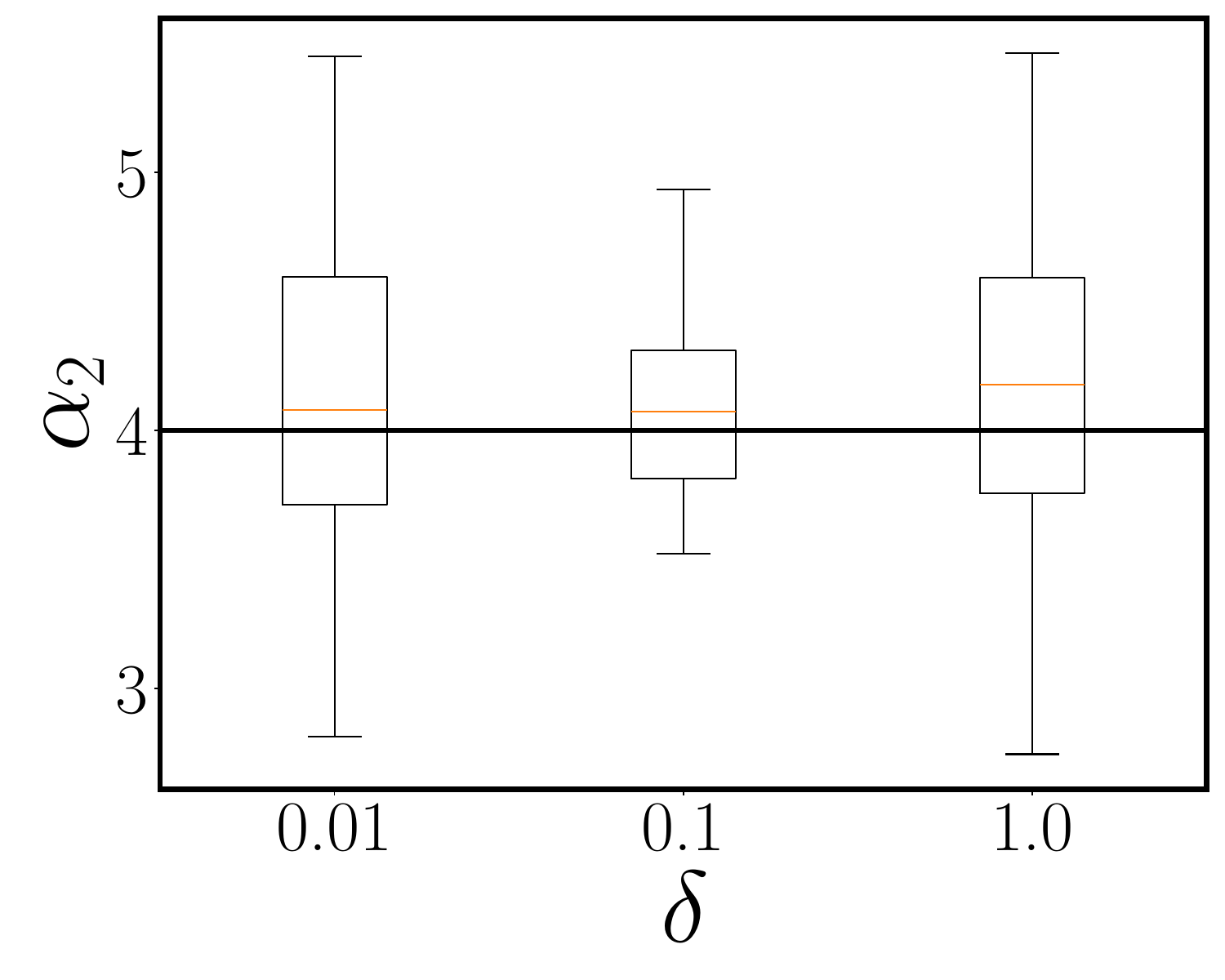}%
    }\\
    \subfloat[$\beta_1$]{%
        \includegraphics[width=0.25\linewidth]{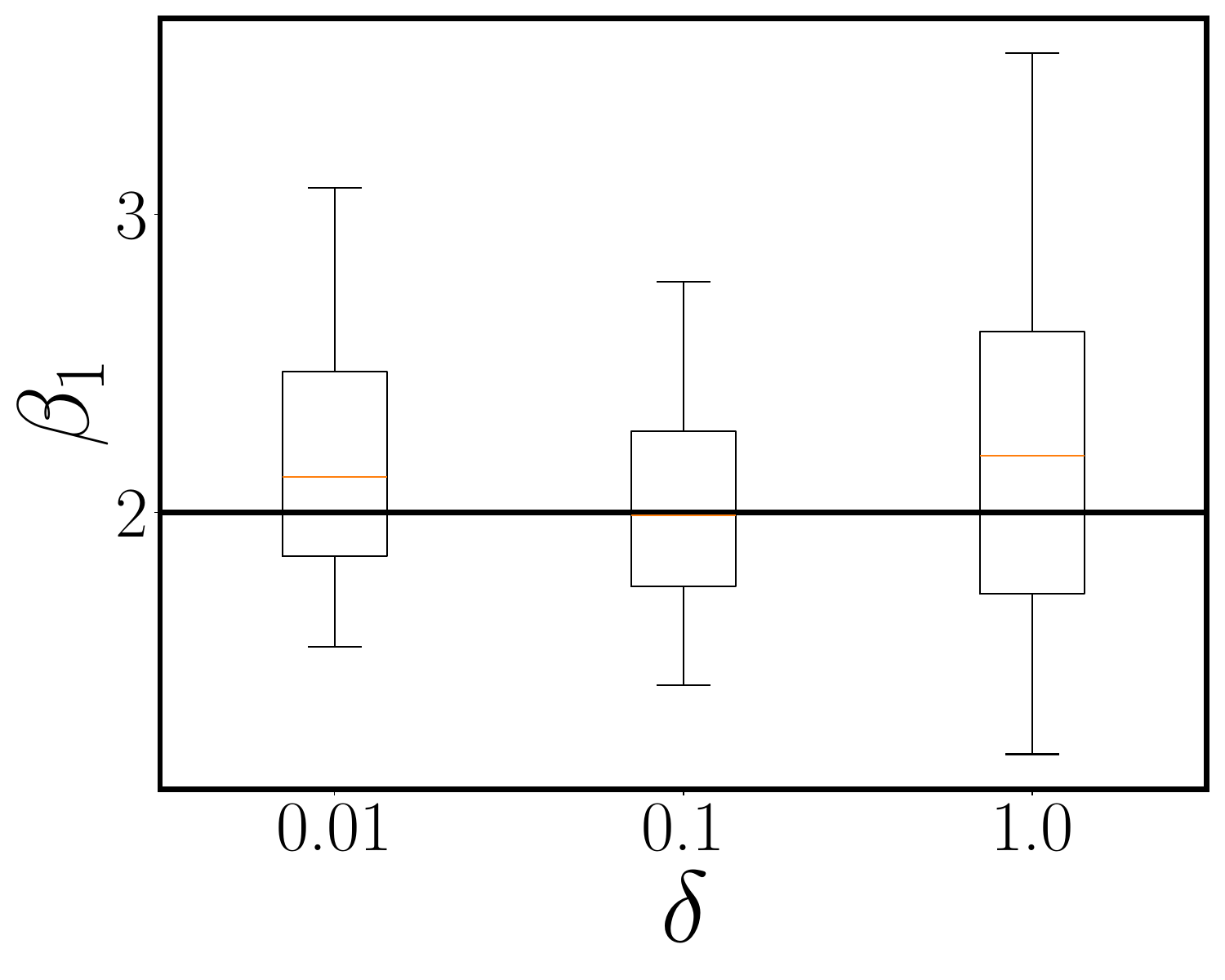}%
    }
    \subfloat[$\beta_2$]{%
        \includegraphics[width=0.25\linewidth]{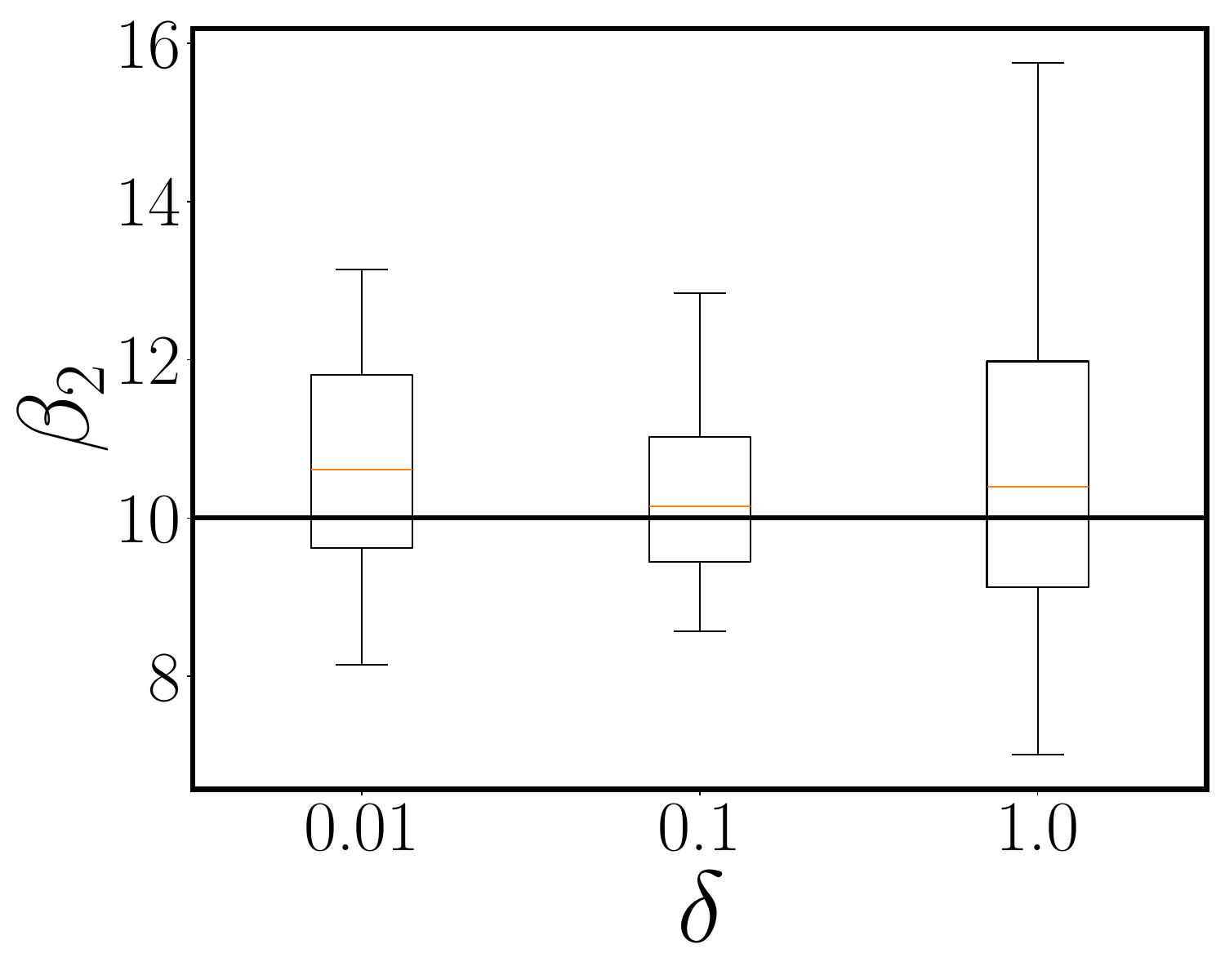}%
    }
    \centering
    \subfloat[$q_1$]{%
        \includegraphics[width=0.25\linewidth]{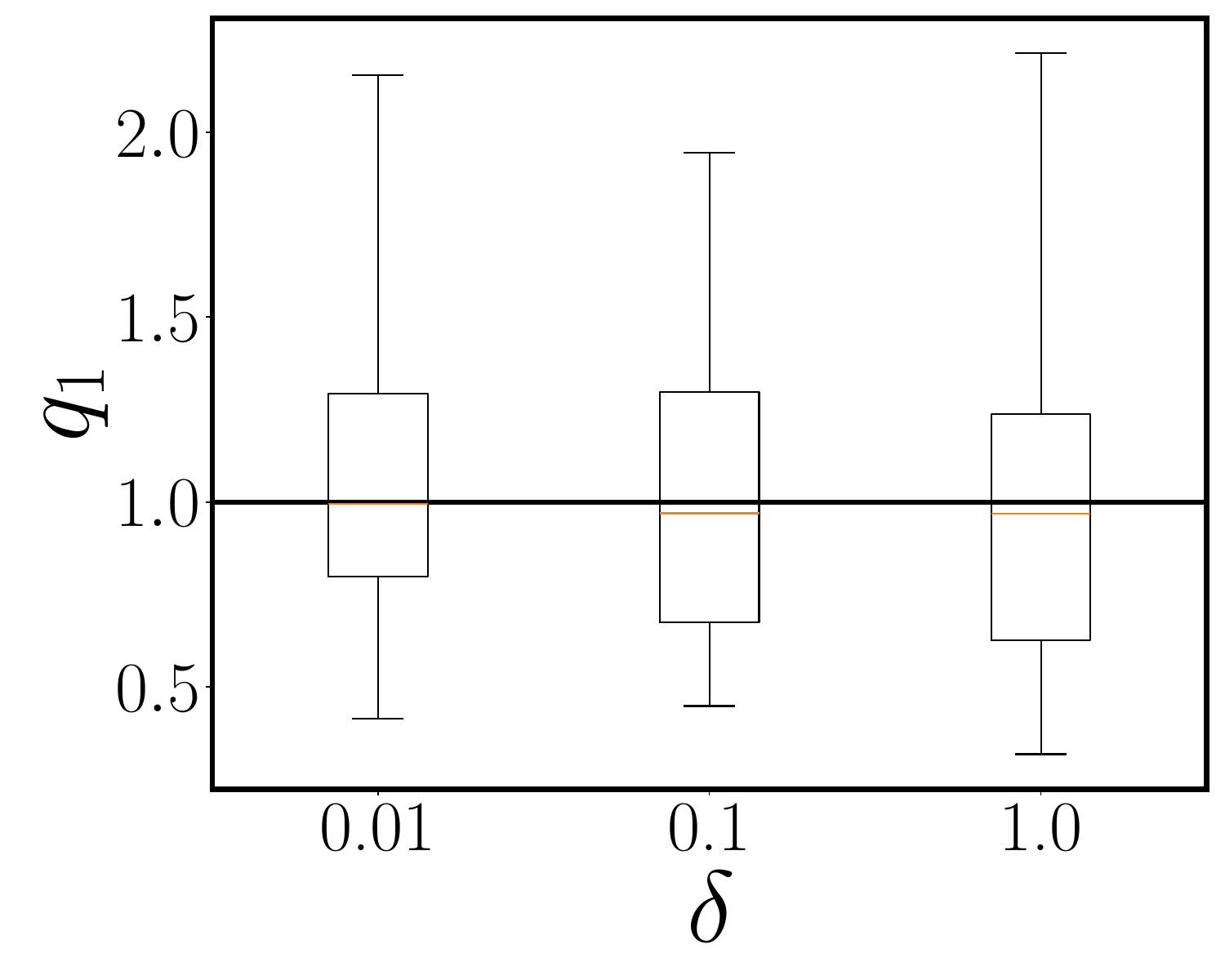}%
    }
    \subfloat[$q_2$]{%
        \includegraphics[width=0.25\linewidth]{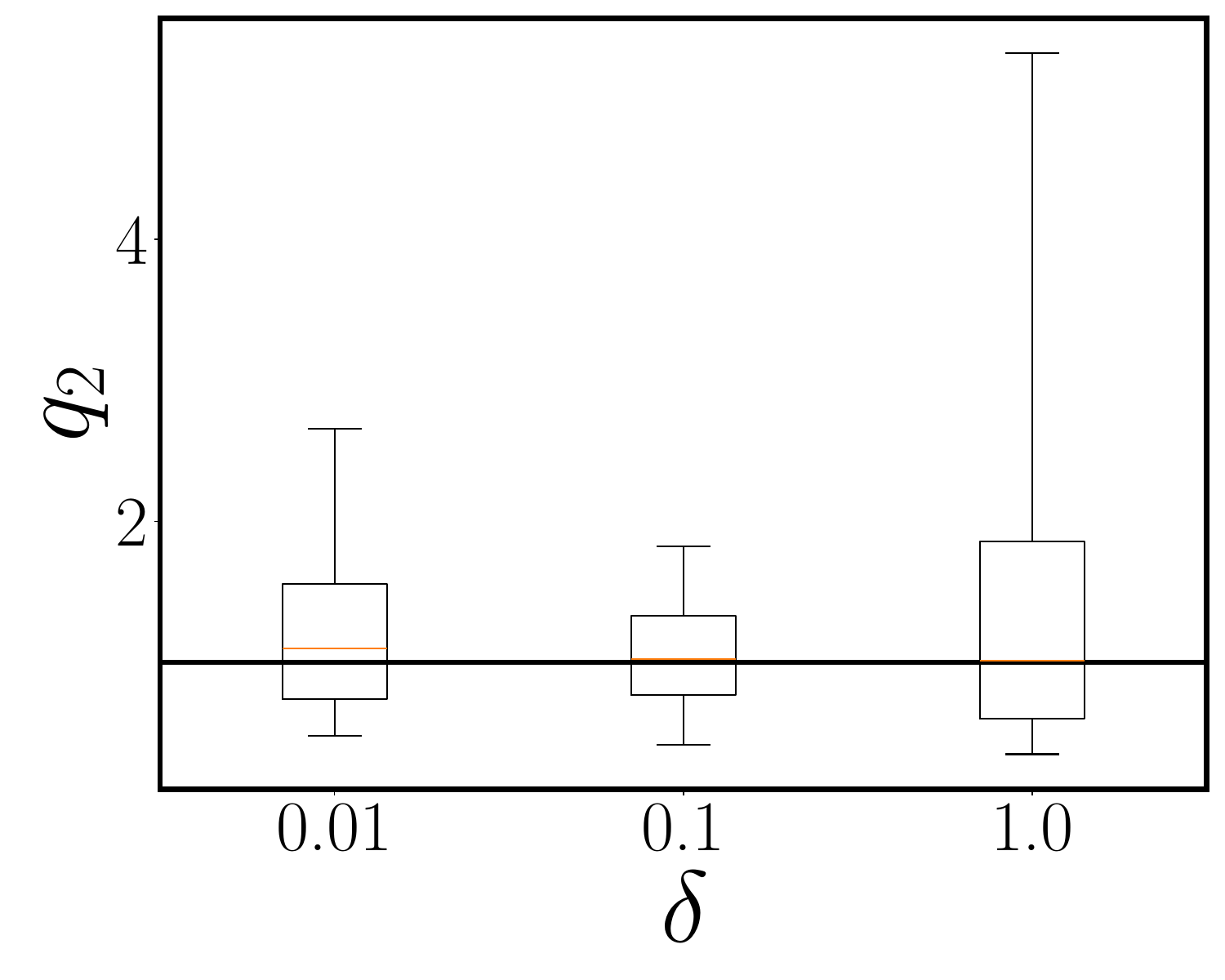}%
    }
    \caption{\textit{Convergence with respect to $\delta$} --- Box plots of the parameters with respect to the parameter $\delta$. Whiskers represent 5\% and 95\% quantiles. The plain line represents the true parameter.}
    \label{fig:box_plot_parameters_convergence_delta}
\end{figure}

\begin{figure}
    \centering
    \subfloat[BBOs' dynamics and trade prices]{%
        \includegraphics[width=0.5\linewidth]{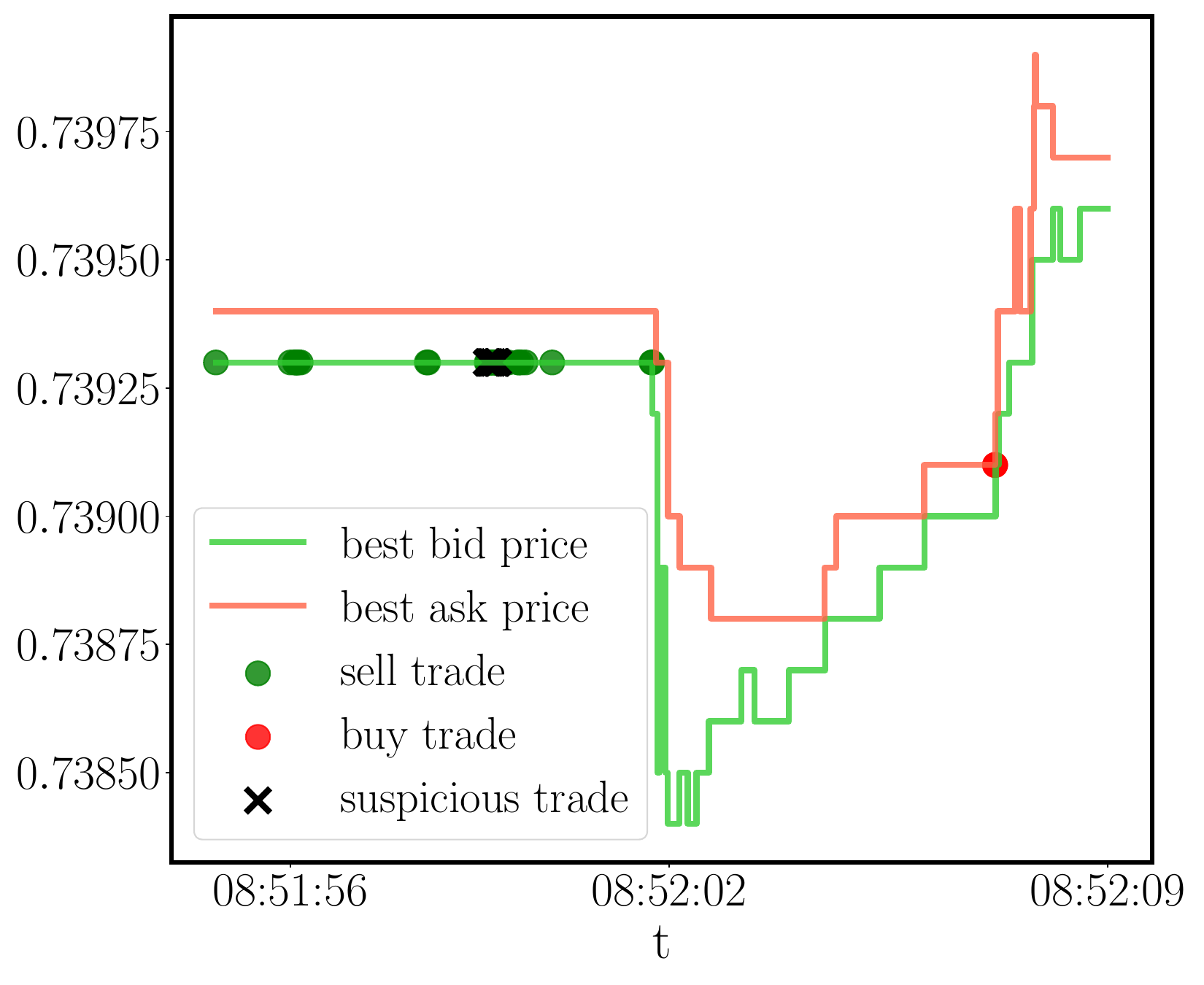}%
    }\hfill
    \subfloat[Trade sizes]{%
        \includegraphics[width=0.5\linewidth]{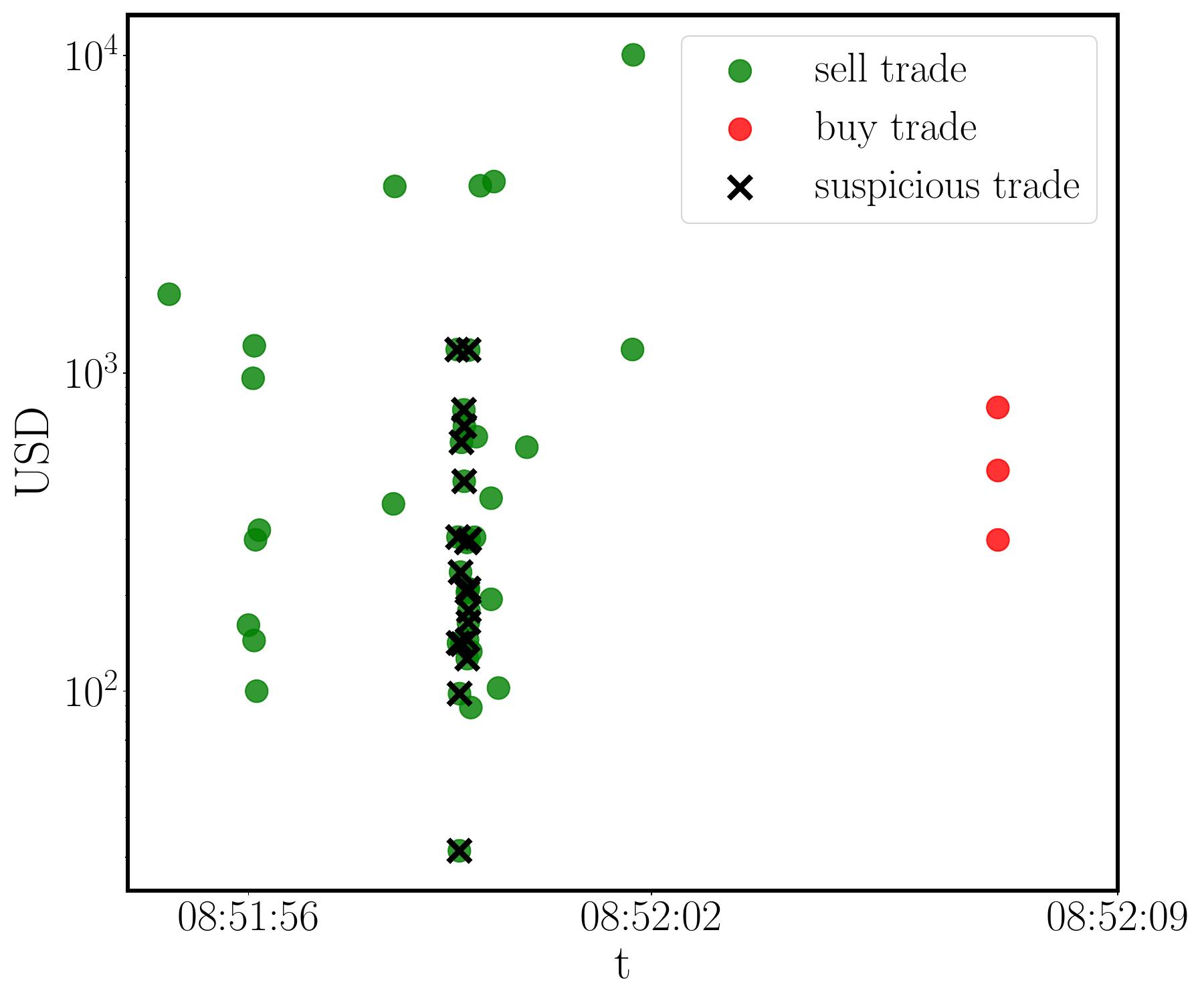}%
    }
    \caption{\textit{Suspicious trading activity} --- An example of ``suspicious'' behavior identified by the bid MMHP-$\delta$ as the extreme burst regime (state 3), anomalies detected on March 6th, 2024 UTC time zone.}
    \label{fig:bid_wash_trading_case_06032024}
\end{figure}

\begin{figure}
    \centering
    \subfloat[BBOs' dynamics and trade prices]{%
        \includegraphics[width=0.5\linewidth]{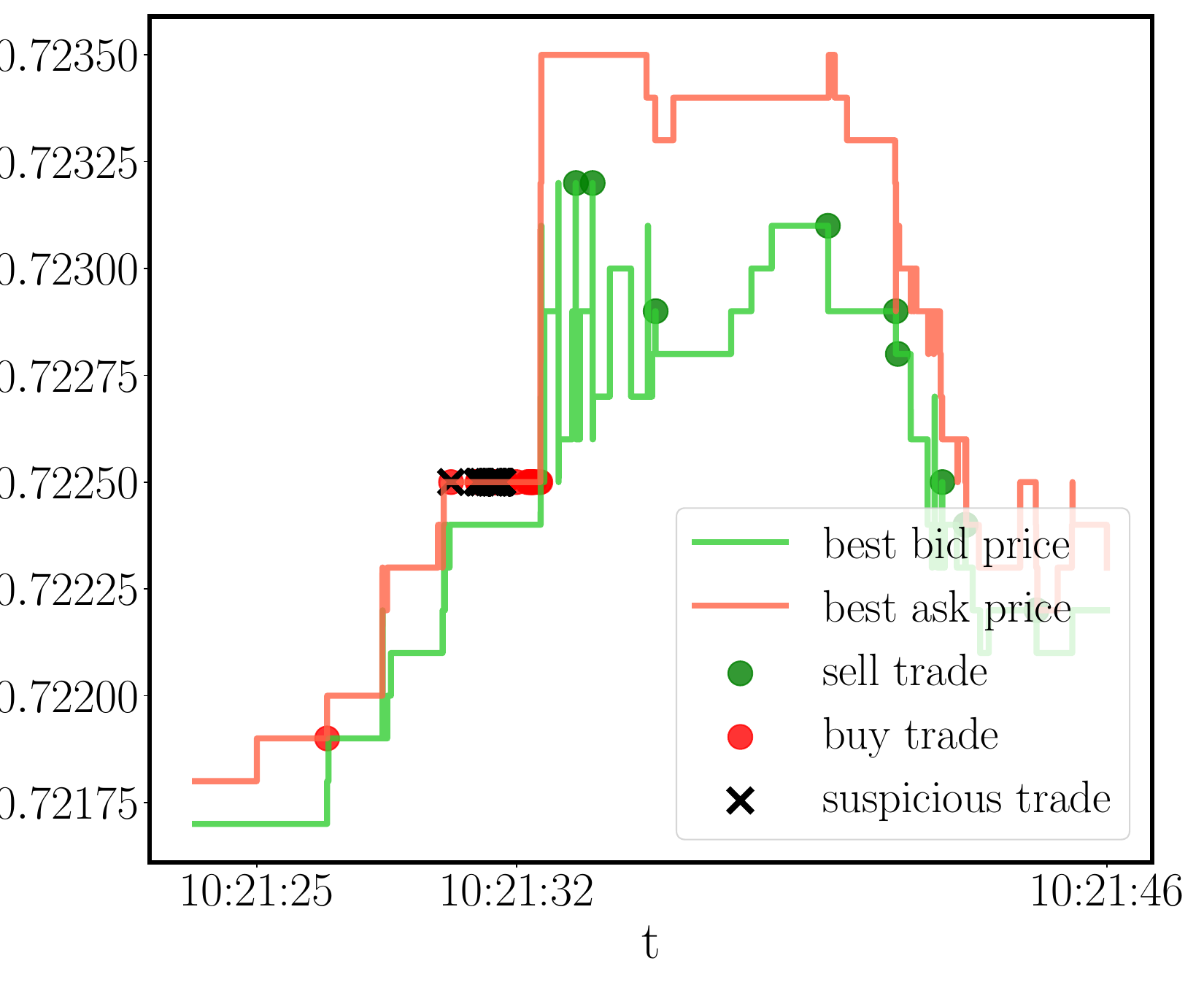}%
    }\hfill
    \subfloat[Trade sizes]{%
        \includegraphics[width=0.5\linewidth]{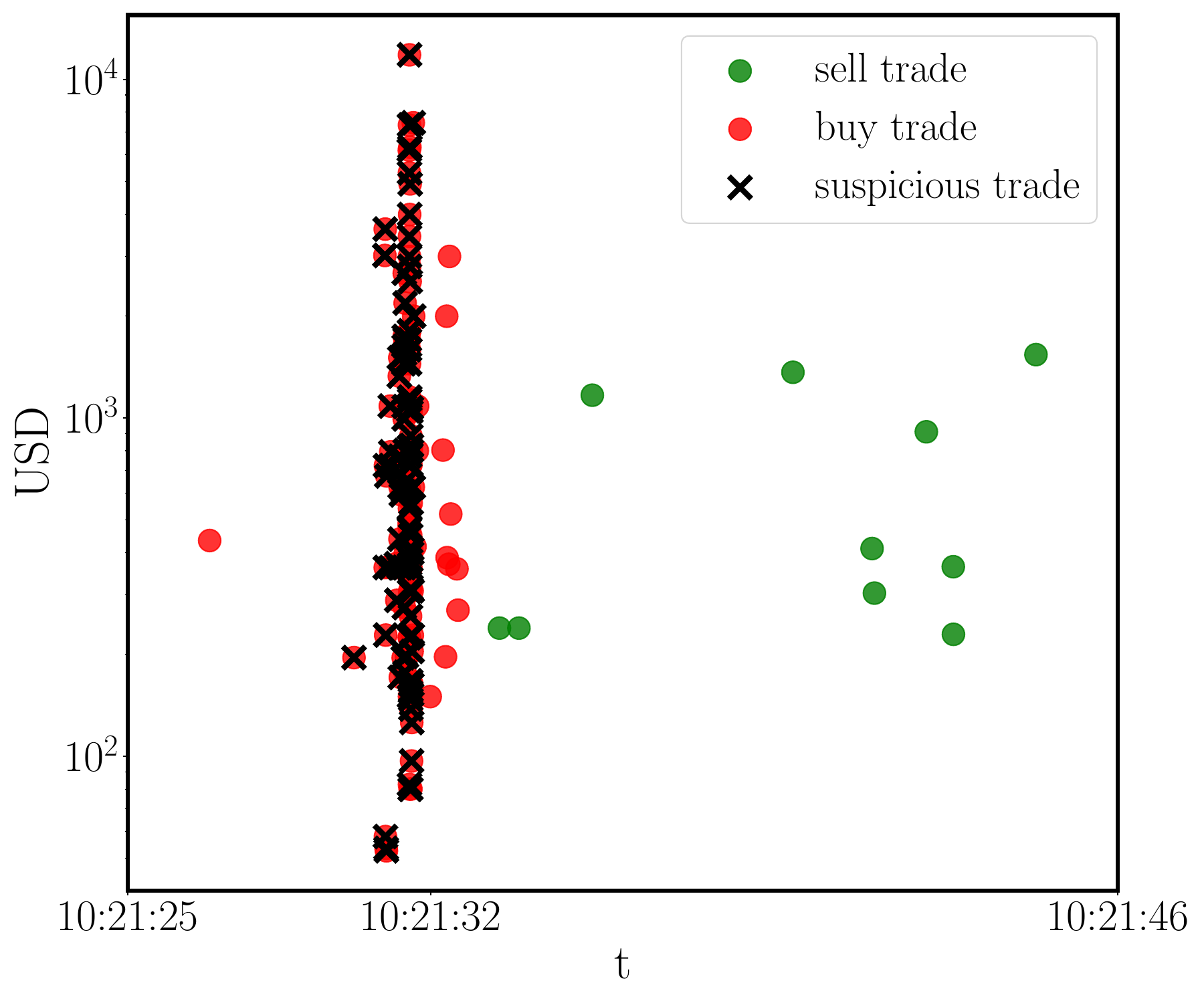}%
    }
    \caption{\textit{Suspicious trading activity} --- An example of ``suspicious'' behavior identified by the ask MMHP-$\delta$ as the extreme burst regime (state 3), anomalies detected on Janurary 15th, 2024 UTC time zone.}
    \label{fig:ask_wash_trading_case_15012024}
\end{figure}

\begin{figure}
    \centering
    \subfloat[BBOs' dynamics and trade prices]{%
        \includegraphics[width=0.5\linewidth]{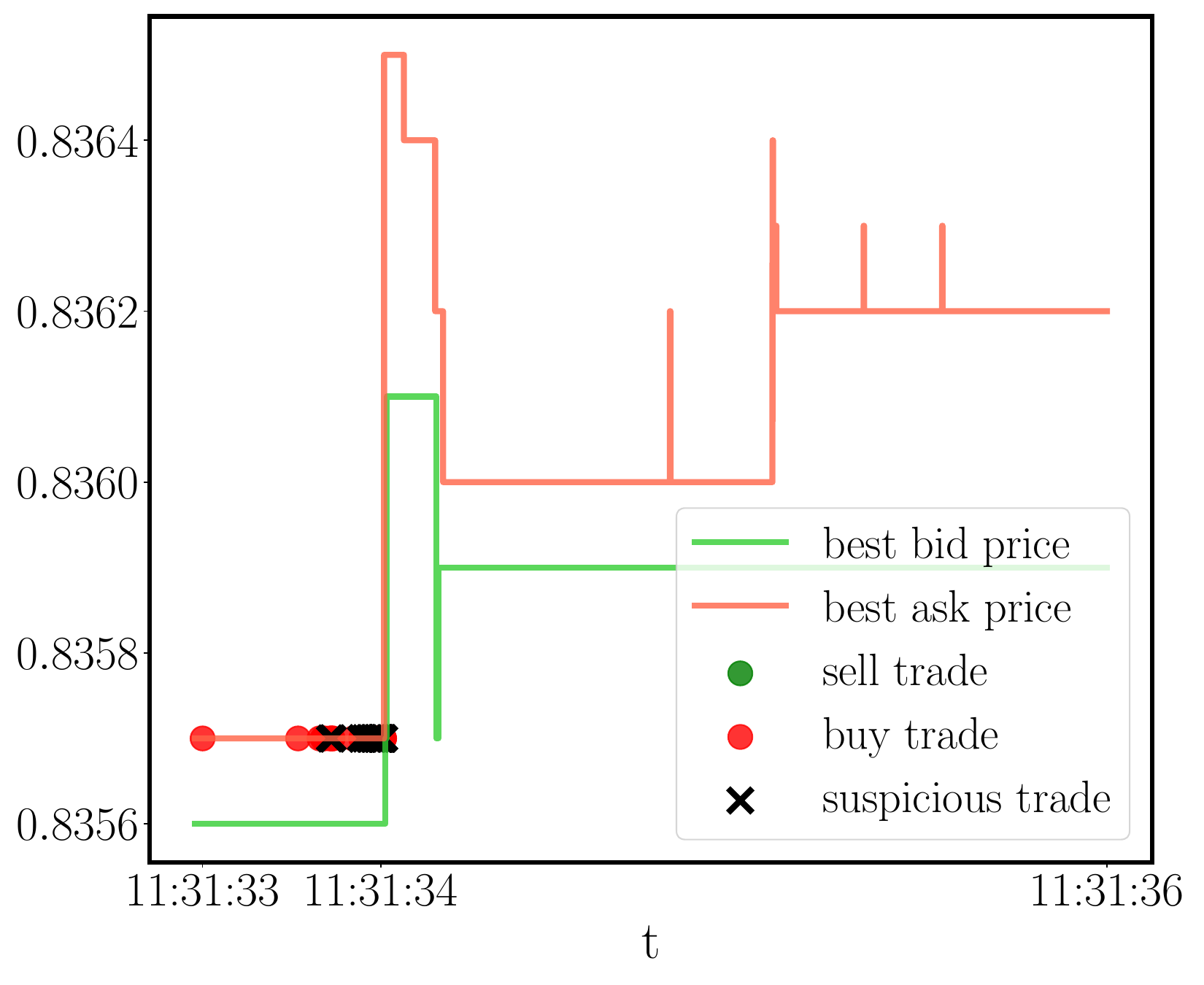}%
    }\hfill
    \subfloat[Trade sizes]{%
        \includegraphics[width=0.5\linewidth]{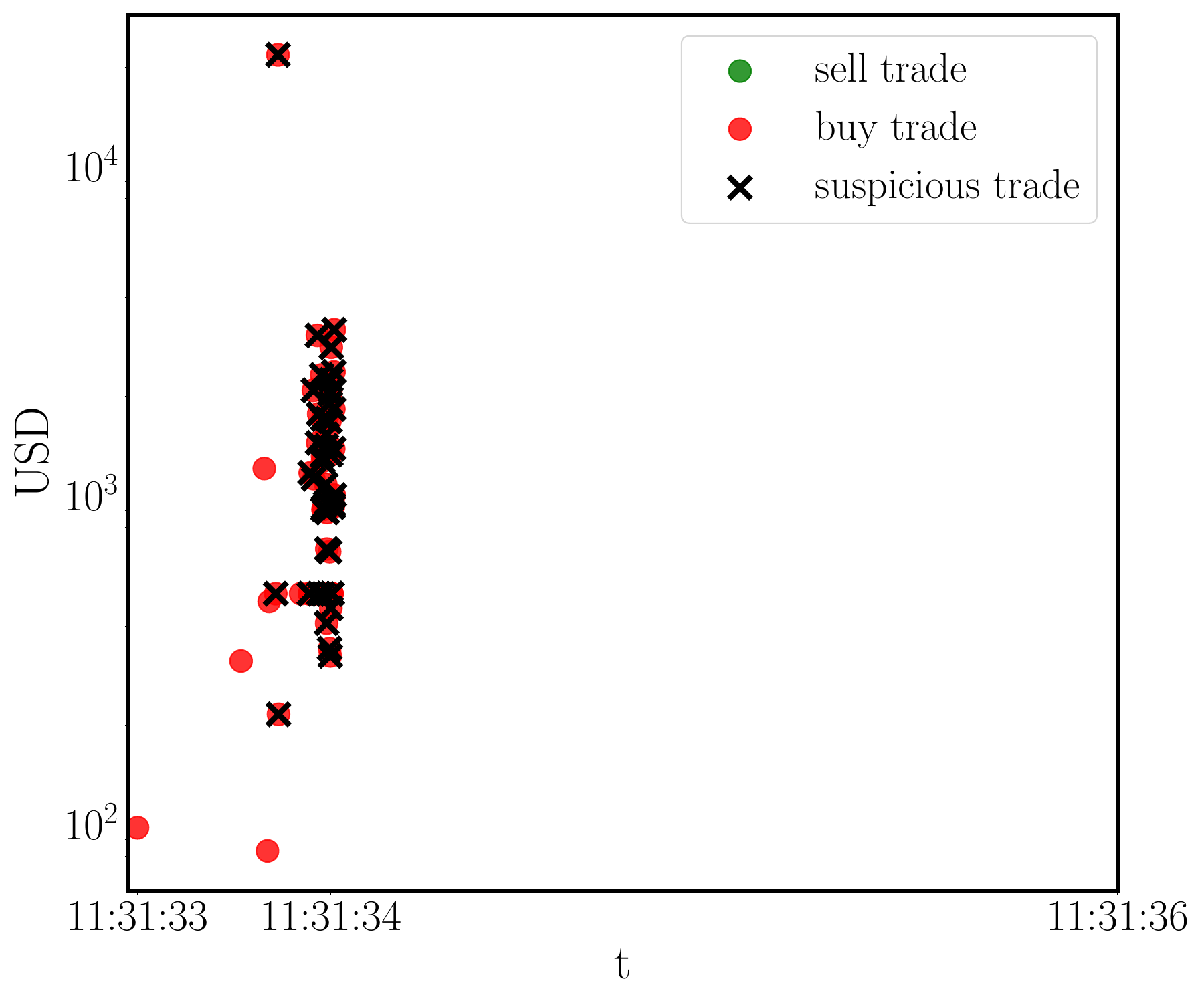}%
    }
    \caption{\textit{Suspicious trading activity} --- An example of ``suspicious'' behavior identified by the ask MMHP-$\delta$ as the extreme burst regime (state 3), anomalies detected on March 3th, 2024 UTC time zone.}
    \label{fig:ask_wash_trading_case_03032024}
\end{figure}

\begin{figure}
    \centering
    \subfloat[Imbalance distribution before state transition, bid]{%
        \includegraphics[width=0.33\linewidth]{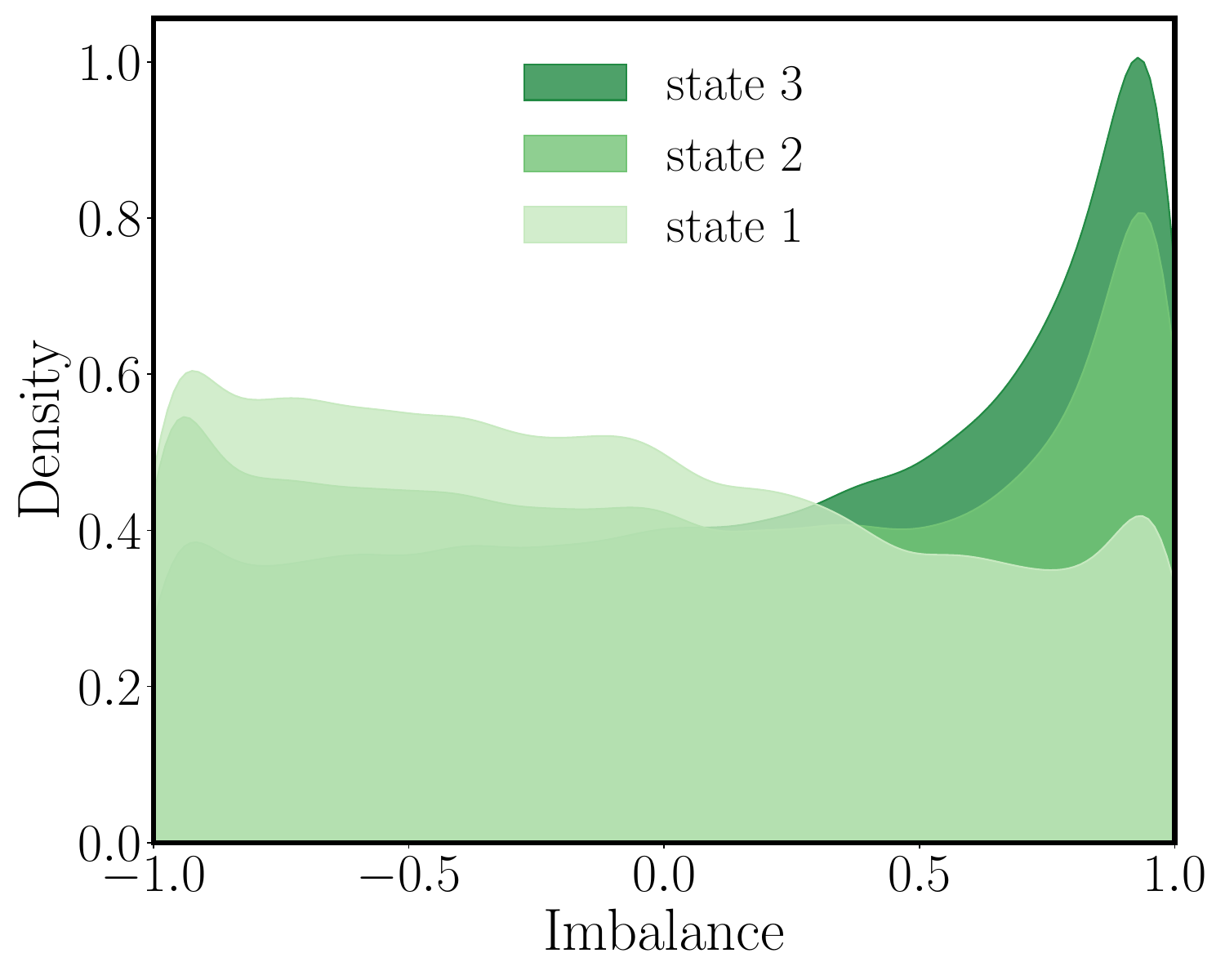}%
    }~~~
    \subfloat[Imbalance distribution before state transition, ask]{%
        \includegraphics[width=0.33\linewidth]{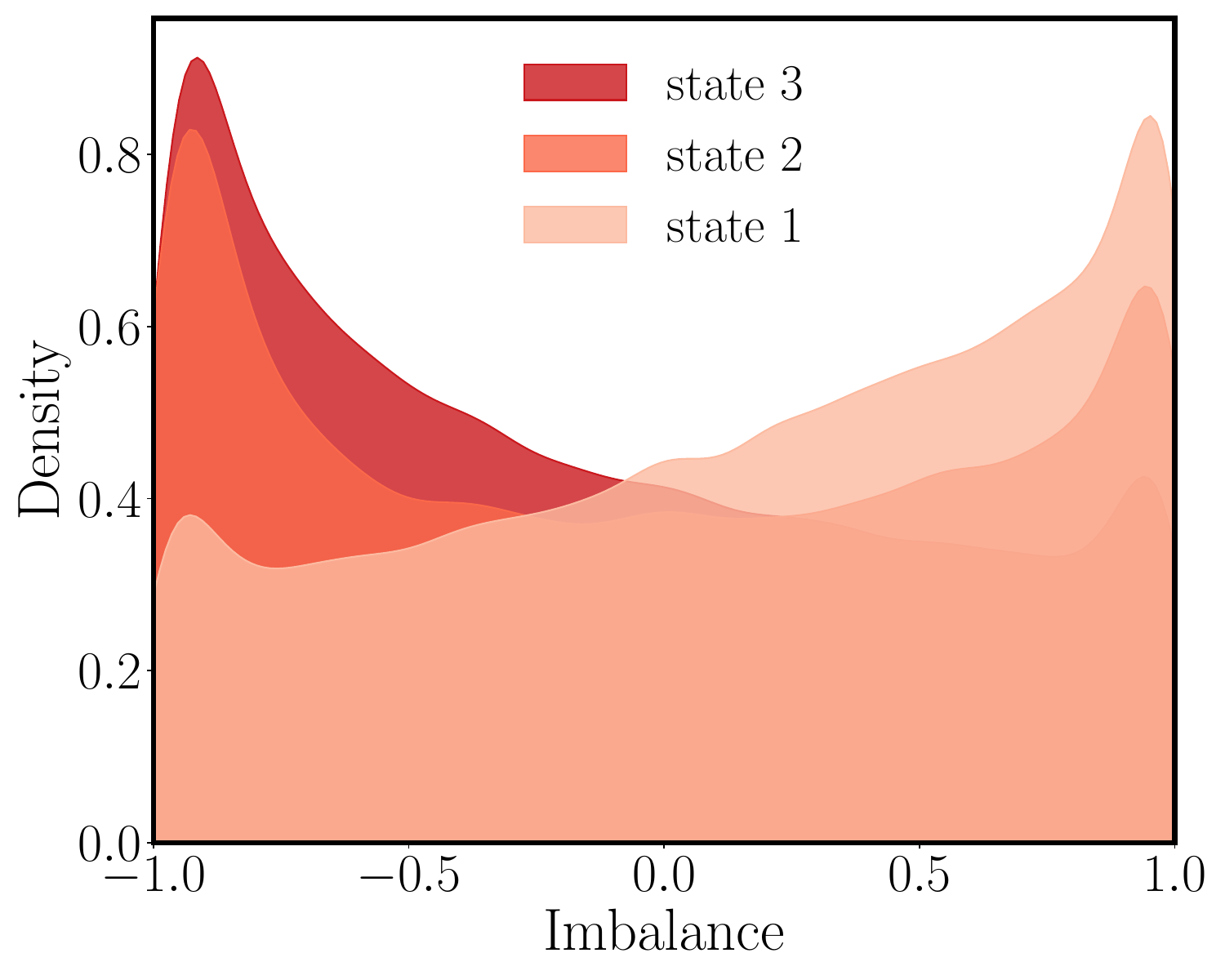}%
    }~~~
    \subfloat[Mid price response function after state transition with 95\% confidence intervals]{%
        \includegraphics[width=0.33\linewidth]{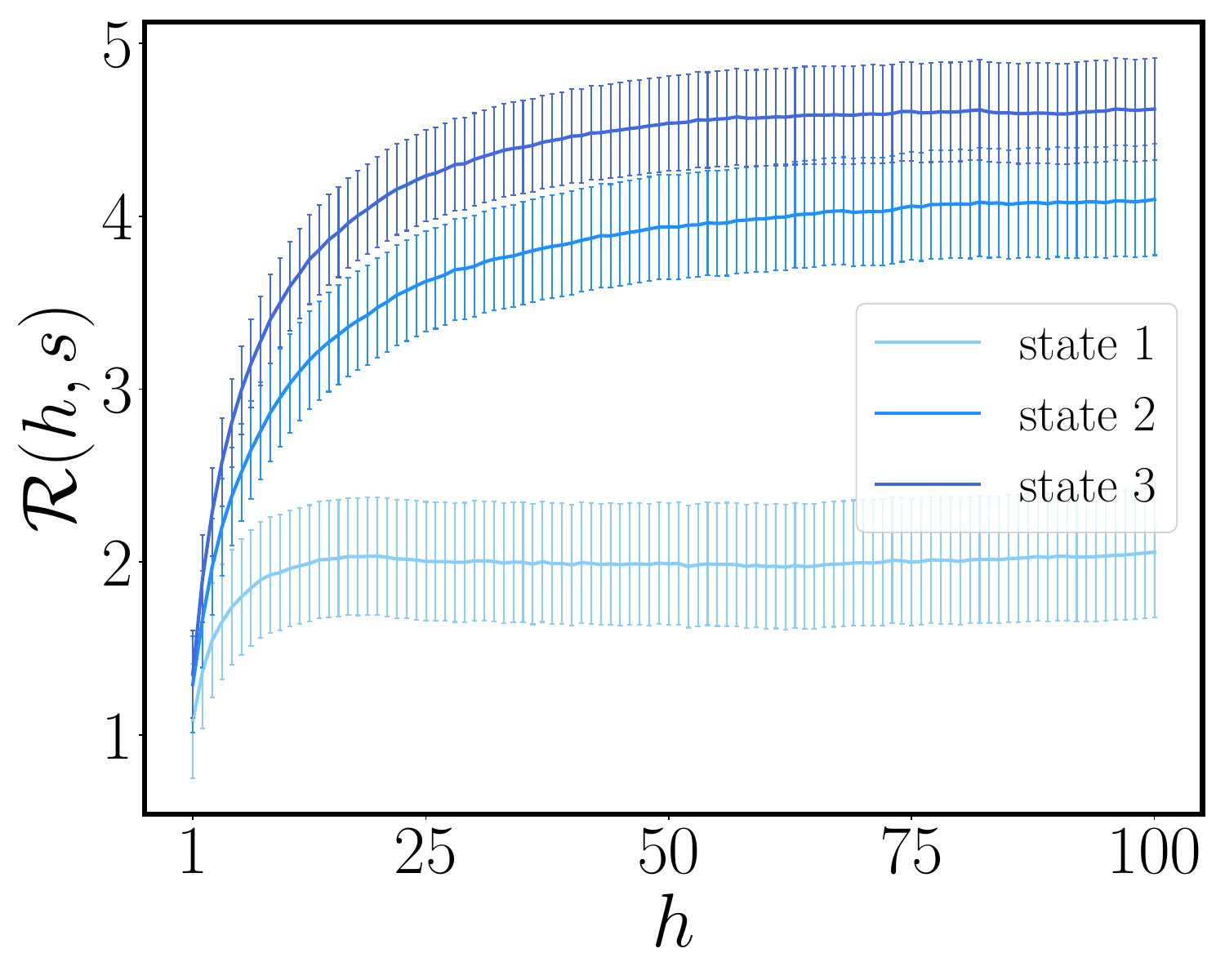}%
    }
    \caption{\textit{Suspicious trading activity} --- Market characterization of the trading activity regimes identified by the MMHP-$\delta$ model, when the model is trained on December 1st, 2023, without taking into account the intraday seasonality.}
    \label{fig:state_characterization_mmhp_wash_full_period}
\end{figure}

\end{document}